\newcommand\techrep
\newcommand{\myhref}[1]{\href{#1}{\url{#1}}}
\DeclareRobustCommand*\cal{\@fontswitch\relax\mathcal}
\newcommand{\perm}{\mathsf{perm}}
\newcommand{\oirel}{\mathsf{oi\_rel}}
\newcommand{\set}{\mathsf{set}}
\newcommand{\loopfree}{\mathsf{loop\_free}}
\newcommand{\linter}{\otimes}
\newcommand{\lunion}{\oplus}
\newcommand{\delete}[1]{}
  \theoremstyle{plain}
  \newtheorem{theorem}{\protect\theoremname}
  \theoremstyle{definition}
  \newtheorem{definition}{\protect\definitionname}
  \theoremstyle{plain}
  \newtheorem{lemma}{\protect\lemmaname}
  \theoremstyle{definition}
  \providecommand{\definitionname}{Definition}
  \providecommand{\examplename}{Example}
  \providecommand{\lemmaname}{Lemma}
  \providecommand{\theoremname}{Theorem}
  \providecommand{\definitionname}{Definition}
  \providecommand{\examplename}{Example}
  \providecommand{\lemmaname}{Lemma}
  \providecommand{\theoremname}{Theorem}
\newenvironment{lyxlist}[1]
{\begin{list}{}
{\settowidth{\labelwidth}{#1}
 \setlength{\leftmargin}{\labelwidth}
 \addtolength{\leftmargin}{\labelsep}
 }}
{\end{list}}
\global\long\def\bool{\mathsf{bool}}
\global\long\def\add{\mathsf{Add}}
\global\long\def\split{\mathsf{Split}}
\global\long\def\id{\mathsf{Id}}
\global\long\def\fb{\mathsf{feedback}}
\global\long\def\In{\mathsf{I}}
\global\long\def\Out{\mathsf{O}}
\global\long\def\Var{\mathsf{V}}
\global\long\def\cvar{\cdot}
\global\long\def\parallel{\mathbin{\|}}
\global\long\def\FB{\mathsf{FB}}
\global\long\def\comp{\;;\,}
\global\long\def\types{\mathsf{Types}}
\global\long\def\dgr{\mathsf{Dgr}}
\global\long\def\dup{\mathsf{Split}}
\global\long\def\sw{\mathsf{Switch}}
\global\long\def\sink{\mathsf{Sink}}
\global\long\def\circlearrow{\stackrel{\circ}{\longrightarrow}}
\global\long\def\partial{\stackrel{c}{\longrightarrow}}
\global\long\def\var{\mathsf{Var}}
\global\long\def\tv{\mathsf{T}}
\global\long\def\arb{\mathsf{Arb}}
\global\long\def\dg{\mathsf{D}}
\global\long\def\Comp{\;;;\,}
\global\long\def\ioeq{\sim}
\global\long\def\Parallel{\,|||\,}
\global\long\def\iodist{\mathsf{io\!-\!distinct}}
\global\long\def\delay{\mathsf{Delay}}
\global\long\def\add{\mathsf{Add}}
\begin{document}

\title{Mechanically Proving Determinacy of Hierarchical Block Diagram Translations}

\ifdefined\techrep
\author{Viorel Preoteasa$^1$ ~~~~ Iulia Dragomir$^2$  ~~~~ Stavros Tripakis$^{1,3}$\\[1ex]
\small$^1$Aalto University, Finland ~~~~  
$^2$Verimag, France ~~~~  
$^3$University of California, Berkeley, USA}
\else
\author{Viorel Preoteasa\inst{1} \and Iulia Dragomir\inst{2} \and Stavros Tripakis\inst{1,3}}
\institute{Aalto University, Finland  \and Verimag, France \and University of California, Berkeley, USA}
\fi

\maketitle

\begin{abstract}
Hierarchical block diagrams (HBDs) are at the heart of embedded system design
tools, including Simulink. Numerous translations exist from HBDs into languages
with formal semantics, amenable to formal verification. However, none of
these translations has been proven correct, to our knowledge.

We present in this paper the first mechanically proven 
HBD translation algorithm.
The algorithm translates HBDs into an algebra of terms
with three basic composition operations (serial, parallel, and feedback).
In order to capture various translation strategies resulting in different
terms achieving different tradeoffs, the algorithm is 
nondeterministic. Despite this, we prove its {\em semantic determinacy}: 
for every input HBD, all possible terms that can be
generated by the algorithm are semantically equivalent.
We apply this result to show how three Simulink translation strategies introduced 
previously can be formalized as determinizations of the algorithm, and derive
that these strategies yield semantically equivalent results
(a question left open in previous work).
All results are formalized and proved in the Isabelle theorem-prover.
\end{abstract}

\section{Introduction}
\label{sec:intro}

Dozens of tools, including Simulink\footnote{\url{http://www.mathworks.com/products/simulink/}}, the most widespread embedded system design environment,
are based on {\em hierarchical block diagrams} (HBDs).
Being a graphical notation (and in the case of Simulink a ``closed'' one
in the sense that the tool is not open-source),
such diagrams need to be translated into other formalisms 
more amenable to formal analysis. Several such translations exist,
e.g., see~\cite{AgrawalSK04,SfyrlaTSBS10, MeenakshiBR06, TripakisSCC05, ReicherdtG2014, ChenDS09, YangV12, ZhouK12, ZouZWFQ13, MinopoliF16} and related
discussion in~\S\ref{sec:rwork}.
To our knowledge, none of these translations has been formally verified.
This paper aims to remedy this fact.

This work is part of a larger project, the Refinement Calculus of Reactive
Systems (RCRS) --
see~\cite{RCRS_Toolset_TACAS_2018,RCRS_PDTarxiv2017,preoteasa:tripakis:2014,TripakisTOPLAS2011} and
\myhref{http://rcrs.cs.aalto.fi/}.
RCRS is a compositional framework for modeling and reasoning about reactive
systems. It allows to capture systems which can be both non-deterministic
and non-input-receptive, and offers compositional refinement and other
features for modular specification and verification.
 RCRS comes with a toolset~\cite{RCRS_Toolset_TACAS_2018} which includes a full implementation
of the RCRS theory and related analysis tools on top of the Isabelle theorem prover~\cite{nipkow-paulson-wenzel-02}, and
a Translator of Simulink diagrams to RCRS theories.

The Translator, first described in~\cite{DBLP:conf/spin/DragomirPT16},
implements three translation strategies from HBDs to an algebra
of components with three basic composition operators: serial, parallel,
and feedback. The several translation strategies are motivated by the fact that
each strategy has its own pros and cons. For instance, one strategy
may result in shorter and/or easier to understand algebra terms, while
another strategy may result in terms that are easier to simplify by manipulating
formulas in a theorem prover.
But a fundamental question is left open in~\cite{DBLP:conf/spin/DragomirPT16}: are these translation strategies {\em semantically equivalent}, meaning,
do they produce semantically equivalent terms? This is the question we study and answer (positively) in this paper.

The question is non-trivial, as we seek to prove the equivalence of three
complex algorithms which manipulate a graphical notation (hierarchical block
diagrams) and transform models in this notation into a different textual
language, namely, the algebra mentioned above. Terms in this algebra have
intricate formal semantics, and formally proving that two given specific
terms are equivalent is already a non-trivial exercise. Here, the problem
is to prove that a number of translation strategies $T_1,T_2,...,T_k$
are equivalent, meaning that for {\em any} given graphical diagram $D$, 
the terms resulting from translating $D$ by applying these strategies,
$T_1(D), T_2(D), ..., T_k(D)$, are all semantically equivalent.

This equivalence question is important for many reasons.
Just like a compiler has many choices when generating code, a HBD translator
has many choices when generating algebraic expressions. Just like a correct
compiler must guarantee that all possible results are equivalent
(independently of optimization or other flags/options), the translator must
also guarantee that all possible algebraic expressions are equivalent. 
Moreover, the algebraic expressions constitute the formal semantics of HBDs,
and hence also those of tools like Simulink. Therefore, this determinacy
principle is also necessary in order for the formal Simulink semantics to be well-defined.

In order to formulate the equivalence question precisely,
we introduce an {\em abstract} and {\em nondeterministic} algorithm for translating 
HBDs into an abstract algebra of components with three composition operations
(serial, parallel, feedback) and three constants (split, switch, and sink).
By {\em abstract algorithm} we understand an algorithm that produces terms in this abstract algebra.
Concrete versions for this algorithm are
obtained when using it for concrete models of the algebra
(e.g., {\em constructive functions}).
The algorithm is {\em nondeterministic} in the sense that it consists of a set
of basic operations (transformations) that can be applied in any order.
This allows to capture various deterministic translation strategies as
determinizations ({\em refinements}~\cite{back-wright-98}) of the abstract algorithm.

The main contributions of the paper are the following:
\begin{enumerate}
\item We formally and mechanically define a translation algorithm for HBDs.
\item We prove that despite its internal nondeterminism, the 
algorithm achieves deterministic results in the sense that all possible algebra
terms that can be generated by the different nondeterministic choices are
semantically equivalent.
\item We formalize two translation strategies introduced in~\cite{DBLP:conf/spin/DragomirPT16} as refinements of the abstract algorithm.
\item We formalize also the third strategy (feedbackless)
introduced in~\cite{DBLP:conf/spin/DragomirPT16} as an independent algorithm. 
\item We prove the equivalence of these three translation strategies.
\end{enumerate}
To our knowledge, our work constitutes the first and only mechanically proven hierarchical block diagram translator.

We remark that our translation is compositional~\cite{DBLP:conf/spin/DragomirPT16}.
We also remark that our abstract algorithm can be instantiated in
many different ways, encompassing not just the three translation strategies
of~\cite{DBLP:conf/spin/DragomirPT16}, but also any other HBD translation
strategy that can be devised by combining the basic composition operations 
defined in the abstract algorithm. As a consequence, our results imply not
just the equivalence of the translation strategies of~\cite{DBLP:conf/spin/DragomirPT16}, but also the equivalence of any other translation strategy that 
could be devised as stated above. More generally, any translation
of a graphical formalism into expressions in some language would
have to deal with problems similar to those tackled in this paper, and
our work offers an example of how to address these problems in a formal manner.

The entire RCRS framework, including all results in this paper, have been formalized and proved in the Isabelle theorem prover~\cite{nipkow-paulson-wenzel-02} and are part of the RCRS toolset which is publicly available in a figshare repository~\cite{RCRS_Toolset_figshare}.
The theories relevant to this paper are under RCRS/Isabelle/TranslateHBD.
The RCRS toolset can be downloaded also from the RCRS web page: \url{http://rcrs.cs.aalto.fi/}.

\section{Related Work}
\label{sec:rwork}

Model transformation and the verification of its correctness is a long standing line of research, which includes classification of model transformations~\cite{JOT:AmraniCLSDTVC15} and the properties they must satisfy with respect to their intent~\cite{LucioADLSSSW16}, verification techniques~\cite{AbRahimL15}, frameworks for specifying model transformations (e.g., ATL~\cite{ATL}), and various implementations for specific source and target meta-models. Extensive surveys of the above can be found in~\cite{JOT:AmraniCLSDTVC15,CALEGARI20135,AbRahimL15}.

Several translations from Simulink have been proposed 
in the literature, including to Hybrid Automata~\cite{AgrawalSK04}, BIP~\cite{SfyrlaTSBS10}, NuSMV~\cite{MeenakshiBR06},  Lustre~\cite{TripakisSCC05}, Boogie~\cite{ReicherdtG2014}, Timed Interval Calculus~\cite{ChenDS09}, Function Blocks~\cite{YangV12}, I/O Extended Finite Automata~\cite{ZhouK12}, Hybrid CSP~\cite{ZouZWFQ13}, and SpaceEx~\cite{MinopoliF16}. 
It is unclear to what extent these approaches provide formal guarantees on the
determinism of the translation. For example, the order in which blocks in the
Simulink diagram are processed might a-priori influence the result.
Some works fix this order, e.g., \cite{ReicherdtG2014} computes the control flow graph 
and translates the model according to this computed order. In contrast, 
we prove that the results of our algorithm are equivalent for any order.
To the best of our knowledge, the abstract translation proposed hereafter for Simulink is 
the only one formally defined and mechanically proven correct.

The focus of several works is to validate the preservation of the semantics
of the original diagram by the resulting translation (e.g., see~\cite{YangV12,RyabtsevS2009,BouissouC12,Schlesinger2016}).
In contrast, our goal is to prove equivalence of all possible translations.
Given that Simulink semantics is informal (``what the simulator does''),
ultimately the only way to gain confidence that the translation conforms
to the original Simulink model is by simulation (e.g., as in~\cite{DBLP:conf/spin/DragomirPT16}).

The work of \cite{abramski:1994} presents a correspondence between formulas and proofs in linear logic
\cite{girard:1987}
and types and computations in process calculi \cite{Hoare:1978:CSP,Milner:1989:CC}. A sequent in the logic ($\vdash A,C^{\bot},B$) is interpreted as an interface specification for a concurrent process and how this process
is connected to the environment. In this example $A$ and $B$ are inputs to the process and $C^{\bot}$
is output. In our approach we connect components by naming their inputs and outputs, and an output
is connected to an input if they have the same name.

With respect to the target algebra of our translation, the most relevant related works are 
the algebra of flownomials \cite{Stefanescu:2000:NA:518304} and the relational model for 
non-deterministic dataflow \cite{hildebrandt2004}. A comparison with these works is presented 
in Section~\ref{sec:model}.

In \cite{Courcelle1987}, graphs and graph operations are represented by algebraic expressions 
and operations, and a complete equational axiomatization of the equivalence of the graph expressions 
is given. This is then applied to flow-charts as investigated in \cite{SCHMECK1983165}.

To our knowledge, none of the theoretical works on flownomials, nor graph represented as expressions, nor the more practical works on translating HBDs/Simulink, are mechanically formalized or verified.

\section{Preliminaries}
\label{sec:prelim}

For a type or set $X$, $X^{*}$ is the type of finite lists with elements from
$X$. We denote the empty list by $\epsilon$, $(x_{1},\ldots,x_{n})$
denotes the list with elements $x_{1},\ldots,x_{n}$, and for lists
$x$ and $y$, $x\cvar y$ denotes their concatenation. 
The length
of a list $x$ is denoted by $|x|$. The list of common elements of 
$x$ and $y$ in the order occurring in $x$ is denoted by $x\linter y$.
The list of elements from $x$ that do not occur in $y$ is denoted by 
$x\ominus y$. We define $x\lunion y = x\cdot(y\ominus x)$, the list 
of $x$ concatenated with the elements of $y$ not occurring in $x$.
A list $x$ is a {\em permutation} of a list $y$, denoted $\perm(x,y)$, if $x$ contains  
all elements of $y$ (including multiplicities) possibly in a different 
order. For a list $x$, $\set(x)$ denotes the set of all elements of $x$.

\subsection{Constructive Functions}
\label{subsec:constr_func}

We introduce in this section the {\em constructive functions} as used in the {\em constructive semantics} literature~\cite{Malik94,Berry99,EdwardsLee03}.
They will provide a concrete model for the abstract algebra of HBDs, introduced in Section \ref{sec:model}. These functions are also used in the example from Section \ref{sec:overview}.

We assume that $\bot$ is a new element called unknown, and that 
$\bot$ is not an element of other sets that we use. For a set $A$ we
define $A^\bot = A \cup \{\bot\}$, and on $A^\bot$ we introduce
the {\em pointed complete partial order} (cpo) \cite{priestley} by 
$(a\le b) \iff (a=\bot \lor a = b)$. We extend the order on $A^\bot$
to the Cartesian product $A^\bot_1\times\cdots A^\bot_n$ by 
$(x_1,\ldots, x_n) \le (y_1,\ldots,y_n) \iff (\forall 1\le i\le n : x_i \le y_i)$.

Constructive functions are the {\em monotonic} functions $f: A_{1}^{\bot}\times\ldots\times A_{n}^{\bot}\to  B_{1}^{\bot}\times\ldots\times B_{m}^{\bot}$, i.e., ($\forall x,y: x\le y \Rightarrow f(x) \le f(y)$).
We denote these functions by $A_{1}\cdots A_{n}\partial B_{1}\cdots B_{m}$ ($f:A_{1}\cdot\ldots\cdot A_{n}\partial B_{1}\cdot\ldots\cdot B_{m}$ for $f: A_{1}^{\bot} \times \ldots \times A_{n}^{\bot}\to  B_{1}^{\bot}\times\ldots\times B_{m}^{\bot}$). $\id :A \partial A$ denotes the {\em identity
function} on $A$: $\forall x: \id(x) = x$.

For constructive functions $f:A\partial B$ and $g:B\partial C$, their
{\em serial composition} $g\circ f$ is the normal function composition $(g\circ f)(x) = g(f(x))$. The {\em parallel composition} of
$f:A\partial B$ and $g:A'\partial B'$ is denoted 
$f\parallel g:A\cdot A'\partial B\cdot B'$ and is defined by $(f\parallel g)(x,y) = (f(x),g(y))$. We assume that parallel composition operator binds stronger than serial composition, i.e. $f \parallel g\circ h$ is the same as $(f \parallel g)\circ h$.

For a constructive function $f:A\partial A$, its least fixpoint always exists \cite{priestley}, 
and we use it to define a {\em feedback composition}.
If $f:A\cdot B\partial A\cdot B'$
is a constructive function, then its feedback (on $A$), denoted $\fb(f):B\partial B'$, is defined by
\[
\fb(f)(y)=f(\mu\ x: f_{1}(x, y), y)
\]
where $f_{1}:A\cdot B\partial A$ is the first component of $f$
and $(\mu\ x:  f_{1}(x, y))$ is the least fixpoint of the function that,
for fixed $y$, maps $x$ into $f_{1}(x, y)$.

Let $x_1,\ldots,x_n$ be variables ranging over types $A_1,\ldots, A_n$,
and $e_1,\ldots,e_m$ expressions using basic operations ($+$,$-$,$\ldots$) on these variables, ranging over
types  $B_1,\ldots, B_m$. We define the constructive function
$$
[x_1,\ldots, x_n \leadsto e_1,\ldots,e_m]: A_1\cdots A_n
\partial B_1\cdots B_n
$$
as the function that maps $(x_1,\ldots,x_n) \in A_1^\bot\times\ldots 
\times A_n^\bot$ into $(e_1,\ldots,e_m)$, where the basic operations
are extended to unknown values in a standard way (e.g.\ $3+\bot = \bot$, 
$\bot\cdot 0 = 0$).

\subsection{Refinement Calculus and Hoare Total Correctness Triples}

We model the (nondeterministic) algorithms using monotonic
predicate transformers \cite{dijkstra:1975} within the refinement calculus 
\cite{back-wright-98}.

We assume a state space $\Sigma$. A state $\sigma\in\Sigma$ gives values to all
program variables. Programs are modeled as monotonic predicate transformers on
$\Sigma$, that is monotonic functions from predicates to predicates 
($(\Sigma \to \bool)\to(\Sigma \to \bool)$) with a weakest precondition interpretation.
For $P: (\Sigma \to \bool)\to(\Sigma \to \bool)$ and a post condition $q:\Sigma \to \bool$,
$P(q)$ is the predicate that is true for the initial states from which the execution
of the program modeled by $P$ always terminates, and it terminates in a state from $q$.
In the rest of the paper we refer to monotonic elements of 
$(\Sigma \to \bool)\to(\Sigma \to \bool)$
as programs. The program statements are modeled as operations on monotonic predicate 
transformers.

For predicates ($\Sigma\to\bool$), we use the notations $\cup$, $\cap$, $\neg$,
and $\subseteq$ for the union, intersection, complement, and subset operations, respectively.

The {\em nondeterministic assignment statement}, denoted $[\,x:=x'\ |\ p(y,x')\,]$, assigns
a new value $x'$ to variable $x$ such that the property $p(y,x')$ is true. 
In $p(y,x')$, variable $y$ stands for the current value (before the assignement) of $y$ used 
for updating varible $x$. We can choose $y = x$, to refer to the current value of variable $x$.
For example
$[\,x:=x'\ | \ x' > x + 1\,]$ assigns to $x$ a new value greater that the current value of $x+1$.

Formally, the nondeterministic assignment statement is defined by:
$$
[\,x:=x'\ |\ p(y,x')\,](q)(\sigma) = (\forall x': p(\sigma(y), x') \Rightarrow q(\sigma[x:=x']))
$$
where $\sigma(y)$ is the value of variable $y$ in state $\sigma$, and $\sigma[x:=x']$ is a new
state obtained from $\sigma$ by changing the value of $x$ to $x'$.

The {\em standard assignment statement} $x:=e$ is defined as $[\,x:=x'\ | \ x' = e]$, 
where $x'$ is a new name.

For a predicate $p:\Sigma\to\bool$, the {\em assert statement}, denoted $\{p\}$, 
starting from a state $\sigma$ behaves as skip if $p(\sigma)$ is true, and it {\em fails} otherwise.
By fail we mean a program that runs forever. 
$$
\{p\}(q) = p \cap q
$$

The {\em sequential composition} of programs $P, P'$, denoted $P\comp P'$ is the function composition 
of predicate transformers:
$$(P \comp P') (q) = P(P'(q)).$$

The {\em nondeterministic choice} of $P$ and
$P'$, denoted $P\sqcap P'$, is the pointwise extension of the intersection on predicates to 
predicate transformers: $$(P\sqcap P')(q) = P(q) \cap P'(Q).$$

For a predicate $b$ and programs $P$ and $P'$, the {\em if statement}, denoted 
$\mathsf{if\ } b \mathsf{\ then\ } P \mathsf{\ else\ } P'$ is defined by
$$
\mathsf{if\ } b \mathsf{\ then\ } P \mathsf{\ else\ } P' = (\{b\} \comp P)\sqcup(\{\neg b\} \comp P')
$$
where $\sqcup$ is the pointwise extension of union on predicates to predicate transformers ($(P\sqcup P')(q) = 
P(q)\cup P'(q)$). 

For predicate $b$ and program $P$, the {\em while statement}, denoted $\mathsf{while\ }b
\mathsf{\ do\ } P$, is defined as a least fixpoint:
$$
\mathsf{while\ }b
\mathsf{\ do\ } P = (\mu\, X: \mathsf{if\ } b \mathsf{\ then\ } P\comp X \mathsf{\ else\ skip})
$$
where $\mathsf{skip}$ is the program that does not change the state, modeled as the
identity predicate transformer, and $(\mu\, X: \mathsf{if\ } b \mathsf{\ then\ } P\comp X \mathsf{\ else\ skip})$ is the least fixpoint of the function mapping $X$ into 
$\mathsf{if\ } b \mathsf{\ then\ } P\comp X \mathsf{\ else\ skip}$. The fixpoint always exists
because of the monotonicity assumption.

The {\em refinement relation of programs}, denoted $P\sqsubseteq P'$, is again the pointwise extension
of the inclusion order on predicates to predicate transformers:
$$
(P\sqsubseteq P') = (\forall q: P(q) \subseteq P'(q)).
$$
If a program $P'$ is a refinement of another program $P$, ($P\sqsubseteq P'$), then we can use $P'$
in every context where we can use $P$. In a refinement $P\sqsubseteq P'$, the program $P'$ is more
deterministic than $P$, and it fails for less input states. For example we have the following refinement:
$$
\{x > 10\}\comp[ x:= x' \ | \ x' = 1 \lor x' = 2 \lor x' = 3] \sqsubseteq 
\{x > 0\}\comp[ x:= x' \ | \ x' = 1 \lor x' = 3]
$$
In this example, the second program fails for less states $x > 0$ as opposed to $x > 10$,
and it is more deterministic. The second program can assign to $x$ only the values $1$ and $3$,
while the first program can assign also value $2$.

Finally we introduce Hoare \cite{hoare-69} {\em total correctness triples} for programs. If $p$
is a {\em precondition} predicate on states, $q$ is a {\em postcondition} predicate on states, and
$P$ is a program, then the Hoare total correctness triple $p\ \{\!|P|\!\}\ q$ is defined by
$$
(p\ \{\!|P|\!\}\ q) = (p\subseteq P(q)).
$$
The interpretation of the triple $p\ \{\!|P|\!\}\ q$ is the following. If the program $P$ starts
from a state $\sigma$ satisfying the precondition $p$, then $P$ always terminates, and it terminates
into a state satisfying the postcondition $q$. 

In general, the correctness of a program is stated as a Hoare triple, and it is proved by reducing 
this correctness problem to smaller and smaller programs using Hoare rules.
As examples we
give here two Hoare rules for the correctness of the nondeterministic assignment and while 
statements.\footnote{We omit several of the proofs of the results presented in this paper. These proofs and additional material used in the formalization and verification of our algorithms can be found in our Isabelle formalization~\cite{RCRS_Toolset_figshare}.}

\begin{lemma}[Hoare rule for the nondeterministic assignment]
\label{hoare:assign}
If $p,q$ are predicates on state
and $b$ is a predicate on $y,x'$ such that
$$
(\forall \sigma, x' : p(\sigma) \land b(\sigma(y),x') \Rightarrow q(\sigma[x:=x']))
$$
then
$$
(p\ \{\!|\ [\, x:=x'\ |\ b(y,x')\,]\ |\!\}\ q).
$$
\end{lemma}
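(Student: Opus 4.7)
The plan is to prove this by direct unfolding of the three definitions in play: the Hoare total correctness triple, the subset order on predicates, and the weakest-precondition semantics of the nondeterministic assignment statement. No induction, fixpoint reasoning, or auxiliary lemma is required; this is the base-case Hoare rule and the hypothesis is essentially a rephrasing of the conclusion modulo these definitions.

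First I would expand the goal $(p\ \{\!|\,[\,x:=x'\ |\ b(y,x')\,]\,|\!\}\ q)$ using the definition $(p\ \{\!|P|\!\}\ q) = (p\subseteq P(q))$, reducing it to showing $p \subseteq [\,x:=x'\ |\ b(y,x')\,](q)$. Since this is a subset of predicates on $\Sigma$, it is equivalent to: for every $\sigma \in \Sigma$, $p(\sigma)$ implies $[\,x:=x'\ |\ b(y,x')\,](q)(\sigma)$. Next I would apply the given definition
$$[\,x:=x'\ |\ b(y,x')\,](q)(\sigma) = (\forall x': b(\sigma(y), x') \Rightarrow q(\sigma[x:=x']))$$
so the goal becomes: for every $\sigma$, $p(\sigma)$ implies $\forall x': b(\sigma(y), x') \Rightarrow q(\sigma[x:=x'])$. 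Pulling the universal quantifier outside and currying the implications, this is logically the same as $\forall \sigma, x': p(\sigma) \land b(\sigma(y), x') \Rightarrow q(\sigma[x:=x'])$, which is precisely the hypothesis of the lemma.

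The only thing to be careful about is the name capture and the distinction between the variable name $y$ (appearing syntactically inside $b$) and its value $\sigma(y)$ in the state $\sigma$; once the semantic definition of the nondeterministic assignment is applied, $b$ is evaluated on the pair $(\sigma(y), x')$ and the conclusion on the updated state $\sigma[x:=x']$, matching the hypothesis exactly. Mechanically in Isabelle this would be a one-line \textbf{unfolding} of the definitions of the Hoare triple, the subset relation, and the assignment predicate transformer, followed by \textbf{blast} or \textbf{auto}. I do not anticipate any real obstacle; the lemma is essentially a definitional repackaging, and its only purpose is to serve as a convenient rule to discharge assignment verification conditions further on in the paper without having to re-expand these definitions each time.
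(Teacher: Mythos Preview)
Your proposal is correct; the paper does not give an explicit proof of this lemma in the text (it is covered by the general remark that omitted proofs appear in the Isabelle formalization), but the direct unfolding of the Hoare triple, subset order, and nondeterministic-assignment semantics that you describe is exactly the intended argument. There is nothing to add.
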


\newcommand{\nat}{\mathsf{nat}}
\begin{lemma}[Hoare rule for while]
\label{hoare:while}
If $p,q,b,I$ are predicates
on state, $t:\Sigma\to\nat$ is a function from state to natural numbers, 
and $P$ is a program such that
$$
(\forall n: (I \land t = n) \ \{\!|P|\!\}\ (I\land t < n) ) \mbox{\ \ and \ \ } (p\subseteq I)
\mbox{\ \ and\ \ } (\neg b \cap I \subseteq q)
$$
then
$$
(p\ \{\!|\ \mathsf{while\ }b\mathsf{\ do\ } P\ |\!\}\ q)
$$
\end{lemma}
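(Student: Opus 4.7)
The plan is to reduce the triple to a statement indexed by the natural number given by $t$ and prove it by strong induction on that number. Let $W = \mathsf{while\ }b\mathsf{\ do\ } P$ and $F(X) = \mathsf{if\ }b\mathsf{\ then\ }P\comp X\mathsf{\ else\ skip}$, so that $W = \mu F$ by definition. Because $p \subseteq I$ and $t$ maps every state to some natural number, it suffices to establish $I \cap \{\sigma \mid t(\sigma) = n\} \subseteq W(q)$ for every $n \in \nat$; taking the union over $n$ yields $I \subseteq W(q)$, and composing with $p \subseteq I$ gives the desired triple $p\ \{\!|W|\!\}\ q$.

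I would proceed by strong (well-founded) induction on $n$, using the fixpoint unfolding $W = F(W)$ together with the definitions of the if-statement, assert, $\sqcup$, and $\mathsf{skip}$ to rewrite
\begin{equation*}
W(q) \;=\; \bigl(b \cap P(W(q))\bigr) \,\cup\, \bigl(\neg b \cap q\bigr).
\end{equation*}
Then I pick an arbitrary $\sigma \in I$ with $t(\sigma) = n$ and case split on $b(\sigma)$. If $\neg b(\sigma)$, then $\sigma \in \neg b \cap I \subseteq q$ by the third hypothesis, so $\sigma$ lies in the right disjunct of the unfolded $W(q)$. Otherwise $b(\sigma)$ holds; the body hypothesis instantiated at $n$ gives $\sigma \in P(I \cap \{t < n\})$, and by monotonicity of the predicate transformer $P$ combined with the induction hypothesis $I \cap \{t < n\} \subseteq W(q)$, I obtain $\sigma \in P(W(q))$, which places $\sigma$ in the left disjunct.

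The main obstacle I foresee is making the fixpoint step rigorous: both the existence of $\mu F$ and the unfolding identity $W = F(W)$ rely on $F$ being monotone in $X$ with respect to the pointwise ordering of predicate transformers, which in turn reduces to monotonicity of sequential composition, assert, and $\sqcup$ in each of their arguments. These are standard facts of the refinement-calculus setup already used in the paper and should be available as library lemmas in the Isabelle formalization, so the real work is discharging the monotonicity side-condition once and invoking the fixpoint-unfolding lemma. After that, the strong induction on $\nat$, the two-case analysis, and the final identity $\bigcup_n (I \cap \{t = n\}) = I$ composed with $p \subseteq I$ are all routine.
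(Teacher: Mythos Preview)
The paper does not give a proof of this lemma; it is explicitly omitted as a standard refinement-calculus fact (see the footnote immediately preceding Lemmas~\ref{hoare:assign} and~\ref{hoare:while}), with the details deferred to the Isabelle formalization. Your argument is the standard one---strong induction on the variant, fixpoint unfolding of the while loop, and a case split on the guard---and it is correct as written for the lemma as stated; there is nothing to compare against in the paper itself.
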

In this lemma $I$ is called the {\em invariant} and its role is to ensure the correctness of the
while program based on the correctness of the body ($P$). The function (term) $t$ is called 
the {\em variant}
and it is used to ensure the termination of the while program for all possible input states 
satisfying $p$.

There is an important relationship between Hoare rules and refinement, expressed by the next lemma.
\begin{lemma}\label{hoare:refinement} If $P, P'$ are programs, then
$$
P\sqsubseteq P' \ \ 
\Leftrightarrow\ \ 
(\forall p,q: (p\ \{\!|\ P\ |\!\}\ q) \ \Rightarrow (p\ \{\!|\ P'\ |\!\}\ q))
$$
\end{lemma}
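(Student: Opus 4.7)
The plan is to unfold the definitions on both sides and prove each direction by a short set-theoretic argument, using the fact that a Hoare triple $(p\ \{\!|P|\!\}\ q)$ is just the inclusion $p\subseteq P(q)$, while the refinement $P\sqsubseteq P'$ is the universally quantified inclusion $\forall q: P(q)\subseteq P'(q)$. So the claim reduces to showing
$$
(\forall q: P(q)\subseteq P'(q))\ \iff\ (\forall p,q: p\subseteq P(q) \Rightarrow p\subseteq P'(q)),
$$
which is a statement purely about predicates ordered by subset inclusion.

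First I would prove the forward direction. Assume $P\sqsubseteq P'$ and take arbitrary $p,q$ with $p\ \{\!|P|\!\}\ q$, i.e.\ $p\subseteq P(q)$. By the assumption, $P(q)\subseteq P'(q)$, so by transitivity of $\subseteq$ we get $p\subseteq P'(q)$, which is precisely $p\ \{\!|P'|\!\}\ q$. This direction does not use any property specific to predicate transformers beyond the definition of the triple.

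For the backward direction, assume that every Hoare triple valid for $P$ is also valid for $P'$. Fix an arbitrary postcondition $q$; the goal is to show $P(q)\subseteq P'(q)$. The key trick is to choose the strongest precondition, namely $p := P(q)$. Then $p\subseteq P(q)$ holds trivially (by reflexivity of inclusion), so $(p\ \{\!|P|\!\}\ q)$ holds, and the assumption yields $(p\ \{\!|P'|\!\}\ q)$, i.e.\ $P(q)=p\subseteq P'(q)$. Since $q$ was arbitrary, $P\sqsubseteq P'$ follows.

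The whole argument is elementary; the only substantive step is choosing $p=P(q)$ in the backward direction, which is the standard way one recovers a universally quantified inclusion from a statement about all validity-preserving preconditions. There is no real obstacle here, and monotonicity of $P,P'$ is not even needed for the equivalence itself (it was needed only to make the triple notation well-defined within the refinement calculus); the proof goes through purely by manipulating the defining inclusions.
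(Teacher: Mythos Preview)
Your proof is correct and is the standard elementary argument for this well-known equivalence in the refinement calculus. The paper itself omits the proof of this lemma (it is among the results whose proofs are deferred to the Isabelle formalization), so there is no in-paper argument to compare against; your unfolding of the definitions together with the instantiation $p := P(q)$ in the backward direction is exactly the canonical route.
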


\mbox{}

\section{Overview of the Translation Algorithm}
\label{sec:overview}

A \emph{block diagram} $N$ is a network of interconnected blocks. A block
may be a basic ({\em atomic}) block, or a {\em composite} block
that corresponds to a {\em sub-diagram}. If $N$ contains composite blocks
then it is called a \emph{hierarchical block diagram} (HBD); otherwise
it is called {\em flat}.
An example of a flat diagram is shown in Figure \ref{fig:sum-orig}.
The connections between blocks are called wires, and they have a source
block and a target block. For simplicity, we will assume that every wire
has a single source and a single target. This can be achieved by adding
extra blocks. For instance, the diagram of Figure \ref{fig:sum-orig} can
be transformed as in Figure \ref{fig:sum-name} by adding an explicit
block called {\em Split}.

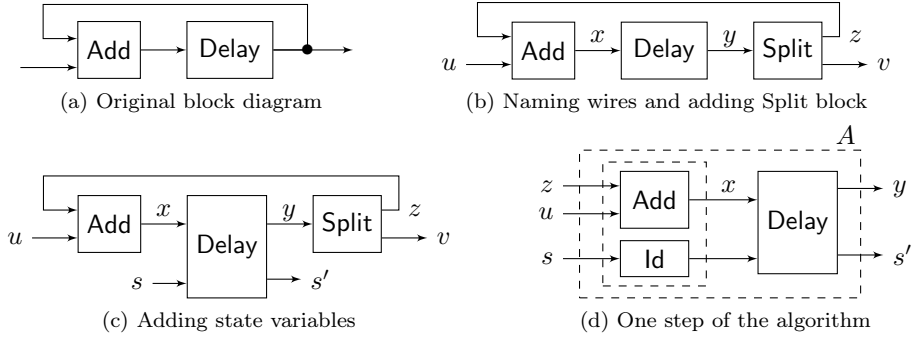
\begin{figure}
\centering
\subfloat[Original block diagram]
{
	\begin{tikzpicture} 		
    \node[draw, minimum height = 5ex](Add){$\mathsf{Add}$};
    \node[draw, minimum height = 5ex,right = 4ex of Add,text width=6ex,text centered](Delay){$\mathsf{Delay}$
    };
	\draw[-latex'](Delay.east)-- ++(3ex,0) node(C){$\bullet$} -- ++(0ex,4ex) coordinate(A) 
		-- (A -| Add.west) -- ++ (-3ex, 0) -- ++(0,-3ex) coordinate(B) -- (B -| Add.west);
    \draw[-latex'](Add)--(Delay);
	\draw[-latex'](Add.south west) ++ (-5ex,1ex) coordinate (A) -- (A-| Add.west);
	\draw[-latex'](Delay.east) -- ++(7ex, 0);
	\end{tikzpicture}
	\label{fig:sum-orig}
	}\qquad
\subfloat[Naming wires and adding Split block]
{
	\begin{tikzpicture} 		
    \node[draw, minimum height = 5ex](Add){$\mathsf{Add}$};
	\node[right = 0.5ex of Add.north east, anchor = north west]{$x$};
    \node[draw, minimum height = 5ex,right = 4ex of Add,text width=6ex,text centered](Delay){$\mathsf{Delay}$
    };
	\node[right = 0.5ex of Delay.north east, anchor = north west]{$y$};
    \node[draw, minimum height = 5ex,right = 4ex of Delay](Split){$\mathsf{Split}$};
	\node[right = 1.5ex of Split.north east, anchor = north west](z){$z$};

    \node[left = 4ex of Add.south west, anchor = south east](u){$u$};
    \node[right = 4ex of Split.south east, anchor = south west](v){$v$};

	\draw[-latex'](z -| Split.east) -- (z.west) -- ++(0ex,3ex) coordinate(A)
		-- (A -| Add.west) -- ++ (-3ex, 0) -- ++(0,-3ex) coordinate(B) -- (B -| Add.west);
    \draw[-latex'](Add)--(Delay);
	\draw[-latex'](Delay.east) -- (Split.west);
	\draw[-latex'](v -| Split.east) -- (v);
	\draw[-latex'](u) -- (u -| Add.west);
	\end{tikzpicture}
	\label{fig:sum-name}
	} \\
\subfloat[Adding state variables]
{
	\begin{tikzpicture} 		
    \node[draw, minimum height = 5ex](Add){$\mathsf{Add}$};
	\node[right = 0.5ex of Add.north east, anchor = north west]{$x$};
    \node[draw, minimum height = 9ex,right = 4ex of Add.north east, anchor = north west](Delay){$\mathsf{Delay}$};
	\node[right = 0.5ex of Delay.north east, anchor = north west]{$y$};
    \node[draw, minimum height = 5ex,right = 4ex of Delay.north east, anchor = north west](Split){$\mathsf{Split}$};
	\node[right = 1.5ex of Split.north east, anchor = north west](z){$z$};

    \node[left = 4ex of Add.south west, anchor = south east](u){$u$};
    \node[right = 4ex of Split.south east, anchor = south west](v){$v$};

    \node[left = 3ex of Delay.south west, anchor = south east](s){$s$};
    \node[right = 3ex of Delay.south east, anchor = south west](s1){$s'$};

	\draw[-latex'](z -| Split.east) -- (z.west) -- ++(0ex,3ex) coordinate(A)
		-- (A -| Add.west) -- ++ (-3ex, 0) -- ++(0,-3ex) coordinate(B) -- (B -| Add.west);
    \draw[-latex'](Add)--(Add -| Delay.west);
	\draw[-latex'](Split.west -| Delay.east) -- (Split.west);
	\draw[-latex'](v -| Split.east) -- (v);
	\draw[-latex'](u) -- (u -| Add.west);
	\draw[-latex'](s1 -| Delay.east) -- (s1);
	\draw[-latex'](s) -- (s -| Delay.west);
	\end{tikzpicture}
	\label{fig:sum-state}
	} \qquad 
\subfloat[One step of the algorithm]
{
	\begin{tikzpicture} 		
    \node[draw, minimum height = 5ex, minimum width = 6ex](Add){$\mathsf{Add}$};
	\node[right = 2ex of Add.north east, anchor = north west](x){$x$};

    \node[draw, minimum width = 6ex, below = 1ex of Add](Id){$\mathsf{Id}$};

    \node[left = 5ex of Add.south west, anchor = south east](u){$u$};
    \node[left = 5ex of Add.north west, anchor = north east](z){$z$};

    \node[draw, minimum height = 9ex,right = 6ex of Add.north east, anchor = north west](Delay){$\mathsf{Delay}$};
	\node[right = 4ex of Delay.north east, anchor = north west](y){$y$};
    \node[right = 4ex of Delay.south east, anchor = south west](s1){$s'$};

    \node[left = 5ex of Id.west](s){$s$};
    \node[draw, dashed, fit = (Id) (Add), inner xsep = 1.5ex](Par){};

    \draw[-latex'](s) -- (Id);
    \draw[-latex'](z) -- (z -| Add.west);
    \draw[-latex'](u) -- (u -| Add.west);
	\draw[-latex'](Add) -- (Add -| Delay.west);

    \draw[-latex'](y -| Delay.east) -- (y);
    \draw[-latex'](s1 -| Delay.east) -- (s1);
    \draw[-latex'](Id) -- (Id -| Delay.west);
    
    \node[above right = 1.5ex and -1ex of Delay](A){$A$};
    \node[draw, dashed, fit = (Par) (Delay), inner ysep = 1ex, inner xsep = 2ex](){};
	\end{tikzpicture}
	\label{fig:one-step}
	}
\caption{Running example: diagram for summation.}
\label{fig:sum}
\end{figure}

Let us explain the idea of the translation algorithm.
We first explain the idea for flat diagrams, and then
we extend it recursively for hierarchical diagrams. 

A diagram is represented in the algorithm as a list of elements corresponding
to the basic blocks. One element of this list is a triple containing
a list of input variables, a list of output variables, and a \emph{function}.
The function computes the values of the outputs based
on the values of the inputs, and for now it can be thought of as a
constructive function. Later this function will be an element of an
abstract algebra modeling HBDs. Wires are represented by matching
input/output variables from the block representations.

A block diagram may contain {\em stateful} blocks such as delays or integrators.
We model these blocks using additional state variables (wires). 
In Figure \ref{fig:sum}, the only stateful block is the block $\mathsf{Delay}$.
\delete{
\red{where the current and next states are denoted by $s$ 
and $s'$ respectively}. \red{Intuitively, a $\mathsf{Delay}$ models a unit delay of the output, that is at any moment $n$ the output is equal 
to the input at moment $n-1$. Then the values to remember are stored in the current and next state variables.}
}
We model this block as an element with two inputs $(x,s)$, two outputs $(y,s')$ and function $(y,s'):=(s,x)$ (Figure \ref{fig:sum-state}). More details
about this representation can be found in \cite{DBLP:conf/spin/DragomirPT16}.

In summary, the list representation of the example of Figure \ref{fig:sum}
is the following:
$$
\begin{array}{lll}
\big(\mathsf{Add},\mathsf{Delay},\mathsf{Split}\big)\ \mbox{where} \\[1ex]
\add  =  ((z,u),\ x,\ [z,u\leadsto z+u])\\[1ex]
\delay  =  ((x,s),\ (y,s'),\ [x,s\leadsto s,x])\\[1ex]
\split  =  (y,\ (z,v),\ [y\leadsto y,y])
\end{array}
$$

The algorithm works by choosing nondeterministically some elements
from the list and replacing them with their appropriate composition (serial,
parallel, or feedback). The composition must
connect all the matching variables. Let us illustrate how the algorithm may
proceed on the example of Figure \ref{fig:sum}; for the full description of
the algorithm see Section \ref{sec:algorithm}. 

Suppose the algorithm first chooses to compose $\add$ and $\delay$.
The only matching variable in this case is $x$, between the output
of $\add$ and the first input of $\delay$. The appropriate composition
to use here is serial composition. Because $\delay$ also has
$s$ as input, $\add$ and $\delay$ cannot be directly connected in series.
This is due to the number of outputs of $\add$ that need to match the number of inputs
of $\delay$. To compute the serial composition, $\add$ must first be composed in parallel with
the identity block $\id$, as shown in Figure~\ref{fig:one-step}. Doing so,
a new element $A$ is created: 
\delete{$$\red{A=((z,u,s),\ (y,s'),\ [x,s\leadsto s,x]\circ([u,x\leadsto u+z\parallel\id)).}$$}
$$A=((z,u,s),\ (y,s'),\ \delay \circ (\add\parallel\id)).$$
Next, $A$ is composed with $\split$. In this case we need to connect variable $y$
(using serial composition), as well as $z$ (using feedback composition).
The resulting element is
\[
\Big((u,s),\ (v,s'),\ \fb\big((\split\parallel\id)\circ\delay\circ(\add\parallel\id)\big)\Big)
\]
where we need again to add the $\id$ component for variable $s'$.

Suppose now that the algorithm chooses to compose first the blocks $\split$ and $\add$
(Fig.~\ref{fig:splitadd}) into $B$.
$$
B = ((y,u),\ (x,v), \ (\add\ \|\ \id) \circ (\id\ \|\ [v,u \leadsto u,v]) \circ (\split\ \|\ \id))
$$
In this composition, in addition to the $\id$ components, we need now also a switch 
($[v,u \leadsto u,v]$) for wires $v$ and $u$.
Next the algorithm composes $B$ and $\delay$ (Fig.~\ref{fig:Bdelay}):
$$
\Big((u,s),\ (s',v),\ \fb\big((\delay \ \| \ \id) \circ (\id\ \|\ [v,s \leadsto s,v]) \circ (B \ \| \ \id)\big)
\Big)
$$

As we can see from this example, by considering the blocks in the diagram in different orders, we obtain 
different expressions. On this example, the first expression is simpler (it has less connectors) than the second one. 
In general, a diagram, being a graph, does not have a predefined canonical order, and we need to show that the result of the algorithm is {\em the same} regardless of the order in which the blocks are considered. 
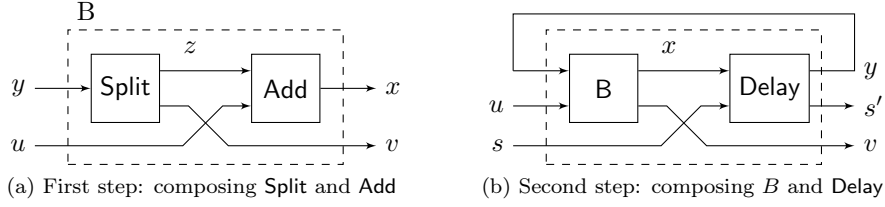
\begin{figure}
\centering
\subfloat[First step: composing $\split$ and $\add$ \label{fig:splitadd}]
{
\begin{tikzpicture} 		
    \node[draw, minimum height = 6ex, minimum width = 6ex](Split){$\mathsf{Split}$};
    \node[draw, minimum height = 6ex, minimum width = 6ex, right = 8ex of Split](Add){$\mathsf{Add}$};
    \node[below = 1ex of Split](t){};
     \node[right = 2ex of Split.north east, anchor = south west, inner sep = 0](z){$z$};
    \draw[-latex'](Split.north east) ++ (0, -1.5ex) coordinate(A) -- (A -| Add.west);
    \draw[-latex'](Split.south east) ++ (0, 1.5ex) coordinate(B) -- 
    	++(2.5ex, 0) -- ++(3.5ex, -3.5ex) -- ++(13ex, 0) coordinate(B);
    \draw[latex'-](Add.south west) ++ (0, 1.5ex) -- 
    	++(-2.5ex, 0) -- ++(-3.5ex, -3.5ex) -- ++(-13ex, 0) coordinate(C);
    \draw[-latex'](Add.east) -- (Add.east -| B) coordinate(D);
    \draw[latex'-](Split.west) -- (Add.west -| C) coordinate(E);
    \node[right = 0ex of D]{$x$};
    \node[right = 0ex of B]{$v$};
    \node[left = 0ex of E]{$y$};
    \node[left = 0ex of C]{$u$};
    \node[draw, dashed, fit = (Split) (Add) (z) (t), inner ysep = 1ex, inner xsep = 2ex](BB){};
    \node[above = 0ex of BB.north west, anchor = south west]{B};
\end{tikzpicture}
}
\qquad
\subfloat[Second step: composing $B$ and $\delay$ \label{fig:Bdelay}]
{
\begin{tikzpicture} 		
    \node[draw, minimum height = 6ex, minimum width = 6ex](BB){$\mathsf{B}$};
    \node[draw, minimum height = 6ex, minimum width = 6ex, right = 8ex of BB](Delay){$\mathsf{Delay}$};
    \node[below = 1ex of BB](t){};
     \node[right = 2ex of BB.north east, anchor = south west, inner sep = 0](x){$x$};
     
    \draw[-latex'](BB.north east) ++ (0, -1.5ex) coordinate(A) -- (A -| Delay.west);
    
    \draw[-latex'](BB.south east) ++ (0, 1.5ex) -- 
    	++(2.5ex, 0) -- ++(3.5ex, -3.5ex) -- ++(13ex, 0) coordinate(B);
    \draw[latex'-](Delay.south west) ++ (0, 1.5ex) -- 
    	++(-2.5ex, 0) -- ++(-3.5ex, -3.5ex) -- ++(-13ex, 0) coordinate(C);
    	
    \draw[-latex'](Delay.south east)++(0, 1.5ex)coordinate(X) -- (X -| B) coordinate(F);
    \draw[latex'-](BB.north west)++(0, -1.5ex) coordinate(X)  -- (X -| C) coordinate(E);
    \draw[latex'-](BB.south west)++(0, 1.5ex) coordinate(X)  -- (X -| C) coordinate(G);
    \node[right = 0ex of F]{$s'$};
    \node[right = 0ex of B]{$v$};
    \node[left = 0ex of G]{$u$};
    \node[left = 0ex of C]{$s$};
    \draw[-](Delay.north east)++(0, -1.5ex)coordinate(X) -- (X -| B) coordinate(D)
    	-- ++(0, 5ex) coordinate(X) -- (X -| C) -- (E);
    \node[right = 0ex of D]{$y$};
    \node[draw, dashed, fit = (Split) (Add) (x) (t), inner ysep = 1ex, inner xsep = 2ex](BB){};
\end{tikzpicture}
}
\caption{A different composition order for the example from Fig.~\ref{fig:sum}.}
\end{figure}

We make two remarks here. First, the final result of the algorithm is a triple with the 
same structure
as all elements on the original list: (input variables, output variables, function), where the function
represents the computation performed by the entire diagram. Therefore,
the algorithm can be applied recursively on HBDs.
	
Second, the variables in the representation
occur at most twice, once as input, and once as output. The variables
occurring only as inputs are the inputs of the resulting final element,
and variables occurring only as outputs are the outputs of the resulting
final element. This is true in general for all diagrams, due to the
representation of splitting of wires. This fact is essential for the
correctness of the algorithm as we will see in Section \ref{sec:algorithm}.

\section{An Abstract Algebra for Hierarchical Block Diagrams}
\label{sec:model}

We assume that we have a set of $\types$.
We also assume a set of {\em diagrams} $\dgr$. Every element $S\in\dgr$
has input type $t\in\types^{*}$ and output type $t'\in\types^{*}$.
If $t=t_{1}\cdots t_{n}$ and $t'=t'_{1}\cdots t'_{m}$,
then $S$ takes as input a tuple of the type $t_{1}\times\ldots\times t_{n}$
and produces as output a tuple of the type $t'_{1}\times\ldots\times t'_{m}$.
We denote the fact that $S$ has input type $t\in\types^{*}$ and
output type $t'\in\types^{*}$ by $S:t\circlearrow t'$. The elements
of $\dgr$ are abstract. 

\subsection{Operations of the Algebra of HBDs}

\paragraph{Constants.}
Basic blocks are modeled as constants on $\dgr$.
For types $t,t'\in\types^{*}$ we assume the following constants:
\[
\begin{array}{l}
\id(t):t\circlearrow t\\[1ex]
\dup(t):t\circlearrow t\cdot t\\[1ex]
\sink(t):t\circlearrow\epsilon\\[1ex]
\sw(t,t'):t\cdot t'\circlearrow t'\cdot t
\end{array}
\]
$\id$ corresponds to the identity block.
It copies the input into the output. In the model of constructive functions
$\id(t)$ is the identity function. $\dup(t)$ takes
an input $x$ of type $t$ and outputs $x\cdot x$ of type $t\cdot t$.
$\sink(t)$ returns the empty tuple $\epsilon$, for any input $x$ of type $t$.
$\sw(t,t')$ takes an input
$x\cdot x'$ with $x$ of type $t$ and $x'$ of type $t'$ and returns
$x'\cdot x$. In the model of constructive functions these diagrams
are total functions and they are defined as explained above. In the
abstract model, the behaviors of these constants is defined with a
set of axioms (see below).

\paragraph{Composition operators.}
For two diagrams $S:t\circlearrow t'$ and $S':t'\circlearrow t''$,
their \emph{serial composition}, denoted $S\comp S':t\circlearrow t''$
is a diagram that takes inputs of type $t$ and produces outputs of
type $t''$. In the model of constructive functions, the serial composition
corresponds to function composition ($S\comp S'=S'\circ S$). Please
note that in the abstract model we write the serial composition as
$S\comp S'$, while in the model of constructive functions the first
diagram that is applied to the input occurs second in the composition.

The \emph{parallel composition} of two diagrams $S:t\circlearrow t'$
and $S':r\circlearrow r'$, denoted $S\parallel S':t\cdot r\circlearrow t'\cdot r'$,
is a diagram that takes as input tuples of type $t\cdot r$ and
produces as output tuples of type $t'\cdot r'$. This parallel composition
corresponds to the parallel composition of constructive
functions.

Finally we introduce a \emph{feedback composition}. For $S:a\cdot t\circlearrow a\cdot t'$,
where $a\in\types$ is a single type, the feedback of $S$, denoted
$\fb(S):t\circlearrow t'$, is the result of connecting in feedback
the first output of $S$ to its first input. Again this feedback operation 
corresponds to the feedback of constructive functions.

We assume that parallel composition operator binds stronger than serial
composition, i.e. $S\parallel T\comp R$ is the same as $(S\parallel T)\comp R$.

Graphical diagrams can be represented as terms in the abstract algebra, as illustrated in Figure~\ref{fig:fb-property}. This figure depicts two diagrams, and their corresponding algebra terms. As it turns out, these two diagrams are equivalent, in the sense that their corresponding algebra terms can be shown to be equal using the axioms 
presented below.

\begin{figure}
\centering
\begin{tikzpicture}

\node[draw, minimum height = 5ex, minimum width = 4ex](S){$S$};
\node[draw, minimum height = 5ex, minimum width = 4ex, right = 3ex of S.east, anchor = south west](sw){};
\node[draw, minimum height = 5ex, minimum width = 4ex, right = 11ex of S](T){$T$};

\node[draw,inner xsep = 2ex, fit = (S) (T) (sw)](bx){};

\draw[-latex'](S.south east)++(0, 1ex) coordinate(a) -- (a -| T.west);
\draw[-latex'](S.north east)++(0, -1.5ex) coordinate(Y) -- (Y -| sw.west)coordinate(b);
\draw[-latex'](b) -- ++(4ex,3ex) coordinate(a) -- (a -| bx.east) -- ++(2ex, 0)
    -- ++(0, 3ex) coordinate (b) -- (b -| bx.west) -- ++(-2ex, 0)
    -- ++(0, -3ex) coordinate (c) -- (c -| sw.west) coordinate(X);

\draw[-latex'](X) -- (Y-|sw.east) coordinate(a) -- (a -| T.west);

\draw[-latex'](S.west) ++(-5ex, 0) coordinate(a) -- (S.west);
\draw[latex'-](T.east) ++(5ex, 0) coordinate(a) -- (T.east);


\node[draw, minimum height = 5ex, minimum width = 4ex, right = 45ex of S](S1){$S$};
\node[draw, minimum height = 5ex, minimum width = 4ex, right = 4ex of S1](T1){$T$};

\draw[-latex'](S1.west) ++(-3ex, 0) coordinate(a) -- (S1.west);
\draw[latex'-](T1.east) ++(3ex, 0) coordinate(a) -- (T1.east);

\draw[-latex'](S1.north east)++(0, -1ex) coordinate(a) -- (a -| T1.west);
\draw[-latex'](S1.south east)++(0, 1ex) coordinate(a) -- (a -| T1.west);
\node[below = .1cm of bx] (t1) {$\fb(\id(a)\parallel S\comp\sw(a,a)\parallel\id(t)\comp\id(a)\parallel T)$};
\node[below right = .24cm and -.25cm of S1.south east] (t2) {$S \comp T$};
\node[right = 4.5ex of t1]{$=$};
\end{tikzpicture}
\caption{ \label{fig:fb-property} 
 Two flat diagrams and their corresponding terms in the abstract algebra.}
\end{figure}
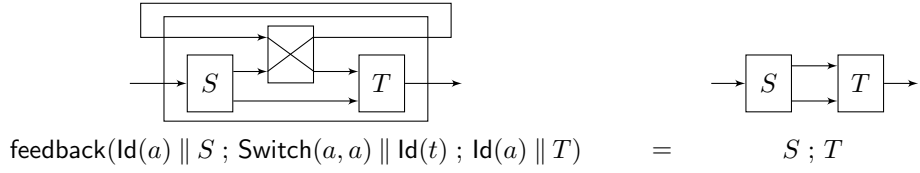

\subsection{Axioms of the Algebra of HBDs}
In the abstract algebra, the behavior of the constants and
composition operators is defined by a set of axioms, listed below:
\begin{enumerate}
\item \label{ax:id-comp} 
	$S:t\circlearrow t'\Longrightarrow\id(t)\comp S=S\comp\id(t')=S$
\\[-1ex]
\item \label{ax:comp-assoc}
$ S:t_{1}\circlearrow t_{2}\land T:t_{2}\circlearrow t_{3}\land R:t_{3}\circlearrow t_{4} \Longrightarrow S\comp(T\comp R)=(S\comp T)\comp R$
\\[-1ex]
\item \label{ax:id-par} 
   $\id(\epsilon)\parallel S=S\parallel\id(\epsilon)=S $
\\[-1ex]
\item \label{ax:par-assoc} 
   $S\parallel(T\parallel R)=(S\parallel T)\parallel R $
\\[-1ex]
\item \label{ax:par-comp} 
$ S:s\circlearrow s'\land S':s'\circlearrow s''\land T:t\circlearrow t'\land T':t'\circlearrow t'' $ \\
\mbox{}$\quad \Longrightarrow\ (S\parallel T)\comp(S'\parallel T')=(S\comp S')\parallel(T\comp T')$
\\[-1ex]
\item \label{ax:split-sink} 
	$\dup(t)\comp\sink(t)\parallel\id(t)=\id(t) $
\\[-1ex]
\item  \label{ax:split-sw} 
	$\dup(t)\comp\sw(t,t)=\dup(t) $
\\[-1ex]
\item \label{ax:split-assoc} 
	$\dup(t)\comp\id(t)\parallel\dup(t)=\dup(t)\comp\dup(t)\parallel\id(t) $
\\[-1ex]
\item \label{ax:sw-cat} 
	$\sw(t,t'\cdot t'')=\sw(t,t')\parallel\id(t'')\comp\id(t')\parallel\sw(t,t'') $
\\[-1ex]
\item \label{ax:sink-cat} 
	$\sink(t\cdot t')=\sink(t)\parallel\sink(t')$
\\[-1ex]
\item \label{ax:split-cat} 
	$\dup(t\cdot t')=\dup(t)\parallel\dup(t')\comp\id(t)\parallel\sw(t,t')\parallel\id(t')$
\\[-1ex]
\item \label{ax:sw-par} 
	$S:s\circlearrow s'\land T:t\circlearrow t'\Longrightarrow\sw(s,t)\comp T\parallel S\comp\sw(t',s')=S\parallel T $
\\[-1ex]
\item \label{ax:fb-sw} 
	$\fb(\sw(a,a))=\id(a) $
\\[-1ex]
\item \label{ax:fb-par} 
	$S:a\cdot s\circlearrow a\cdot t\Longrightarrow\fb(S\parallel T)=\fb(S)\parallel T $
\\[-1ex]
\item \label{ax:fb-comp} 
	$S:a\cdot s\circlearrow a\cdot t\land A:s'\circlearrow s\land B:t\circlearrow t' $ \\ $
\mbox{}\quad \Longrightarrow\ \fb(\id(a)\parallel A\comp S\comp\id(a)\parallel B)=A\comp\fb(S)\comp B$
\\[-1ex]
\item \label{ax:fb-comm} 
	$S:a\cdot b\cdot s\circlearrow a\cdot b\cdot t $\\$
\mbox{}\quad \Longrightarrow\ \fb^{2}(\sw(b,a)\parallel\id(s)\comp S\comp\sw(a,b)\parallel\id(t))=\fb^{2}(S)$
\end{enumerate}
Axioms (\ref{ax:id-comp}) and (\ref{ax:comp-assoc}) express the fact that the identity is the neutral
element for the serial composition, and the serial composition is associative.
Similarly, axioms (\ref{ax:id-par}) and (\ref{ax:par-assoc}) state that the identity of the empty
type is the neutral element for the parallel composition, and that parallel composition
is associative.

Axiom (\ref{ax:par-comp}) introduces a distributivity property of serial and parallel
compositions. Figure \ref{fig:axiom-5} represents graphically this
axiom.
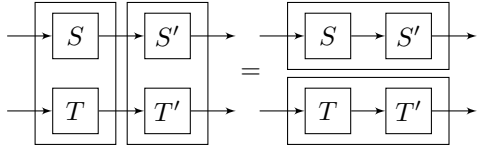
\begin{figure}
\centering
\begin{tikzpicture}
\node[draw,minimum width = 4ex,minimum height = 4ex](a){$S$};
\node[draw, below = 2.5ex of a, minimum width = 4ex,minimum height = 4ex](b){$T$};
\node[draw, inner xsep = 1.5ex, inner ysep = 0.9ex, fit = (a) (b)]{};
\node[draw, right = 4ex of a, minimum width = 4ex,minimum height = 4ex](c){$S'$};
\node[draw, below = 2.5ex of c, minimum width = 4ex,minimum height = 4ex](d){$T'$};
\node[draw, inner xsep = 1.5ex, inner ysep = 0.9ex, fit = (c) (d)](A){};
\draw[-latex'](a)--(c);
\draw[-latex'](b)--(d);
\draw[-latex'](c)--++(6ex,0ex);
\draw[-latex'](d)--++(6ex,0ex);
\draw[latex'-](a)--++(-6ex,0ex);
\draw[latex'-](b)--++(-6ex,0ex);

\node[draw,minimum width = 4ex,minimum height = 4ex, right = 10ex of c](aa){$S$};
\node[draw, below = 2.5ex of aa, minimum width = 4ex,minimum height = 4ex](bb){$T$};
\node[draw, right = 3ex of aa, minimum width = 4ex,minimum height = 4ex](cc){$S'$};
\node[draw, below = 2.5ex of cc, minimum width = 4ex,minimum height = 4ex](dd){$T'$};
\node[draw, inner xsep = 1.5ex, inner ysep = 0.9ex, fit = (aa) (cc)]{};
\node[draw, inner xsep = 1.5ex, inner ysep = 0.9ex, fit = (bb) (dd)]{};

\node[right = 2ex of A]{$=$};

\draw[-latex'](aa)--(cc);
\draw[-latex'](bb)--(dd);
\draw[-latex'](cc)--++(6ex,0ex);
\draw[-latex'](dd)--++(6ex,0ex);
\draw[latex'-](aa)--++(-6ex,0ex);
\draw[latex'-](bb)--++(-6ex,0ex);

\end{tikzpicture}
\caption{Axiom (\ref{ax:par-comp}) {Distributivity of serial and parallel compositions.}
\label{fig:axiom-5}}
\end{figure}

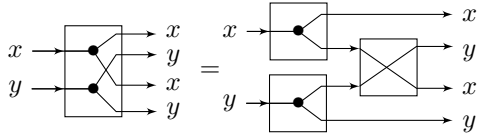
\begin{figure}
\centering
\begin{tikzpicture}
\node[inner ysep = 0ex](a){$x$};
\node[below = 2ex of a, inner ysep = 0ex](b){$y$};
\node[fit=(a) (b)](c){};
\node[draw, minimum width = 5ex,minimum height = 8ex, right = 2ex of c](d){};

\node[inner ysep = 0ex, right = 3ex of d.north east, anchor = north west](e){$x$};
\node[below = 1ex of e, inner ysep = 0ex](f){$y$};
\node[below = 1ex of f, inner ysep = 0ex](g){$x$};
\node[below = 1ex of g, inner ysep = 0ex](h){$y$};

\draw[-latex'](a) -- (a -| d.west);
\draw[-latex'](a-|d.west) -- (a -| d) node[inner sep = 0]{$\bullet$} -- ++(2ex,1.5ex) coordinate(A)-- (A-|e.west);
\draw[-latex'](a) -- (a -| d) -- ++(2ex,-3ex) coordinate(B)-- (B-|g.west);

\draw[-latex'](b) -- (b -| d.west);
\draw[-latex'](b-|d.west) -- (b -| d) node[inner sep = 0]{$\bullet$} -- ++(2ex,3ex) coordinate(C) -- (C-|f.west);
\draw[-latex'](b) -- (b -| d) -- ++(2ex,-2ex) coordinate(D)-- (D-|h.west); 

\node[inner ysep = 0ex, right = 2ex of e](a1){$x$};
\node[below = 5ex of a1, inner ysep = 0ex](b1){$y$};

\node[right = 6ex of d]{$=$};

\node[draw, right = 2ex of a1, minimum width = 5ex, minimum height = 5ex](dx){};

\node[inner ysep = 0.6ex, right = 11ex of dx.north east, anchor = north west](e1){$x$};
\node[below = 1.4ex of e1, inner ysep = 0ex](f1){$y$};
\node[below = 1.8ex of f1, inner ysep = 0ex](g1){$x$};
\node[below = 1.8ex of g1, inner ysep = 0ex](h1){$y$};

\draw[-latex'](a1) -- (a1 -| dx.west);
\draw[-latex'](a1 -| dx.west) -- (a1 -| dx) node[inner sep = 0]{$\bullet$}-- ++(2ex, 1.5ex) coordinate(A) -- (A -| e1.west);
\draw[-latex'](b1) -- (b1 -| dx.west);
\draw[-latex'](b1 -| dx.west) -- (b1 -| dx) node[inner sep = 0]{$\bullet$}-- ++(2ex, -1.5ex) coordinate(A) -- (A -| g1.west);

\node[draw, right = 2ex of b1, minimum width = 5ex, minimum height = 5ex](dy){};

\node[fit=(dx)(dy)](dd){};

\node[draw, right = 2ex of dd, minimum width = 5ex, minimum height = 5ex](sw){};

\draw[-latex'](a1) -- (a1 -| dx) -- ++(2ex, -1.5ex)coordinate(A)--  (A -| sw.west) coordinate(B);
\draw[-latex'](B) -- ++(5ex,-3.5ex) coordinate(A) -- (A -| f1.west);

\draw[-latex'](b1) -- (b1 -| dx) -- ++(2ex, 1.5ex) coordinate(A)--  (A -| sw.west) coordinate(B);
\draw[-latex'](B) -- ++(5ex,3.5ex) coordinate(A) -- (A -| h1.west);
\end{tikzpicture}
\caption{Axiom (\ref{ax:split-cat}) Split switch.\label{fig:axiom-11}}
\end{figure}

Axioms (\ref{ax:split-sink}) -- (\ref{ax:split-cat}) express the properties of $\dup$, $\sink$, and
$\sw$. For example Axiom (\ref{ax:split-cat}), represented in Figure \ref{fig:axiom-11},
says that if we duplicate $x\cdot y$ of type $t\cdot t'$, then this
is equivalent to duplicate $x$ and $y$ in parallel, and then switch
the middle $x$ and $y$.

Axiom (\ref{ax:sw-par}) says that switching the inputs and outputs of $T\parallel S$
is equal to $S\parallel T$.

\begin{figure}
\centering
\begin{tikzpicture}
\node[draw, minimum height = 5ex, minimum width = 4ex](sw){};
\draw[-latex'](sw.south west)++(0, 1ex) coordinate(X) -- ++(4ex, 3ex) -- ++(2ex,0) 
	-- ++(0, 3ex)coordinate (a) -- (a -| sw.west) -- ++ (-2ex, 0) 
    -- ++(0, -3ex) coordinate (b) -- (b -| sw.west) coordinate (c);
\draw[-latex'](c) -- ++(4ex, -3ex) -- ++(3ex, 0) node[anchor = west]{$v$};
\draw[latex'-](X) -- ++(-3ex,0) node[anchor = east]{$u$};

\node[right = 5.5ex of sw]{$=$};
\node[right = 9ex of sw](u){$u$};
\draw[-latex'](u)--+(5ex, 0) node[anchor=west]{$v$};

\end{tikzpicture}
\caption{Axiom (\ref{ax:fb-sw}) Feedback of switch. \label{fig:axiom-13}}
\end{figure}
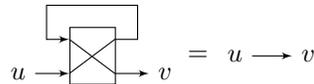

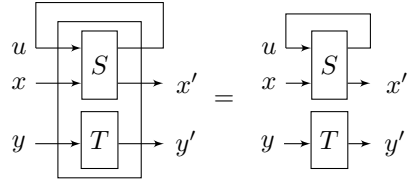
\begin{figure}
\centering
\begin{tikzpicture}
\node[inner ysep = 1ex](u){$u$};
\node[inner ysep = 0ex, below = 1ex of u](x){$x$};
\node[inner ysep = 0ex, below = 4ex of x](y){$y$};
\node[draw, minimum height = 6ex, minimum width = 3ex, right = 4ex of u.north east, anchor=north west](S){$S$};
\node[draw,below = 1ex of S, minimum height = 5ex, minimum width = 3ex](T){$T$};
\node[draw, inner xsep = 2ex,fit=(S)(T)](c){};

\node[inner ysep = 0ex, right = 11.5ex of x](x1){$x'$};
\node[inner ysep = 0ex, right = 11.5ex of y](y1){$y'$};

\draw[-latex'](u.east)coordinate(A) -- (u-|S.west);
\draw [-](u -| S.east) -- ++(4ex,0) -- ++ (0, 4ex) coordinate(B) -- (B -| A) -- (A);

\draw[-latex'](x.east) -- (x-|S.west);
\draw[-latex'](y.east) -- (y-|T.west);

\draw[-latex'](x-|S.east) -- ++(4ex,0);
\draw[-latex'](y-|T.east) -- ++(4ex,0);

\node[right = 5.5ex of c]{$=$};

\node[inner ysep = 1ex, right = 19ex of u](u2){$u$};
\node[inner ysep = 0ex, right = 19ex of x](x2){$x$};
\node[inner ysep = 0ex, right = 19ex of y](y2){$y$};

\node[draw, minimum height = 6ex, minimum width = 3ex, right = 17ex of S](S1){$S$};
\node[draw,below = 1ex of S1, minimum height = 5ex, minimum width = 3ex](T1){$T$};

\draw[-latex'](u2.east)coordinate(A) -- (u2-|S1.west);
\draw [-](u2 -| S1.east) -- ++(2ex,0) -- ++ (0, 3ex) coordinate(B) -- (B -| A) -- (A);

\draw[-latex'](x2.east) -- (x2-|S1.west);
\draw[-latex'](y2.east) -- (y2-|T1.west);

\draw[-latex'](x2-|S1.east) -- ++(2ex,0);
\draw[-latex'](y2-|T1.east) -- ++(2ex,0);

\node[inner ysep = 0ex, right = 8ex of x2](x3){$x'$};
\node[inner ysep = 0ex, right = 8ex of y2](y3){$y'$};

\end{tikzpicture}
\caption{Axiom (\ref{ax:fb-par}) Feedback of parallel.} \label{fig:axiom-14}
\end{figure}

\begin{figure}
\centering
\begin{tikzpicture}
\node[inner ysep = 1ex](u){$u$};
\node[inner ysep = 0ex, below = 2ex of u](x){$x$};

\node[draw, minimum height = 8ex, minimum width = 3ex, right = 9.5ex of u.north east, anchor=north west](S){$S$};
\node[draw, left = 2ex of S.south west, anchor = south east, minimum height = 5ex, minimum width = 3ex](A){$A$};
\node[draw, right = 2ex of S.south east, anchor = south west, minimum height = 5ex, minimum width = 3ex](B){$B$};
\node[draw, inner xsep = 1.5ex, fit = (A) (B) (S)](a){};

\node[inner ysep = 0ex, right = 3.5ex of B](x1){$y$};

\draw[-latex'] (u -| S.east) -- (u -| a.east) -- ++(1.5ex,0) -- ++(0,4ex) coordinate(b) 
    -- (b -| u.east) -- (u.east) -- (u -| S.west);

\draw[-latex'] (x.east) -- (x -| A.west);
\draw[-latex'] (A.east) -- (A -| S.west);
\draw[-latex'] (B.west-|S.east) -- (B.west);
\draw[-latex'] (B.east) -- (x1.west);

\node[right = 11ex of S, anchor = south west]{$=$};

\node[draw, minimum height = 8ex, minimum width = 3ex, right = 25ex of S](S1){$S$};
\node[draw, left = 2ex of S1.south west, anchor = south east, minimum height = 5ex, minimum width = 3ex](A1){$A$};
\node[draw, right = 2ex of S1.south east, anchor = south west, minimum height = 5ex, minimum width = 3ex](B1){$B$};

\node[inner ysep = 0ex, left = 2.5ex of A1](x2){$x$};
\node[inner ysep = 0ex, right = 2.5ex of B1](x3){$y$};

\draw[-latex'](S1.north east) ++(0,-2ex) -- ++ (1.5ex,0) -- ++(0, 3.5ex) 
 coordinate (A) -- (A -| S1.west) -- ++(-2ex, 0) 
    -- ++(0, -3.5ex) coordinate(X) -- (X -| S1.west);

\draw[-latex'] (x2.east) -- (x2 -| A1.west);
\draw[-latex'] (A1.east) -- (A1 -| S1.west);
\draw[-latex'] (B1.west-|S1.east) -- (B1.west);
\draw[-latex'] (B1.east) -- (x3.west);
\end{tikzpicture}
\caption{Axiom (\ref{ax:fb-comp}) Feedback of serial.\label{fig:axiom-15}}
\end{figure}
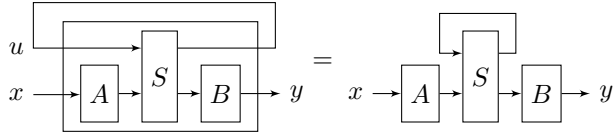

\begin{figure}
\centering
\begin{tikzpicture}

\node[draw, minimum height = 8ex, minimum width = 4ex](S){$S$};
\node[draw, left = 3ex of S.north west, anchor = north east, minimum height = 5ex, minimum width = 4ex](A){};
\node[draw, right = 3ex of S.north east, anchor = north west, minimum height = 5ex, minimum width = 4ex](B){};
\node[draw, inner xsep = 2ex, fit = (A) (B) (S)](aa){};

\draw[-latex'](A.north west)++(0,-1ex)coordinate(X)--++(4ex, -3ex) coordinate(a) -- (a -| S.west);
\draw[-latex'](A.south west)++(0,1ex)coordinate(Y)--++(4ex, 3ex) coordinate(a) -- (a -| S.west);

\draw[-](B.north east)++(0,-1ex)--++(-4ex, -3ex) coordinate(a1);
\draw[-latex'] (a1 -| S.east) -- (a1);
\draw[-](B.south east)++(0,1ex)--++(-4ex, 3ex) coordinate(a2);
\draw[-latex'] (a2 -| S.east) -- (a2);

\draw[-latex'](B.north east)++(0,-1ex) -- ++(4ex, 0) -- ++(0ex, 3ex)coordinate(a) -- (a -| aa.west)
    -- ++(-2ex, 0)coordinate(b) -- (X -| b) -- (X);

\draw[-latex'](B.south east)++(0,1ex) -- ++(5ex, 0) -- ++(0ex, 7ex)coordinate(a) -- (a -| aa.west)
    -- ++(-3ex, 0)coordinate(b) -- (Y -| b) -- (Y);

\draw[-latex'](aa.south west)++(-3ex,2ex) coordinate(a) -- (a -| S.west);
\draw[latex'-](aa.south east)++(3ex,2ex) coordinate(a) -- (a -| S.east);


\node[right  = 14ex of S](eq){$=$};

\node[draw, minimum height = 8ex, minimum width = 4ex, right  = 22ex of S](S1){$S$};
\draw[-latex'](S1.north east)++(0,-1ex) -- ++(2ex, 0) -- ++(0ex, 2ex)coordinate(a) -- (a -| S1.west)
    -- ++(-2ex, 0)coordinate(b) -- ++(0, -2ex) coordinate(b)-- (b -| S1.west);

\draw[-latex'](S1.north east)++(0,-3.8ex) -- ++(3ex, 0) -- ++(0ex, 6ex)coordinate(a) -- (a -| S1.west)
    -- ++(-3ex, 0)coordinate(b) -- ++(0, -5.8ex) coordinate(b)-- (b -| S1.west);

\draw[-latex'](S1.south west)++(-4ex,1.5ex) coordinate(a) -- (a -| S1.west);
\draw[latex'-](S1.south east)++(4ex,1.5ex) coordinate(a) -- (a -| S1.east);
\end{tikzpicture}
\caption{Axiom (\ref{ax:fb-comm}) Feedback of switched inputs/outputs.\label{fig:axiom-16}}

\end{figure}
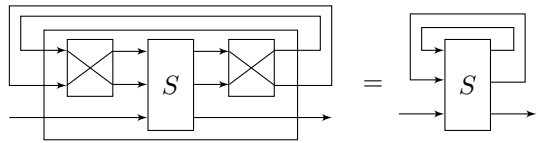

Axioms (\ref{ax:fb-sw}) -- (\ref{ax:fb-comm}) are about the feedback operator. Axiom (\ref{ax:fb-sw}), represented
in Figure \ref{fig:axiom-13}, states that feedback of switch is the identity.
Axiom (\ref{ax:fb-par}), represented in Figure \ref{fig:axiom-14}, states that feedback
of the parallel composition of $S$ and $T$ is the same as the parallel
composition of the feedback of $S$ and $T$. 
Axiom (\ref{ax:fb-comp}), Figure \ref{fig:axiom-15}, states that components $A$
and $B$ can be taken out of the feedback operation. 
Finally, Axiom (\ref{ax:fb-comm}) represented in Figure \ref{fig:axiom-16}, states
that the order in which we apply the feedback operations does not
change the result.

These axioms are equivalent to a subset of the axioms of algebra of flownomials \cite{Stefanescu:2000:NA:518304}, 
which implies that all models of flownomials are also models of our algebra. In \cite{hildebrandt2004}, a relational model for dataflow is introduced. This model is also based on a set of axioms on feedback, serial and parallel compositions, but \cite{hildebrandt2004} does not use the split constant. Our axioms that are not involving split are equivalent to the axioms used in \cite{hildebrandt2004}. The focus of
\cite{hildebrandt2004} is the construction of a relational model for the axioms.

The following theorem provides a concrete semantic domain for HBDs.

\begin{theorem}
Constructive functions with the operations defined in Section \ref{sec:prelim} 
are a model for axioms (\ref{ax:id-comp}) -- (\ref{ax:fb-comm}).
\end{theorem}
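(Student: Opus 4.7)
The plan is to verify each of the sixteen axioms by direct calculation, unfolding the concrete definitions of the operations on constructive functions. The axioms split naturally into three groups of increasing difficulty, and I would work through them in that order, drawing on standard cpo-theoretic reasoning (monotonicity, least fixpoints on pointed complete partial orders) where needed.

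The first group consists of axioms (\ref{ax:id-comp})--(\ref{ax:par-assoc}) and (\ref{ax:par-comp}), which involve only serial and parallel composition with the identity. Since $\id$ is defined as the identity function, $S\comp S'$ as ordinary function composition (reversed), and $S\parallel T$ as the pairing $(x,y)\mapsto(S(x),T(y))$, each of these equalities reduces to evaluating both sides on an arbitrary input tuple and observing they agree by definition. Axiom (\ref{ax:par-comp}), the distributivity $(S\parallel T)\comp(S'\parallel T')=(S\comp S')\parallel(T\comp T')$, follows because the pairing commutes with componentwise application.

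The second group, axioms (\ref{ax:split-sink})--(\ref{ax:sw-par}), describes how $\dup$, $\sink$, and $\sw$ interact. These constants are essentially total tuple-manipulation functions (duplication, erasure, permutation), lifted to the bottom-adjoined cpo in the obvious monotone way. Each axiom therefore reduces to tuple-level bookkeeping: fix an input $(x_1,\ldots,x_n)$, simplify the left- and right-hand sides by repeated application of the definitions of $\dup$, $\sink$, $\sw$, $\parallel$, $\comp$, and compare the resulting output tuples componentwise.

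The last group, axioms (\ref{ax:fb-sw})--(\ref{ax:fb-comm}), concerns the feedback operator and is where the main obstacle lies, since $\fb$ is defined via a least fixpoint and these axioms require nontrivial fixpoint manipulation. For (\ref{ax:fb-sw}), the first component of $\sw(a,a)(x,y)$ is $y$, so the least fixpoint in $x$ of the constant map $x\mapsto y$ is $y$, giving $\id(a)$. For (\ref{ax:fb-par}), the first component of $(S\parallel T)$ depends only on the first argument, so $T$ factors out of the fixpoint and is applied in parallel afterwards. Axiom (\ref{ax:fb-comp}) requires a fixpoint fusion argument: the least fixpoint in $x$ of $(\id(a)\parallel A\comp S\comp\id(a)\parallel B)_1(x,y)=S_1(x,A(y))$ coincides with the fixpoint appearing in $A\comp\fb(S)\comp B$ once $A$ is applied to the external input and $B$ to the result. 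The subtlest case is (\ref{ax:fb-comm}), commutation of two feedbacks: I would invoke a Bekic-style lemma showing that the simultaneous least fixpoint of a pair of monotone functions can be computed in either order, which makes $\fb^2$ invariant under the switches conjugating the input/output order. All of these fixpoint manipulations rely only on monotonicity of the composition operators and the standard properties of least fixpoints on pointed cpos, and are exactly the kind of reasoning well-supported by Isabelle's fixpoint libraries.
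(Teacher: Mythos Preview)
Your proposal is correct and follows the natural route: verify each axiom by unfolding the concrete definitions, with the feedback axioms handled via standard least-fixpoint reasoning (in particular a Beki\'c-style argument for axiom~(\ref{ax:fb-comm})). The paper itself omits this proof entirely, deferring it to the Isabelle formalization; but an axiom-by-axiom check of the kind you outline is exactly what such a formalization carries out, so your plan matches it in substance.
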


We remark that constructive functions are only one example of a model
for axioms (\ref{ax:id-comp}) -- (\ref{ax:fb-comm}), and by no means
the only model. As mentioned above, 
all models of flownomials are also models of our algebra.
In particular, relations are a model of flownomials and therefore
also a model for axioms (\ref{ax:id-comp}) -- (\ref{ax:fb-comm})
\cite{Stefanescu:2000:NA:518304}.

\section{The Abstract Translation Algorithm and its Determinacy}
\label{sec:algorithm}

\subsection{Diagrams with Named Inputs and Outputs}
\label{subsec:diagram}

The algorithm works by first transforming the graph of a HBD into
a list of basic components with named inputs and outputs as explained
in Section \ref{sec:overview}. For this purpose we
assume a set of names or variables $\var$ and a function 
$\tv:\var\to\types$.
For $v\in\var$, $\tv(v)$ is the type of variable $v$. We extend
$\tv$ to lists of variables by 
$\tv(v_1,\ldots, v_n) = (T(v_1),\ldots,T(v_n))$.
\begin{definition}
A \emph{diagram with named inputs and outputs} or \emph{io-diagram}
for short is a tuple $(\mathit{in},\mathit{out}, S)$ such that $\mathit{in},\mathit{out}\in\var^{*}$
are lists of distinct variables, and $S:\tv(\mathit{in})\circlearrow\tv(\mathit{out})$.
\end{definition}

In what follows we use the symbols $A,A',B,\ldots$ to denote io-diagrams,
and $\In(A)$, $\Out(A)$, and $\dg(A)$ to denote the input variables,
the output variables, and the diagram of $A$, respectively.

\begin{definition}
For io-diagrams $A$ and $B$, we define $\Var(A,B)=\Out(A)\linter\In(B)\in\var^{*}$.
\end{definition}
$\Var(A,B)$ is the list of common variables that are output of $A$ and input
of $B$, in the order occurring in $\Out(A)$. We use $\Var(A,B)$ later to
connect for example in series $A$ and $B$ on these common variables, as we did
for constructing $A$ from $\add$ and $\delay$ in Section~\ref{sec:overview}.

\subsection{General Switch Diagrams}
\label{subsec:switch}

We compose diagrams when their types are matching, and we compose
io-diagrams based on matching names of input output variables. For
example if we have two io-diagrams $A$ and $B$ with $\Out(A)=u\cdot v$
and $\In(B)=v\cdot u$, then we can compose in series $A$ and $B$
by switching the output of $A$ and feeding it into $B$, i.e.,  ($A\comp\sw(\tv(u),\tv(v))\comp B$). 

In general, for two lists of variables $x=(x_{1}\cdots x_{n})$ and $y=(y_{1}\cdots y_{k})$
we define a \emph{general switch diagram} $[x_{1}\cdots x_{n}\leadsto y_{1}\cdots y_{k}]:\tv(x_{1}\cdots x_{n})\circlearrow\tv(y_{1}\cdots y_{k})$.
Intuitively this diagram takes as input a list of values of type $\tv(x_{1}\cdots x_{n})$
and outputs a list of values of type $\tv(y_{1}\cdots y_{k})$, where
the output value corresponding to variable $y_{j}$ is equal to the
value corresponding to the first $x_{i}$ with $x_{i}=y_{j}$ and it
is arbitrary (unknown) if there is no such $x_{i}$. For example in
the constructive functions model $[u,v\leadsto v,u,w,u]$ for input $(a, b)$
outputs $(b,a,\bot,a)$. 

To define $[\_\leadsto\_]$ we use $\dup$, $\sink$, and $\sw$,
but we need also an additional diagram that outputs an arbitrary (or
unknown) value for an empty input. For $a\in\types$, we define $\arb(a):\epsilon\circlearrow a$
by 
\[
\arb(a)=\fb(\dup(a))
\]
The diagram $\arb$ is represented in Figure \ref{fig:arb}.

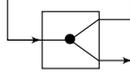
\begin{figure}
\centering
\begin{tikzpicture}
\node[draw, minimum height = 5ex, minimum width = 5ex](sw){};
\draw[-latex'](sw.west)++(-3ex, 0ex) coordinate(X) -- ++(5.5ex,0) node{$\bullet$} coordinate(Z) -- ++(2.5ex, 1.8ex) --
	++(3ex,0) -- ++(0, 2.2ex) coordinate (Y) -- (Y -| X) -- (X) -- (sw.west);
\draw[-latex'] (Z) -- ++(2.5ex, -1.8ex) -- ++(3ex,0);
\end{tikzpicture}
\caption{The diagram $\arb$. \label{fig:arb}}
\end{figure}

We define now $[x\leadsto y]:\tv(x)\circlearrow\tv(y)$ in two steps.
First for $x\in\var^{*}$ and $u\in\var$, the diagram $[x\leadsto u]:\tv(x)\circlearrow\tv(u)$,
for input $a_{1},\ldots,a_{n}$ it outputs the value
$a_{i}$ where $i$ is the first index such that $x_{i}=u$. Otherwise
it outputs an arbitrary (unknown) value. 
\[
\begin{array}{lllc}
[\epsilon\leadsto u] & = & \arb(\tv(u))\\[1ex]
[u\cdot x\leadsto u] & = & \id(\tv(u))\parallel\sink(\tv(x))\\[1ex]
[v\cdot x\leadsto u] & = & \sink(\tv(v))\parallel[x\leadsto u] & \ (\mbox{if}\ u\not=v)
\end{array}
\]
and
\[
\begin{array}{lll}
[x\leadsto\epsilon] & = & \sink(\tv(x))\\[1ex]
[x\leadsto u\cdot y] & = & \dup(\tv(x))\comp([x\leadsto u]\parallel[x\leadsto y])
\end{array}
\]

\subsection{Basic Operations of the Abstract Translation Algorithm}
\label{subsec:operations}

The algorithm starts with a list of io-diagrams and repeatedly
applies operations until it reduces the list to only one io-diagram.
These operations are the extensions of serial, parallel and feedback
from diagrams to io-diagrams.
\begin{definition}
The named serial composition of two io-diagrams $A$ and $B$, denoted
$A\Comp B$ is defined by $A\Comp B=(\mathit{in},\mathit{out}, S)$,
where $x=\In(B)\ominus\Var(A,B)$, $y=\Out(A)\ominus \Var(A,B)$, $\mathit{in}=\In(A)\lunion x$,
$\mathit{out=y\cdot\Out(B)}$ and 
\[
S=[\mathit{in}\leadsto\In(A)\cdot x]\comp\dg(A)\parallel[x\leadsto x]\comp[\Out(A)\cdot x\leadsto y\cdot\In(B)]\comp[y\leadsto y]\parallel\dg(B)
\]
\end{definition}
The construction of $A$ from Section \ref{sec:overview} can be obtained by
applying the named serial composition to $\add$ and $\delay$. 

Figure \ref{fig:named-serial} illustrates an example of the named
serial composition. In this case we have $\Var(A,B)=u$, $x=(a,b)$,
$y=(v,w)$, $\mathit{in}=(a,c,b)$, and $\mathit{out}=(v,w,d,e)$.
The component $A$ has outputs $u,v,w$, and $u$ is also input of
$B$. Variable $u$ is the only variable that is output of $A$ and
input of $B$. Because the outputs $v,w$ of $A$ are not inputs of
$B$ they become outputs of $A\Comp B$. Variable $a$ is input for
both $A$ and $B$, so in $A\Comp B$ the value of $a$ is split and
fed into both $A$ and $B$. The diagram for this example is:
$$
[a,c,b\leadsto a,c,a,b]\comp A \parallel \id(\tv(a,b))\comp 
 [u,v,w,a,b\leadsto v,w,a,u,b] \comp \id(\tv(v,w))\parallel B.
$$

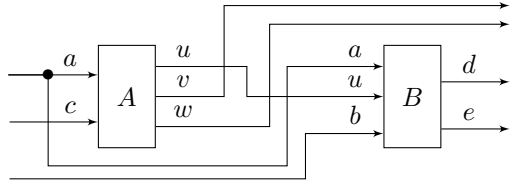
\begin{figure}
\begin{centering}
\begin{tikzpicture}
\node[draw, minimum height = 9ex, minimum width = 5ex](A){$A$};
\node[right = 1ex of A.east, anchor = south west](v){$v$};
\node[above = 0ex of v](u){$u$};
\node[below = 0ex of v](w){$w$};

\node[draw, minimum height = 9ex, minimum width = 5ex, right = 20ex of A](B){$B$};
\node[left = 1ex of B.west, anchor = south east](u1){$u$};

\node [left = 1ex of A.north west, anchor = north east](u2){$a$};
\node [below = 1.5ex of u2](v1){$c$};

\node[above = 0ex of u1](u3){$a$};
\node[below = 0ex of u1](w1){$b$};

\node [right = 1ex of B.north east, anchor = north west](d){$d$};
\node [below = 1.5ex of d](e){$e$};

\draw[-latex'](d.south west -| B.east) -- ++(6ex,0) coordinate (U);
\draw[-latex'](e.south west -| B.east) -- ++(6ex,0);

\draw[-latex'] (u.south west -| A.east) -- ++(8ex,0) coordinate (X) -- (u1.south west -| X)
-- (u1.south west -| B.west);

\draw[-latex'] (v.south west -| A.east) -- ++(6ex,0) -- ++(0ex, 8ex) coordinate (T) 
	-- (T -| U);
\draw[-latex'] (w.south west -| A.east) -- ++(10ex,0) -- ++(0ex, 9ex) coordinate (T) 
	-- (T -| U);

\draw[-latex'](u2.south west) ++ (-4ex, 0) coordinate (X) -- (X -| A.west);
\draw[-latex'] (X) -- ++(3.5ex,0)node{$\bullet$} -- ++(0,-8ex) -- ++(21ex,0) coordinate (Y) 
	-- (u3.south-|Y) -- (u3.south -| B.west);
\draw[-latex'](v1.south west) ++ (-4ex, 0) -- (v1.south west -| A.west);

\draw[-latex'](v1.south west) ++ (-4ex, -5ex) -- ++(26ex,0) coordinate (X)
	-- (w1.south-|X) -- (w1.south -| B.west);
\end{tikzpicture}
\par\end{centering}
\caption{Example of named serial composition. \label{fig:named-serial}}
\end{figure}

The result of the named serial composition of two io-diagrams is not
always an io-diagram. The problem is that the outputs of $A\Comp B$
are not distinct in general. The next lemma gives sufficient conditions
for $A\Comp B$ to be an io-diagram.
\begin{lemma}
If $A,B$ are io-diagrams and $(\Out(A)\ominus\In(B))\linter\Out(B)=\epsilon$
then $A\Comp B$ is an io-diagram. In particular if $\Out(A)\linter\Out(B)=\epsilon$
then $A\Comp B$ is an io-diagram.
\end{lemma}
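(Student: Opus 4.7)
I would prove the lemma by checking, for the triple $A\Comp B=(\mathit{in},\mathit{out},S)$, each of the three requirements for being an io-diagram: that $\mathit{in}$ is a list of distinct variables, that $\mathit{out}$ is a list of distinct variables, and that $S$ has matching input/output types. Of these, only the second requires the hypothesis; the other two should reduce to routine manipulation of the list operations $\linter$, $\ominus$, $\lunion$.

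For the input side, $\mathit{in}=\In(A)\lunion x=\In(A)\cdot(x\ominus\In(A))$ where $x=\In(B)\ominus\Var(A,B)$. Since $A$ is an io-diagram, $\In(A)$ is distinct; since $B$ is an io-diagram, $\In(B)$ is distinct, hence so is its sublist $x$, hence so is $x\ominus\In(A)$. By construction of $\lunion$ the two concatenated pieces share no common element, so $\mathit{in}$ is distinct.

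For the output side, let $y=\Out(A)\ominus\Var(A,B)=\Out(A)\ominus(\Out(A)\linter\In(B))$. My first step here is to observe that $\set(y)=\set(\Out(A)\ominus\In(B))$ (in fact the lists are equal, since membership in $\Out(A)\linter\In(B)$ is equivalent, for an element of $\Out(A)$, to membership in $\set(\In(B))$). Then $y$ is a sublist of $\Out(A)$, hence distinct. The hypothesis $(\Out(A)\ominus\In(B))\linter\Out(B)=\epsilon$ translates to $\set(y)\cap\set(\Out(B))=\emptyset$, which is exactly what is needed for $\mathit{out}=y\cdot\Out(B)$ to be distinct. The second ``in particular'' claim is then immediate: if $\Out(A)\linter\Out(B)=\epsilon$ then no element of $\Out(A)$ lies in $\Out(B)$, and a fortiori no element of the sublist $\Out(A)\ominus\In(B)$ does either.

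Finally, for the type of $S$, I would just chain through the four pieces of the composition, using the typing of the general switch diagram $[\_\leadsto\_]$ from Section~\ref{subsec:switch} and the facts that $\dg(A):\tv(\In(A))\circlearrow\tv(\Out(A))$ and $\dg(B):\tv(\In(B))\circlearrow\tv(\Out(B))$: the type chain is $\tv(\mathit{in})\to\tv(\In(A)\cdot x)\to\tv(\Out(A)\cdot x)\to\tv(y\cdot\In(B))\to\tv(y\cdot\Out(B))=\tv(\mathit{out})$. The only thing to be careful about is that $[\mathit{in}\leadsto\In(A)\cdot x]$ makes sense as typed, which requires that every variable of $\In(A)\cdot x$ occurs in $\mathit{in}$; this holds because $\In(A)\subseteq\set(\mathit{in})$ by construction, and $x\subseteq\set(\mathit{in})$ since $\mathit{in}$ absorbs $x$ via $\lunion$. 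The main (and really only) obstacle is bookkeeping around the equality $\set(y)=\set(\Out(A)\ominus\In(B))$, which turns the hypothesis into the usable form of disjointness between $y$ and $\Out(B)$; everything else is direct from the definitions.
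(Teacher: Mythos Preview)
Your proof is correct. The paper itself omits the proof of this lemma (deferring to the Isabelle formalization), so there is nothing to compare against directly; your direct verification of the three io-diagram conditions---distinctness of $\mathit{in}$, distinctness of $\mathit{out}$, and well-typedness of $S$---is the natural and essentially only approach, and matches what the mechanized proof must do.

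One small remark: your caution that ``$[\mathit{in}\leadsto\In(A)\cdot x]$ makes sense as typed'' only if every variable of $\In(A)\cdot x$ occurs in $\mathit{in}$ is unnecessary. As defined in Section~\ref{subsec:switch}, the general switch $[x\leadsto y]$ has type $\tv(x)\circlearrow\tv(y)$ for \emph{arbitrary} lists $x,y$, using $\arb$ for variables absent from $x$. So the type chain goes through regardless. (Your observation that the inclusion does hold is true and harmless, just not required for the argument.)
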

The named serial composition is associative, expressed by the next lemma.
\begin{lemma}
If $A,B,C$ are io-diagrams such that $(\Out(A)\ominus\In(B))\linter\Out(B)=\epsilon$
and $(\Out(A) \linter \In(B))\linter \In(C) = \epsilon$ then
$$
(A \Comp B) \Comp C = A \Comp (B \Comp C)
$$
\end{lemma}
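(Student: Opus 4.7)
The plan is to establish equality of io-diagrams as triples, which reduces to three independent equalities: input lists, output lists, and underlying abstract diagrams. I would unfold the definition of $\Comp$ twice on each side and prove each component separately.

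For the input list equality, I would simplify using list identities (notably $\In(B) \ominus (\Out(A) \linter \In(B)) = \In(B) \ominus \Out(A)$ and distributivity of $\ominus$ over $\cdot$ and $\lunion$). The two sides differ only in a term of the form $\In(C) \ominus Z$, where $Z$ varies between $(\Out(A) \ominus \In(B)) \cdot \Out(B)$ on one side and essentially $\Out(B) \cdot \Out(A)$ (modulo $\ominus$ rearrangement) on the other; they become equal precisely when the hypothesis $(\Out(A) \linter \In(B)) \linter \In(C) = \epsilon$ rules out elements of $\In(C)$ that lie in $\Out(A) \cap \In(B)$. Symmetrically, for the output list equality, the two sides reduce to a common list $(\Out(A) \ominus (\In(B) \cdot \In(C))) \cdot (\Out(B) \ominus \In(C)) \cdot \Out(C)$ once the hypothesis $(\Out(A) \ominus \In(B)) \linter \Out(B) = \epsilon$ rules out elements of $\Out(A) \cap \Out(B)$ that escape $\In(B)$.

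For the equality of the underlying diagrams, I would first develop a small calculus for the general switch diagrams $[\_\leadsto\_]$. The central rules I would prove from axioms (\ref{ax:id-comp})--(\ref{ax:fb-comm}) are: the composition law $[x \leadsto y] \comp [y \leadsto z] = [x \leadsto z]$ under suitable well-formedness conditions, the simplification $[x \leadsto x] = \id(\tv(x))$, and laws distributing $[\_\leadsto\_]$ over $\parallel$ and commuting it past an $\id(t) \parallel \dg(\cdot)$ factor. Combined with axioms (\ref{ax:id-comp})--(\ref{ax:par-comp}) for associativity of $\comp$ and $\parallel$ and for the interchange law, both associations can be rewritten into a common normal form of the shape $[\mathit{in} \leadsto \cdots] \comp (\dg(A) \parallel \id) \comp [\cdots] \comp (\dg(B) \parallel \id) \comp [\cdots] \comp (\dg(C) \parallel \id) \comp [\cdots \leadsto \mathit{out}]$ and then compared block by block.

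The main obstacle will be this last step: each side contains on the order of eight to ten general switch diagrams whose input/output lists are built by intricate combinations of $\cdot$, $\lunion$, $\ominus$, $\linter$ on $\In(A), \In(B), \In(C), \Out(A), \Out(B), \Out(C)$, and the two hypotheses are exactly what is needed to ensure these chains of switches can be compressed to matching forms; without them, the two associations genuinely differ. In an Isabelle mechanization this would rely on a substantial supporting library of list lemmas and a switch calculus, with the associativity argument itself reduced to a structured sequence of rewriting steps once these auxiliaries are available.
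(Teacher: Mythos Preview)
Your plan is sound and matches what the paper does. The paper itself omits the proof of this lemma, deferring to the Isabelle formalization; the mechanized proof there proceeds exactly along the lines you describe---unfolding $\Comp$ on both sides, establishing the input- and output-list equalities via a library of list identities for $\linter$, $\ominus$, $\lunion$, and then discharging the diagram equality by means of a calculus of general switch diagrams (composition $[x\leadsto y]\comp[y\leadsto z]=[x\leadsto z]$, identity $[x\leadsto x]=\id$, and interaction with $\parallel$) combined with the associativity/interchange axioms. Your identification of where each hypothesis is consumed is also correct: $(\Out(A)\linter\In(B))\linter\In(C)=\epsilon$ is precisely what reconciles the two $\In(C)\ominus(\cdots)$ terms in the input lists, and $(\Out(A)\ominus\In(B))\linter\Out(B)=\epsilon$ plays the analogous role on the output side (and is also what guarantees $A\Comp B$ is an io-diagram so that the left-hand side is well-formed).
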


Next we introduce the corresponding operation on io-diagrams for the
parallel composition.
\begin{definition}
If $A,B$ are io-diagrams, then the named parallel composition of
$A$ and $B$, denoted $A\Parallel B$ is defined by 
$$A\Parallel B = (\In(A)\lunion \In(B),\Out(A)\cdot\Out(B),S)$$
where
\[
S=[\In(A)\lunion\In(B)\leadsto\In(A)\cdot\In(B)]\comp(A\parallel B)
\]
\end{definition}
Figure~\ref{fig:named-parallel} presents an example of a named parallel composition.
The named parallel composition is meaningful only if the outputs
of the two diagrams have different names. However, the inputs may not necessarily be distinct
as shown in Figure~\ref{fig:named-parallel}.

\begin{figure}
\begin{centering}
\begin{tikzpicture}
\node[draw, minimum height = 9ex, minimum width = 5ex](A){$A$};
\node[right = 1ex of A.east, anchor = south west](v){$v$};
\node[above = 0ex of v](u){$u$};
\node[below = 0ex of v](w){$w$};

\node[left = 1ex of A.west, anchor = south east](b){$b$};
\node[above = -0.3ex of b](a){$a$};
\node[below = 0ex of b](c){$c$};

\node[draw, minimum height = 9ex, minimum width = 5ex, below = 2ex of A](B){$B$};
\node[right = 1ex of B.east, anchor = south west](s){$s$};
\node[above = 0ex of s](t){$t$};
\node[below = 0ex of s](r){$r$};

\node[left = 1ex of B.west, anchor = south east](ba){$b$};
\node[above = -0.3ex of ba](d){$d$};
\node[below = 0ex of ba](aa){$a$};

\draw[-latex'](u.south west -| A.east) -- ++(6ex,0);
\draw[-latex'](v.south west -| A.east) -- ++(6ex,0);
\draw[-latex'](w.south west -| A.east) -- ++(6ex,0);

\draw[-latex'](s.south west -| B.east) -- ++(6ex,0);
\draw[-latex'](t.south west -| B.east) -- ++(6ex,0);
\draw[-latex'](r.south west -| B.east) -- ++(6ex,0);

\draw[latex'-](a.south west -| A.west) -- ++(-10ex,0) coordinate(X);
\draw[latex'-](b.south west -| A.west) -- ++(-10ex,0) coordinate(Z);
\draw[latex'-](c.south west -| A.west) -- ++(-10ex,0);

\draw[latex'-](d.south west -| B.west) -- ++(-10ex,0);
\draw[-latex'](X) ++(2ex,0)node[inner sep = 0](Y){$\bullet$} -- (aa.south -| Y) -- 
	(aa.south -| B.west);

\draw[-latex'](Z) ++(4ex,0)node[inner sep = 0](Y){$\bullet$} -- (ba.south -| Y) -- 
	(ba.south -| B.west);

\end{tikzpicture}
\par\end{centering}
\caption{Example of named parallel composition. \label{fig:named-parallel}}
\end{figure}
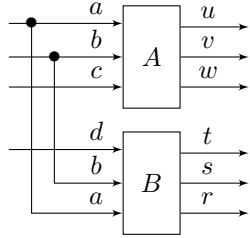

As in the case of named serial composition, the parallel composition of two 
io-diagrams is not always an io-diagram. 
Next lemma gives conditions for
the parallel composition to be io-diagram and also states that  
the named parallel composition is associative.
\begin{lemma} Let $A$, $B$, and $C$ be io-diagrams, then
\begin{enumerate}
\item
$\Out(A)\linter\Out(B)=\epsilon \ \Rightarrow \ A \Parallel B$ is an io-diagram.\\[-1.5ex]
\item $(A\Parallel B)\Parallel C=A\Parallel(B\Parallel C)$
\end{enumerate}
\end{lemma}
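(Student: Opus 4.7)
The plan is to prove the two parts separately, handling part~(1) as a direct well-typedness and distinctness check and reducing part~(2) to an equality of underlying diagrams established using the axioms of the algebra and auxiliary facts about general switch diagrams.

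For part~(1), I would verify the three requirements of an io-diagram for $A \Parallel B$. The input list $\In(A) \lunion \In(B) = \In(A) \cdot (\In(B) \ominus \In(A))$ consists of distinct variables because $\In(A)$ is distinct (since $A$ is an io-diagram), $\In(B) \ominus \In(A)$ is distinct (since $\In(B)$ is), and by construction they share no elements. The output list $\Out(A) \cdot \Out(B)$ is distinct using that $\Out(A)$ and $\Out(B)$ are individually distinct together with the hypothesis $\Out(A) \linter \Out(B) = \epsilon$. Finally, the typing $S : \tv(\In(A) \lunion \In(B)) \circlearrow \tv(\Out(A) \cdot \Out(B))$ follows because $[\In(A) \lunion \In(B) \leadsto \In(A) \cdot \In(B)]$ has output type $\tv(\In(A)) \cdot \tv(\In(B))$, which matches the input type of $A \parallel B$.

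For part~(2), I would show the three components agree. The input lists agree because associativity of $\lunion$ on lists gives $(\In(A) \lunion \In(B)) \lunion \In(C) = \In(A) \lunion (\In(B) \lunion \In(C))$; this is a purely combinatorial fact on lists whose proof proceeds by straightforward induction on the lists, noting that both sides produce the elements of $\In(A) \cup \In(B) \cup \In(C)$ without repetitions in the order first $\In(A)$, then the new elements of $\In(B)$, then the new elements of $\In(C)$. The output lists agree by ordinary associativity of concatenation. For the underlying diagrams, I would unfold the definitions of the outer $\Parallel$ on each side, then use the distributivity axiom (\ref{ax:par-comp}) to pull the inner general switch diagrams outside the parallel product, and the associativity axiom (\ref{ax:par-assoc}) to rearrange $(A \parallel B) \parallel C$ into $A \parallel (B \parallel C)$. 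After these manipulations, both sides become the serial composition of $A \parallel B \parallel C$ with a general switch diagram from $\In(A) \lunion \In(B) \lunion \In(C)$ to $\In(A) \cdot \In(B) \cdot \In(C)$.

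The main obstacle will be this last step: proving that the two compositions of general switch diagrams reduce to the same routing. Concretely, I need a lemma stating that for appropriate variable lists $x,y,z,w$ one has $[x \leadsto y] \comp [y \leadsto w] = [x \leadsto w]$ (whenever every variable in $w$ that appears in $x$ first appears in $y$), together with a parallel-identity extension lemma $[x \leadsto y] \parallel \id(\tv(z)) = [x \cdot z \leadsto y \cdot z]$. Using these, the left-hand diagram rewrites to $[\In(A) \lunion \In(B) \lunion \In(C) \leadsto \In(A) \cdot \In(B) \cdot \In(C)]$, and the right-hand diagram rewrites to the same term; the outer switches then cancel into a single switch on each side. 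Once these switch-algebra lemmas are established (they follow by induction on variable lists from the recursive definition of $[\_ \leadsto \_]$ and axioms (\ref{ax:id-comp})--(\ref{ax:split-cat})), the proof of part~(2) closes by chaining the rewrites together with axioms (\ref{ax:comp-assoc}), (\ref{ax:par-assoc}), and (\ref{ax:par-comp}).
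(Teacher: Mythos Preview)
The paper does not spell out a proof of this lemma (it is one of the results delegated to the Isabelle formalization), so there is no paper-side argument to compare against. Your overall plan is the natural one and is essentially what any such proof must do: for part~(1) a direct distinctness/typing check, and for part~(2) matching the three components and reducing the diagram equality to an equality of general-switch prefixes in front of $\dg(A)\parallel\dg(B)\parallel\dg(C)$.

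There is, however, a genuine slip in your decomposition for part~(2). The parallel-identity lemma you propose,
\[
[x \leadsto y] \parallel \id(\tv(z)) \;=\; [x \cdot z \leadsto y \cdot z],
\]
is false without the side condition that the variables of $z$ are disjoint from those of $x$. If some $v$ lies in both $x$ and $z$, then on the right-hand side the lookup of $v$ for the $z$-part of the output finds $v$ already in the $x$-prefix of $x\cdot z$ and returns the corresponding component of the $x$-input, whereas on the left-hand side $\id(\tv(z))$ simply copies the $z$-input. In your application $x=\In(A)\lunion\In(B)$ and $z=\In(C)$, and nothing forbids these from sharing variables (shared inputs are precisely what $\Parallel$ is meant to merge), so you cannot invoke the lemma as stated.

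The repair is easy and keeps your strategy intact: instead of factoring into two lemmas, prove the combined step
\[
[x \leadsto y \cdot z] \comp \bigl([y \leadsto w] \parallel \id(\tv(z))\bigr) \;=\; [x \leadsto w \cdot z]
\]
under the hypotheses that $x$ is a list of distinct variables, $\set(y)\cup\set(z)\subseteq\set(x)$, and $\set(w)\subseteq\set(y)$. This holds because the outer switch $[x\leadsto y\cdot z]$ already forces any variable shared between $y$ and $z$ to carry the \emph{same} value on both wires, so it is immaterial which copy a subsequent lookup selects. The symmetric version with $\id$ on the left handles the $A\Parallel(B\Parallel C)$ side. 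With this adjustment the rest of your argument (associativity of $\lunion$ and of concatenation, then axioms~(\ref{ax:comp-assoc}), (\ref{ax:par-assoc}), (\ref{ax:par-comp})) goes through.
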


Next definition introduces the feedback operator for io-diagrams.
\begin{definition}
If $A$ is an io-diagram, then the named feedback of $A$, denoted
$\FB(A)$ is defined by $(\mathit{in},\mathit{out},S)$, where $\mathit{in}=\In(A)\ominus\Var(A,A)$,
$\mathit{out}=\Out(A)\ominus\Var(A,A)$ and 
\[
S=\fb^{|\Var(A,A)|}([\Var(A,A)\cdot in\leadsto\In(A)]\comp S\comp[\Out(A)\leadsto\Var(A,A)\cdot\mathit{out}])
\]
\end{definition}
The named feedback operation of $A$ connects all inputs and outputs of
$A$ with the same name in feedback. 
Figure~\ref{fig:named-feedback} illustrates an example of named feedback composition.
The named feedback applied to
an io-diagram is always an io-diagram.
\begin{lemma}
If $A$ is an io-diagram then $\FB(A)$ is an io-diagram.
\end{lemma}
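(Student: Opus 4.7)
The proof amounts to a careful type and distinctness check, so the plan is to verify, in order, the three defining requirements of an io-diagram for the tuple $(\mathit{in}, \mathit{out}, S)$ produced by $\FB(A)$: that $\mathit{in}$ is a list of distinct variables, that $\mathit{out}$ is a list of distinct variables, and that the resulting term $S$ has type $\tv(\mathit{in})\circlearrow\tv(\mathit{out})$.

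For the first two requirements I would use a short auxiliary lemma (or an inlined argument) stating that $x\ominus y$ preserves distinctness: if $x$ has no repeated elements then neither does $x\ominus y$, since $x\ominus y$ is simply a sublist of $x$. Applying this to $\mathit{in}=\In(A)\ominus\Var(A,A)$ and $\mathit{out}=\Out(A)\ominus\Var(A,A)$, and using that $\In(A)$ and $\Out(A)$ are distinct (because $A$ is an io-diagram), gives distinctness of both $\mathit{in}$ and $\mathit{out}$.

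For the type check, I would first observe that for arbitrary variable lists $x,y$ the general switch diagram $[x\leadsto y]$ has type $\tv(x)\circlearrow\tv(y)$ by its defining construction in Section~\ref{subsec:switch}. Consequently, the inner term
\[
S_0 \;=\; [\Var(A,A)\cdot\mathit{in}\leadsto\In(A)]\comp\dg(A)\comp[\Out(A)\leadsto\Var(A,A)\cdot\mathit{out}]
\]
is well-typed and has type $\tv(\Var(A,A)\cdot\mathit{in})\circlearrow\tv(\Var(A,A)\cdot\mathit{out})$. Writing $\Var(A,A)=(v_1,\dots,v_k)$, the type of $S_0$ thus starts with $\tv(v_1)$ on both sides, so the outermost $\fb$ is applicable and yields a term of type $\tv((v_2,\dots,v_k)\cdot\mathit{in})\circlearrow\tv((v_2,\dots,v_k)\cdot\mathit{out})$. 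A straightforward induction on $k=|\Var(A,A)|$ then shows that $\fb^{k}(S_0)$ is well-defined and has type $\tv(\mathit{in})\circlearrow\tv(\mathit{out})$, as required.

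The only nontrivial bookkeeping is tracking types through the $k$ nested feedbacks, so the main (small) obstacle is stating and proving the induction hypothesis in a form that matches exactly the shape $\tv(\Var(A,A)[i{+}1{:}]\cdot\mathit{in})\circlearrow\tv(\Var(A,A)[i{+}1{:}]\cdot\mathit{out})$ after $i$ iterations; once this is in place, the inductive step is immediate from the type of the base $\fb$ operator. Everything else is syntactic, and in the Isabelle formalisation these steps reduce to applications of lemmas about $\ominus$, $\linter$, and the typing of the general switch and feedback constructors.
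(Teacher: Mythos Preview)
Your proposal is correct. The paper itself does not include a proof of this lemma (it is one of the results whose proof is deferred to the Isabelle formalisation), so there is no in-paper argument to compare against; your decomposition into (i) distinctness of $\mathit{in}$ and $\mathit{out}$ via the sublist property of $\ominus$, and (ii) a type check on $S$ via an induction on $|\Var(A,A)|$, is exactly the natural route and matches what one would expect the mechanised proof to do.
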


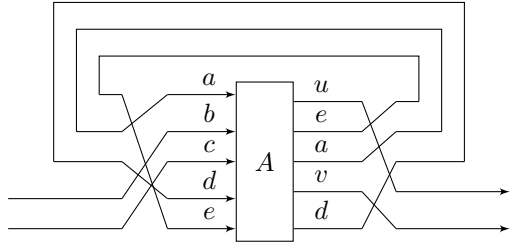
\begin{figure}
\begin{centering}
\begin{tikzpicture}
\node[draw, minimum height = 14ex, minimum width = 5ex](main){$A$};
\node[right = 1ex of main.east, anchor = south west](aa){$a$};
\node[above = 0ex of aa](ee){$e$};
\node[above = 0ex of ee](u){$u$};
\node[below = 0ex of aa](v){$v$};
\node[below = 0ex of v](dd){$d$};

\node[left = 1ex of main.west, anchor = south east](c){$c$};
\node[above = 0ex of c](b){$b$};
\node[above = 0ex of b](a){$a$};
\node[below = 0ex of c](d){$d$};
\node[below = 0ex of d](e){$e$};

\draw[-](u.south west -| main.east) -- ++(6ex,0) coordinate(U);
\draw[-](ee.south west -| main.east) -- ++(6ex,0) coordinate(EE);
\draw[-](aa.south west -| main.east) -- ++(6ex,0) coordinate(AA);
\draw[-](v.south west -| main.east) -- ++(6ex,0) coordinate(V);
\draw[-](dd.south west -| main.east) -- ++(6ex,0) coordinate(DD);

\draw[-](U) ++(3ex, 0) coordinate (EEa) -- ++(2ex, 0) coordinate(EEb);
\draw[-](EE) -- (EEa); 

\draw[-](EE) ++(3ex, 0) coordinate (AAa) -- ++(4ex, 0) coordinate(AAb);
\draw[-](AA) -- (AAa); 

\draw[-](AA) ++(3ex, 0) coordinate (DDa) -- ++(6ex, 0) coordinate(DDb);
\draw[-](DD) -- (DDa); 

\draw[-latex'](V) ++(3ex, 0) coordinate (Ua) -- ++(10ex, 0) coordinate(Ub);
\draw[-](U) -- (Ua); 

\draw[-latex'](DD) ++(3ex, 0) coordinate (Va) -- ++(10ex, 0) coordinate(Vb);
\draw[-](V) -- (Va);

\draw[latex'-](a.south west -| main.west) -- ++(-6ex,0) coordinate(A);
\draw[latex'-](b.south west -| main.west) -- ++(-6ex,0) coordinate(B);
\draw[latex'-](c.south west -| main.west) -- ++(-6ex,0) coordinate(C);
\draw[latex'-](d.south west -| main.west) -- ++(-6ex,0) coordinate(D);
\draw[latex'-](e.south west -| main.west) -- ++(-6ex,0) coordinate(E);

\draw[-](A) ++(-4ex, 0) coordinate (Ea) -- ++(-2ex, 0) coordinate(Eb);
\draw[-](E) -- (Ea); 

\draw[-](B) ++(-4ex, 0) coordinate (Aa) -- ++(-4ex, 0) coordinate(Ab);
\draw[-](A) -- (Aa); 

\draw[-](C) ++(-4ex, 0) coordinate (Da) -- ++(-6ex, 0) coordinate(Db);
\draw[-](D) -- (Da); 

\draw[-](D) ++(-4ex, 0) coordinate (Ba) -- ++(-10ex, 0) coordinate(Bb);
\draw[-](B) -- (Ba); 

\draw[-](E) ++(-4ex, 0) coordinate (Ca) -- ++(-10ex, 0) coordinate(Cb);
\draw[-](C) -- (Ca); 

\draw[-](EEb) -- ++(0,4ex) coordinate(X) -- (X -| Eb) -- (Eb);
\draw[-](AAb) -- ++(0,9ex) coordinate(X) -- (X -| Ab) -- (Ab);
\draw[-](DDb) -- ++(0,14ex) coordinate(X) -- (X -| Db) -- (Db);

\end{tikzpicture}
\par\end{centering}
\caption{Example of named feedback composition. \label{fig:named-feedback}}
\end{figure}

\subsection{The Abstract Translation Algorithm}
We have now all elements for introducing the abstract translation algorithm.
The algorithm starts with a list ${\cal A}=(A_{1},A_{2},\ldots,A_{n})$
of io-diagrams, such that for all $i\not=j$, the inputs and outputs
of $A_{i}$ and $A_{j}$ are disjoint respectively ($\In(A_{i})\linter\In(A_{j})=\epsilon$
and $\Out(A_{i})\linter\Out(A_{j})=\epsilon$). We denote this property
by $\iodist({\cal A})$. The algorithm is given in Alg.~\ref{alg:algorithm}.
Formally the algorithm is represented as a monotonic predicate transformer
\cite{dijkstra:1975}, within the framework of refinement calculus 
\cite{back-wright-98}.

\begin{table}[H]
\renewcommand{\tablename}{\bf Alg.}
\centering
\normalsize
\begin{minipage}{0.9\textwidth}
{\sf
input: ${\cal A} = (A_{1},A_{2},\ldots,A_{n})$ \ \ \ {\rm (list of io-diagrams)}\\[1ex]
$\mathsf{while}\ |{\cal A}|>1:$\\[1ex]
\mbox{}~~~~$\mathsf{choose}$:\\[1ex]
\mbox{}~~~~~~~~(a) $[\,{\cal A} := {\cal A'}\  |\ \exists\ k, B_1,\ldots, B_k, {\cal C}: k > 1 
    \land \perm({\cal A},\, (B_1, \ldots , B_k)\cdot {\cal C}) $ \\[1ex]
\mbox{}~~~~~~~~~~~~~~~~$\land~~{\cal A}' = \FB(B_{1}\Parallel\ldots\Parallel B_{k}) \cdot {\cal C}\,]$
    \\[1ex]
\mbox{}~~~~~~~~(b) $[\,{\cal A} := {\cal A'}\  |\ \exists\ A,B, {\cal C}: \perm({\cal A},\, 
     (A, B) \cdot {\cal C}) \land~{\cal A}' = \FB(\FB(A)\Comp \FB(B)) \cdot {\cal C}\,]$
     \\[1ex]
$A := \FB(A')$ \  \ \rm (where $A'$ is the only remaining element of ${\cal A}$)
}
\end{minipage}
\\[2ex]
\caption{Nondeterministic algorithm for translating HBDs.}
\label{alg:algorithm}
\end{table}
Computing $\FB(A)$ in the last step of the algorithm is necessary
only if ${\cal A}$ contains initially only one element. However,
computing $\FB(A)$ always at the end does not change the result
since,
as we will see later in Theorem \ref{thm:par-ser}, $\FB$ operation 
is idempotent, i.e. $\FB(\FB(A)) = \FB(A)$. In the presentation of the algorithm,
we have used the keyword {\sf choose} for the nondeterministic choice $\sqcap$, 
to emphasize the two alternatives.

Note that, semantically, choice {\sf (b)} of the algorithm is a special case of choice {\sf (a)}, as shown later in Theorem \ref{thm:par-ser}. But syntactically, choices {\sf (a)} and {\sf (b)} result in different expressions that achieve different performance tradeoffs as observed in Section~\ref{sec:overview} and
as further discussed in \cite{DBLP:conf/spin/DragomirPT16}.
The point of our translator is to be indeed able to generate semantically equivalent but syntactically different expressions, which achieve different performance tradeoffs~\cite{DBLP:conf/spin/DragomirPT16}.


The result for the running example from Section \ref{sec:overview} can be obtained
by applying the second choice of the algorithm twice for the initial list of io-diagrams
($[\add,\delay,\split]$), first to $\add$ and $\delay$ to obtain $A$,
and next to $A$ and $\split$ to obtain
\[
\Big((u,s),\ (v,s'),\ \fb\big((\dg(\add)\parallel\id)\comp\dg(\delay)\comp ((\split)
\parallel\id)\big)\Big).
\]
As opposed to the example from Section~\ref{sec:overview}, the elements are composed
serially in the order occurring in the diagram.

\subsection{Determinacy of the Abstract Translation Algorithm}

The result of the algorithm depends on how the nondeterministic choices are resolved.
However, in all cases the final io-diagrams are equivalent modulo a
permutation of the inputs and outputs. To prove this, we introduce
the concept \emph{io-equivalence} for two io-diagrams.
\begin{definition}
Two io-diagrams $A,B$ are \emph{io-equivalent}, denoted $A\ioeq B$
if they are equal modulo a permutation of the inputs and outputs,
i.e., $\In(B)$ is a permutation of $\In(A)$, $\Out(B)$ is a permutation of $\Out(A)$ and 
\[
\dg(A)=[\In(A)\leadsto\In(B)]\comp\dg(B)\comp[\Out(B)\leadsto\Out(A)]
\]
\end{definition}

\begin{lemma}\label{lem:ioeq-cong}
The relation io-equivalent is a congruence relation, i.e, for all
$A,B,C$ io-diagrams we have:
\begin{enumerate}
\item $A\ioeq A$\\[-1.5ex]
\item $A\ioeq B\Rightarrow B\ioeq A$\\[-1.5ex]
\item $A\ioeq B\land B\ioeq C\Rightarrow A\ioeq C$.\\[-1.5ex]
\item $A\ioeq B \Rightarrow \FB(A)\ioeq\FB(B)$.\\[-1.5ex]
\item $\Out(A) \linter \Out(B) =\epsilon \Rightarrow  A\Parallel B\ioeq B\Parallel A$.\\[-1.5ex]
\item If $\iodist(A_1,\ldots,A_n)$ and $\perm((A_1,\ldots,A_n),\ (B_1,\ldots, B_n))$
 then $$A_1\Parallel\ldots A_n \ioeq B_1\Parallel\ldots B_n.$$
\end{enumerate}
\end{lemma}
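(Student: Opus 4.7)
The approach is to reduce each of the six items to manipulations in the abstract algebra using the axioms from Section~\ref{sec:model}, with the workhorse being a small toolkit of auxiliary lemmas about the general switch diagram $[x\leadsto y]$. First I would establish three facts: (i) for any list $x$ of distinct variables, $[x\leadsto x]=\id(\tv(x))$; (ii) if $x,y,z$ are lists of distinct variables with $\set(y)\subseteq\set(x)$ and $\set(z)\subseteq\set(y)$, then $[x\leadsto y]\comp[y\leadsto z]=[x\leadsto z]$; and (iii) general switches distribute over $\parallel$ in the expected way. These are proved by structural induction on the defining clauses of $[\_\leadsto\_]$ using axioms (\ref{ax:id-comp})--(\ref{ax:sw-par}).

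With (i)--(iii) in hand, items~1, 2, and~3 are almost immediate. Reflexivity collapses the defining equation to $\dg(A)=\id\comp\dg(A)\comp\id$ by (i) and axiom (\ref{ax:id-comp}). For symmetry, I would compose the defining equation on the left with $[\In(B)\leadsto\In(A)]$ and on the right with $[\Out(A)\leadsto\Out(B)]$, then collapse each pair of adjacent switches to the identity using (ii) and (i); this uses that $\In(B),\Out(B)$ are permutations of $\In(A),\Out(A)$ and consist of distinct variables. Transitivity is analogous: compose the two defining equations and fuse the inner pair of switches via (ii).

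Item~5 follows by unfolding $\Parallel$: both $A\Parallel B$ and $B\Parallel A$ have the shape $[\mathit{in}\leadsto\In(\cdot)\cdot\In(\cdot)]\comp(\cdot\parallel\cdot)$, and axiom~(\ref{ax:sw-par}) combined with the switch-composition fact~(ii) brings one expression into the required form for io-equivalence with the other. The hypothesis $\Out(A)\linter\Out(B)=\epsilon$ is precisely what makes $\Out(A)\cdot\Out(B)$ a list of distinct variables, so that the outer switches act as genuine permutations. Item~6 then follows by induction on the length of the list: any permutation is generated by adjacent transpositions, and associativity of $\Parallel$ (axiom (\ref{ax:par-assoc}), lifted to the named version) lets us bring the transposed pair into adjacent position, where item~5 applies.

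The main obstacle will be item~4. Since $\FB(A)$ is iterated $\fb$ applied to a term already sandwiched between general switches, transporting a permutation on $\In(A),\Out(A)$ through the feedback requires pushing those permutations inside. My plan is first to prove an auxiliary invariance lemma: iterated $\fb$ is unchanged under permuting the list of feedback variables, obtained by repeated application of axiom~(\ref{ax:fb-comm}), which yields adjacent transpositions of feedback variables (together with axiom~(\ref{ax:fb-par}) and (\ref{ax:fb-sw}) for the boundary cases). Item~4 is then handled by unfolding both $\FB(A)$ and $\FB(B)$, using axiom~(\ref{ax:fb-comp}) to pull the outer switches into and out of the iterated feedback, and using (ii) together with the invariance lemma to reconcile the two sides. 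The delicate bookkeeping concerns $\Var(A,A)$ versus $\Var(B,B)$, which need not be syntactically equal but only permutations of each other once restricted to the common input--output variables; this is exactly where the feedback-permutation invariance lemma is needed.
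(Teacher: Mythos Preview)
The paper does not give a proof of this lemma in the text; it is one of the results deferred to the Isabelle formalization (see the footnote after Lemma~\ref{hoare:assign}). So there is no paper proof to compare against directly, and your sketch must be judged on its own.

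Your overall plan is sound and is essentially what a hand proof of the Isabelle development would look like. The three auxiliary facts about $[x\leadsto y]$ are exactly the right toolkit, and your treatment of items~1--3 and~5 is correct. For item~4 you correctly isolate the real difficulty: $\Var(A,A)$ and $\Var(B,B)$ have the same underlying set but may be listed in different orders, so after fusing the switches via your fact~(ii) you are left comparing two iterated feedbacks whose feedback wires are permuted. Your invariance lemma (iterated $\fb$ is unchanged under permuting the feedback coordinates, obtained from axiom~(\ref{ax:fb-comm}) by adjacent transpositions) is the right tool, and axiom~(\ref{ax:fb-comp}) then handles the permutation of the non-feedback inputs/outputs to yield io-equivalence of $\FB(A)$ and $\FB(B)$.

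One small gap: in item~6 your induction via adjacent transpositions needs, in addition to item~5 and associativity of $\Parallel$, a congruence step of the form
\[
X\ioeq Y \ \Longrightarrow\ P\Parallel X\Parallel Q \ \ioeq\ P\Parallel Y\Parallel Q,
\]
since item~5 only swaps a single adjacent pair and you must embed that swap inside the surrounding context. This congruence is not among the six items being proved and is not entirely trivial (the input lists $\In(P)\lunion\In(X)$ and $\In(P)\lunion\In(Y)$ can differ in order), but it follows readily from your switch facts (i)--(iii) together with axiom~(\ref{ax:par-comp}). You should state and prove it as an additional auxiliary lemma before the induction.
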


To prove correctness of the algorithm we also need the following results:
\begin{theorem}
\label{thm:par-ser}If $A,B$ are io-diagrams such that $\In(A)\linter\In(B)=\epsilon$
and $\Out(A)\linter\Out(B)=\epsilon$ then 
\[
\FB(A\Parallel B)=\FB(\FB(A)\Comp\FB(B))
\]
and
\[
\FB(\FB(A))=\FB(A).
\]
\end{theorem}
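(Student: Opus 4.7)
The plan is to prove the two equations separately, since they have rather different flavors, though both proceed by unfolding the definitions of $\FB$, $\Parallel$, and $\Comp$ and then reducing the resulting abstract algebra expressions using the axioms.

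For the idempotence $\FB(\FB(A))=\FB(A)$, the key observation is that $\FB(A)$ by construction has input list $\In(A) \ominus \Var(A,A)$ and output list $\Out(A) \ominus \Var(A,A)$; since $\Var(A,A) = \Out(A) \linter \In(A)$ captures exactly the common names, these two lists share no variable. Hence $\Var(\FB(A),\FB(A)) = \epsilon$, so in the definition of the outer $\FB$ the exponent is zero and $\fb^0$ just returns its argument. What remains is to show $[x\leadsto x]\comp S \comp [y\leadsto y] = S$ where $x,y$ are the (distinct) input/output lists. I would first prove the small auxiliary lemma $[x\leadsto x] = \id(\tv(x))$ for any list $x$ of distinct variables, by induction on $|x|$ using the definition of $[\_\leadsto\_]$ together with Axioms (\ref{ax:id-comp}) and (\ref{ax:split-sink}); then Axiom (\ref{ax:id-comp}) finishes the job.

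For the more substantive equation $\FB(A\Parallel B) = \FB(\FB(A)\Comp \FB(B))$, I would proceed in two stages. Stage one: verify that the input and output lists of both sides coincide. Using the hypotheses $\In(A)\linter \In(B)=\epsilon$ and $\Out(A)\linter\Out(B)=\epsilon$, the feedback variables of $A\Parallel B$ decompose as $V = \Var(A,A)\cdot\Var(A,B)\cdot\Var(B,A)\cdot\Var(B,B)$, and a careful expansion shows that the iterated feedbacks on the right-hand side (inside each $\FB(-)$, then on $\Comp$, then on the outer $\FB$) consume exactly the same four sublists of variables, yielding the same residual inputs/outputs. Stage two: for the underlying abstract expressions, the strategy is to use Axiom (\ref{ax:fb-comm}) to commute and regroup feedback applications on the left-hand side so they correspond to the partition above, then Axiom (\ref{ax:fb-par}) to push the feedbacks of $\Var(A,A)$ and $\Var(B,B)$ inside their respective components (yielding $\dg(\FB(A))$ and $\dg(\FB(B))$), and finally Axiom (\ref{ax:fb-comp}) combined with Axiom (\ref{ax:fb-sw}) to eliminate the feedback of $\Var(A,B)$ by reinterpreting it as the serial wire that $\Comp$ produces. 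The leftover feedback of $\Var(B,A)$ becomes the outer $\FB$.

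The main obstacle, and the reason mechanization is valuable, is not the high-level strategy but the bookkeeping of the many general switch diagrams $[x\leadsto y]$ that the definitions of $\Comp$, $\Parallel$ and $\FB$ introduce: each composition prepends/appends a switch to align variable orders, and showing the two sides reduce to the same algebraic expression requires a library of lemmas about general switches (their composition, their behavior under parallel with identities, and their cancellation in feedback position). I would develop these switch lemmas first, derived from Axioms (\ref{ax:split-sink})--(\ref{ax:sw-par}) and (\ref{ax:fb-sw}), and only then tackle the main algebraic manipulation. The arguments themselves are routine once the switch calculus is in hand, but their sheer volume makes pen-and-paper proof error-prone, which is exactly the situation where an Isabelle formalization pays off.
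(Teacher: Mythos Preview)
Your proposal is correct and matches what the paper does: the paper itself does not spell out the argument in the text but simply notes that ``the proof of Theorem~\ref{thm:par-ser} is quite involved and requires several properties of diagrams'' and defers to the Isabelle formalization. Your outline---idempotence via $\Var(\FB(A),\FB(A))=\epsilon$ and $[x\leadsto x]=\id(\tv(x))$, and the main equation via the four-way decomposition of the feedback variables together with Axioms~(\ref{ax:fb-sw})--(\ref{ax:fb-comm}) and a switch calculus---is exactly the kind of argument that underlies that formalization, and your remark that the difficulty lies in the bookkeeping of the general switch diagrams is precisely the point the paper is making.
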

The proof of Theorem~\ref{thm:par-ser} is quite involved and requires several properties of diagrams (see the RCRS formalization~\cite{RCRS_Toolset_figshare} for details).

We can now state and prove one of the main results of this paper,
namely, determinacy of Algorithm~\ref{alg:algorithm}.
\begin{theorem}\label{thm:correct}
$If$ ${\cal A}=(A_{1},A_{2},\ldots,A_{n})$ is the initial list of
io-diagrams satisfying $\iodist({\cal A})$, then Algorithm~\ref{alg:algorithm} 
terminates, and if $A$ is the io-diagram computed by the algorithm,
then
\[
A\ioeq\FB(A_{1}\Parallel\ldots\Parallel A_{n})
\]
\end{theorem}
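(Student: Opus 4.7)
My plan is to prove Theorem \ref{thm:correct} by applying the Hoare total correctness triple framework introduced in Section \ref{sec:prelim}, specifically the Hoare rule for \textsf{while} (Lemma \ref{hoare:while}). The algorithm is a \textsf{while} loop containing a nondeterministic choice between two assignments, followed by a final deterministic step. I would first prove a Hoare triple for the loop using an appropriate invariant and variant, then compose this with the final assignment $A := \FB(A')$ using sequential composition rules.

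The invariant $I$ I would use captures the semantic correspondence between the current list and the target expression: if the current value of ${\cal A}$ is $(C_1, \ldots, C_k)$ then $\iodist({\cal A})$ holds \emph{and}
\[
\FB(A_1\Parallel\ldots\Parallel A_n) \ioeq \FB(C_1\Parallel\ldots\Parallel C_k).
\]
The variant is simply $t = |{\cal A}|$, which is bounded below by $1$ (the loop condition is $|{\cal A}| > 1$). Initially, $I$ holds by the hypothesis $\iodist({\cal A})$ together with reflexivity of $\ioeq$ (Lemma \ref{lem:ioeq-cong}(1)). When the loop exits, $|{\cal A}| = 1$, so $I$ gives $\FB(A_1\Parallel\ldots\Parallel A_n) \ioeq \FB(A')$, and the final assignment $A := \FB(A')$ delivers the postcondition. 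Preservation of $\iodist$ under the two choices follows because the input/output names of $\FB(B_1\Parallel\ldots\Parallel B_k)$ (resp.\ $\FB(\FB(A)\Comp\FB(B))$) are subsets of the inputs/outputs of the components being combined, and $\iodist$ on the original list ensures these are disjoint from the names appearing in ${\cal C}$.

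The semantic part of invariant preservation is where the heavy lifting sits, and it reduces to a single algebraic identity, applied twice. For choice (a), after selecting a permutation splitting ${\cal A}$ as $(B_1,\ldots,B_k)\cdot{\cal C}$, Lemma \ref{lem:ioeq-cong}(6) lets me rewrite $\FB(C_1\Parallel\ldots\Parallel C_k)$ as $\FB((B_1\Parallel\ldots\Parallel B_k)\Parallel D_1\Parallel\ldots\Parallel D_m)$ where ${\cal C}=(D_1,\ldots,D_m)$; I then need
\[
\FB\bigl(X\Parallel Y\bigr) \ioeq \FB\bigl(\FB(X)\Parallel Y\bigr),
\]
which I would derive from Theorem \ref{thm:par-ser} by expanding both sides: the left equals $\FB(\FB(X)\Comp \FB(Y))$, and the right equals $\FB(\FB(\FB(X))\Comp\FB(Y)) = \FB(\FB(X)\Comp\FB(Y))$ using the idempotence $\FB(\FB(X)) = \FB(X)$. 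For choice (b), the replacement of $(A,B)$ by $\FB(\FB(A)\Comp\FB(B))$ is handled by a symmetric argument: Theorem \ref{thm:par-ser} directly identifies $\FB(A\Parallel B)$ with $\FB(\FB(A)\Comp\FB(B))$, and the same ``wrap in $\FB$'' trick absorbs the outer feedback without changing the denotation.

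The main obstacle I anticipate is the bookkeeping around permutations and $\iodist$ rather than the algebra itself. The inductive reasoning about $\FB$ of an iterated parallel composition must commute with arbitrary reorderings of that list, which requires a clean repeated application of Lemma \ref{lem:ioeq-cong}(4)--(6), combined with associativity/commutativity of $\Parallel$ (Lemma on named parallel composition). A careful auxiliary lemma stating, roughly, ``for any io-diagram $X$ disjoint from a list ${\cal D}$ with $\iodist$, $\FB((X\Parallel D_1\Parallel\cdots\Parallel D_m)) \ioeq \FB(\FB(X) \Parallel D_1\Parallel\cdots\Parallel D_m)$'' will make both choices' preservation arguments uniform; once this lemma is in place, the Hoare triple proofs for the two branches, the sequential composition, and the final step combine straightforwardly via Lemmas \ref{hoare:assign}, \ref{hoare:while}, and \ref{hoare:refinement} to yield the theorem.
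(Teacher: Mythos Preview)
Your proposal is correct and follows essentially the same approach as the paper: the same invariant (io-distinctness together with $\FB(C_1\Parallel\cdots\Parallel C_m)\ioeq\FB(A_1\Parallel\cdots\Parallel A_n)$), the same variant $|{\cal A}|$, and the same reduction of invariant preservation to Theorem~\ref{thm:par-ser} combined with the permutation congruence of Lemma~\ref{lem:ioeq-cong}. The paper's calculation for choice~(a) is exactly your identity $\FB(X\Parallel Y)=\FB(\FB(X)\Parallel Y)$ unfolded inline (via $\FB(\FB(X)\Comp\FB(Y))$ and idempotence of $\FB$), with the permutation step placed at the end rather than the beginning, and choice~(b) is likewise dispatched by a direct appeal to Theorem~\ref{thm:par-ser}.
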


\begin{proof}
It is easy to see that the algorithm terminates because at each step,
the size of the list ${\cal A}$ decreases. The termination variant
in Lemma~\ref{hoare:while} is $|{\cal A}|$, the length of list ${\cal A}$. 

To prove the correctness
of the algorithm we use the Hoare rule for the while
statement (Lemma~\ref{hoare:while}), which requires an invariant.
The invariant must be true at the beginning of the while loop, it
must be preserved by the body of the while loop, and it must establish
the final post-condition ($A\ioeq\FB(A_{1}\Parallel\ldots\Parallel A_{n})$).
If ${\cal A}_{0}=(A_{1},\ldots,A_{n})$ is the initial list of the
io-diagrams, and ${\cal A}=(C_{1},\ldots,C_{m})$ is the current list
of io-diagrams, then the invariant is
\[
inv({\cal A})=\iodist({\cal A})\land\FB(C_{1}\Parallel\ldots\Parallel C_{m})\ioeq\FB(A_{1}\Parallel\ldots\Parallel A_{n})
\]
Initially $inv({\cal A})$ is trivially true, and it also trivially
establishes the final post-condition. We need to prove that both choices
in the algorithm preserve the invariant.

\begin{equation}
\begin{array}{l}
inv({\cal A})\land k>1 \land \perm({\cal A},\,(B_1,\ldots, B_k)\cdot {\cal C})
\Rightarrow inv([\FB(B_{1}\Parallel\ldots\Parallel B_{k})]\cdot {\cal C})
\end{array}\label{eq:inv1}
\end{equation}
and
\begin{equation}
\begin{array}{l}
inv({\cal A})\land \perm({\cal A},\,(A, B)\cdot {\cal C}) \Rightarrow inv([\FB(\FB(A) \Comp\FB(B))]\cdot {\cal C})
\end{array}\label{eq:inv2}
\end{equation}

The properties (\ref{eq:inv1}) and (\ref{eq:inv2}) are obtained by applying the 
Hoare rule for the nondeterministic choice, and then the rule for nondeterministic
assignment (Lemma~\ref{hoare:assign}).

We prove (\ref{eq:inv1}). Assume 
$${\cal A}=(C_{1},\ldots,C_{m})
\mbox{\ \ and \ \ }
inv({\cal A})=\iodist({\cal A})\land\FB(C_{1}\Parallel\ldots\Parallel C_{m})\ioeq\FB(A_{1}\Parallel\ldots\Parallel A_{n}).
$$
Let $D_{1}=\FB(B_{1}\Parallel\ldots\Parallel B_{k})$, and ${\cal C}=(D_{2},\ldots,D_{u})$.
It follows that $\iodist(D_{1},\ldots,D_{u})$. We prove now
that $\FB(D_{1}\Parallel\ldots\Parallel D_{u})\ioeq \FB(A_{1}\Parallel\ldots\Parallel A_{n})$.
\begin{lyxlist}{00.0}
\item [] $\FB(D_{1}\Parallel\ldots\Parallel D_{u})$\\[-1.5ex]
\item [{$=$}] \{Theorem \ref{thm:par-ser} and $\Parallel$ is associative\}\\[-1.5ex]
\item [{~}] $\FB(\FB(D_{1})\Comp\FB(D_{2}\Parallel\ldots\Parallel D_{u}))$\\[-1.5ex]
\item [{$=$}] \{Definition of $D_{1}$\}\\[-1.ex]
\item [{~}] $\FB(\FB(\FB(B_{1}\Parallel\ldots\Parallel B_{k}))\Comp\FB(D_{2}\Parallel\ldots\Parallel D_{u}))$\\[-1.5ex]
\item [{$=$}] \{Theorem \ref{thm:par-ser}\}\\[-1.5ex]
\item [{~}] $\FB(\FB(B_{1}\Parallel\ldots\Parallel B_{k})\Comp\FB(D_{2}\Parallel\ldots\Parallel D_{u}))$\\[-1.5ex]
\item [{$=$}] \{Theorem \ref{thm:par-ser} and $\Parallel$ is associative\}\\[-1.5ex]
\item [{~}] $\FB(B_{1}\Parallel\ldots\Parallel B_{k}\Parallel D_{2}\Parallel\ldots\Parallel D_{u})$
\\[-1.5ex]
\item [{$\ioeq$}] \{Lemma \ref{lem:ioeq-cong} and $\perm((B_{1},\ldots,B_{k},D_{2},
\ldots,D_{u}),\,(C_{1},\ldots,C_{m}))$\}\\[-1.5ex]
\item [{~}] $\FB(C_{1}\Parallel\ldots\Parallel C_{m})$\\[-1.5ex]
\item [{$\ioeq$}] \{Assumptions\}\\[-1.5ex]
\item [{~}] $\FB(A_{1}\Parallel\ldots\Parallel A_{n})$
\end{lyxlist}
Property (\ref{eq:inv2}) can be reduced to property (\ref{eq:inv1})
by applying Theorem \ref{thm:par-ser}.
\end{proof}

\section{Proving Equivalence of Two Translation Strategies}
\label{sec:impl}

To demonstrate the usefulness of our framework, we return to our 
original motivation, namely, the open problem of
how to prove equivalence of the translation 
strategies introduced in \cite{DBLP:conf/spin/DragomirPT16}.
Two of the translation strategies of~\cite{DBLP:conf/spin/DragomirPT16},
called \emph{feedback-parallel} and \emph{incremental}
translation, can be seen as a determinizations (or refinements)
of the abstract algorithm of Section~\ref{sec:algorithm}, and therefore
can be shown to be equivalent and correct with respect to the abstract semantics.
(The third strategy proposed in \cite{DBLP:conf/spin/DragomirPT16}, called {\em feedbackless}, is significantly different and is presented in the next section.)

The feedback-parallel strategy is the implementation of the abstract
algorithm where we choose $k=\vert\mathcal{A}\vert$. Intuitively, all diagram
components are put in parallel and the common inputs and outputs are connected via $\fb$ operators. On the running example from Figure
\ref{fig:sum-state}, this strategy will generate the following component:

\[
\begin{array}{l}
((u,s),\ (v,s'),\ \fb^{3}([z, x, y, u, s\leadsto z, u, x, s, y]\\[1ex]
\quad \comp \dg(\add) \parallel \dg(\delay) \parallel 
\dg(\split)
 \comp[x, y, s', z, v\leadsto z, x, y, v, s']))
\end{array}
\]
The switches are ordering the variables such that the feedback
variables are first and in the same order in both input and output lists.

The incremental strategy is the implementation of the abstract algorithm
where we use only the second choice of the algorithm and the first two components of the list $\cal A$. This strategy is dependent on the
initial order of $\cal A$, and we order $\cal A$ topologically (based on the input - output connections) at the beginning, in 
order to reduce the number of switches needed.  

Again on the running example, assume
that this strategy composes first $\add$ with $\delay$, and the
result is composed with $\split$. The following component is then
obtained:

\[
((u,s),\,(v,s'),\,\fb(\dg(\add)\parallel\id\comp\dg(\delay)\comp\dg(\split)
\parallel\id)
\]

The $\add$ and $\split$ components are put in parallel with $\id$
for the unconnected input and output state respectively. Next all
components are connected in series with one $\fb$ operator for the
variable $z$.

The next theorem shows that the two strategies are equivalent, and that they are independent of the initial order of $\cal A$.
\begin{theorem}
If $A$ and $B$ are the result of the feedback-parallel and incremental strategies on $\cal A$, respectively, then $A$ and $B$ are input - output equivalent 
($A\ioeq B$). Moreover both strategies are independent of the initial 
order of $\cal A$.
\end{theorem}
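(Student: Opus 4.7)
The plan is to reduce this to Theorem~\ref{thm:correct} by showing that each concrete strategy is a refinement (in the sense of Lemma~\ref{hoare:refinement}) of the nondeterministic Algorithm~\ref{alg:algorithm}. The feedback-parallel strategy corresponds to a single iteration that resolves the nondeterministic \textsf{choose} by always taking branch~(a) with $k = |\mathcal{A}|$ and $\mathcal{C} = \epsilon$: this immediately yields a list of length one on which the final $\FB$ is applied. The incremental strategy instead repeatedly resolves \textsf{choose} by taking branch~(b) on the first two elements of the current list (the topological ordering merely fixes which two these are initially). In both cases, the deterministic strategy is a refinement of Algorithm~\ref{alg:algorithm} because every execution of the strategy is one of the executions of the abstract algorithm.

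Having established the refinement, the correctness guarantee of Theorem~\ref{thm:correct} transfers: any output of the feedback-parallel strategy $A$ and any output of the incremental strategy $B$ both satisfy
\[
A \ioeq \FB(A_1 \Parallel \ldots \Parallel A_n) \quad \text{and} \quad B \ioeq \FB(A_1 \Parallel \ldots \Parallel A_n).
\]
By symmetry and transitivity of $\ioeq$ (Lemma~\ref{lem:ioeq-cong}, parts 2 and 3), we conclude $A \ioeq B$.

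For the independence of the initial order, let $\mathcal{A} = (A_1,\ldots,A_n)$ and $\mathcal{A}' = (A'_1,\ldots,A'_n)$ be any two orderings of the same io-diagrams, with $\iodist(\mathcal{A})$ (and hence $\iodist(\mathcal{A}')$). By Lemma~\ref{lem:ioeq-cong}(6),
\[
A_1 \Parallel \ldots \Parallel A_n \ioeq A'_1 \Parallel \ldots \Parallel A'_n,
\]
and applying $\FB$ preserves this equivalence by Lemma~\ref{lem:ioeq-cong}(4). Since each strategy's output is io-equivalent to the common $\FB$-parallel composition (in either order), the outputs for $\mathcal{A}$ and $\mathcal{A}'$ are io-equivalent to each other by transitivity.

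The main obstacle will be carefully justifying that the incremental strategy, as implemented in~\cite{DBLP:conf/spin/DragomirPT16}, really is an instance of branch~(b) applied repeatedly to the first two list elements. One must check that the concrete construction (parallel-composing $\id$ blocks to handle unmatched inputs/outputs, then composing in series with a single feedback for shared variables) coincides, as an io-diagram, with $\FB(\FB(A) \Comp \FB(B))$ used by branch~(b); this is essentially a bookkeeping argument about how the named operations $\Comp$, $\Parallel$, and $\FB$ insert the appropriate general switch diagrams. Once this correspondence is verified, the rest of the argument is a direct application of the already-proved determinacy theorem and the congruence properties of $\ioeq$.
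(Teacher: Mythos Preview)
Your proposal is correct and follows essentially the same route as the paper: both strategies are refinements of Algorithm~\ref{alg:algorithm}, so by Lemma~\ref{hoare:refinement} and Theorem~\ref{thm:correct} each output is io-equivalent to $\FB(A_1\Parallel\cdots\Parallel A_n)$, whence $A\ioeq B$ by symmetry and transitivity; order-independence then follows from Lemma~\ref{lem:ioeq-cong}(4),(6). The ``main obstacle'' you flag is not actually present in the paper's setting, since Section~\ref{sec:impl} \emph{defines} the two strategies directly as particular determinizations of the abstract algorithm (choice~(a) with $k=|\mathcal{A}|$, respectively choice~(b) on the first two elements), so no separate bookkeeping argument is required.
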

\proof 
Both strategies are refinements of the
nondeterministic algorithm. Therefore, using Lemma~\ref{hoare:refinement}, 
they satisfy the same
correctness conditions (Theorem \ref{thm:correct}), i.e.
$$
A\ioeq \FB(A_1\Parallel \ldots \Parallel A_n) \mbox{ and }
B\ioeq \FB(A_1\Parallel \ldots \Parallel A_n)
$$
where ${\cal A} = (A_1,\ldots,A_n)$.
From this, since $\ioeq$ is transitive and symmetric, we obtain 
 $A\ioeq B$.
 
For the second part, we use a similar reasoning. Let ${\cal A}
  = [A_1,\ldots,A_n]$, and ${\cal B} = [B_1,\ldots,B_n]$ a 
  permutation of $\cal A$.
If $A$ and $B$ are the outputs of feedback-parallel on $\cal A$ and
$\cal B$, respectively, then we prove $A\ioeq B$. Using Theorem \ref{thm:correct}
again we have:\\[-2ex]
$$
A \ioeq \FB(A_1\Parallel \ldots \Parallel A_n)  \mbox{ and }
B \ioeq \FB(B_1\Parallel \ldots \Parallel B_n).
$$
Moreover, because ${\cal B}$ is a permutation of $\cal A$, using Lemma
\ref{lem:ioeq-cong} 
we have
$$\FB(A_1\Parallel \ldots \Parallel A_n) \ioeq \FB(B_1\Parallel \ldots \Parallel B_n).$$
Therefore $A\ioeq B$. The same holds for the incremental strategy. 
\qed

Since both strategies are refinements of the nondeterministic algorithm,
they both satisfy the same correctness conditions of Theorem
\ref{thm:correct}.

\section{Proving Equivalence of A Third Translation Strategy}
\label{sec:feedbackless}

The abstract algorithm for translating HBDs, as well as the two translation strategies presented in Section \ref{sec:impl}, use the $\fb$ operator when translating diagrams. 
As discussed in \cite{DBLP:conf/spin/DragomirPT16},
expressions that contain the $\fb$ operator are more complex to process
and simplify. For this reason, we wish to avoid using the $\fb$ operator as
much as possible.
Fortunately, in practice, diagrams such as those obtained from Simulink are {\em deterministic} and {\em algebraic loop free}.
As it turns out, such diagrams can be translated into algebraic expressions
that do not use the $\fb$ operator at all~\cite{DBLP:conf/spin/DragomirPT16}.
This can be done using the third translation strategy proposed in~\cite{DBLP:conf/spin/DragomirPT16}, called {\em feedbackless}.

While the two translation strategies presented in Section~\ref{sec:impl} can be modeled as refinements of the abstract algorithm, the feedbackless strategy is significantly more complex, and cannot be captured as such a refinement. We therefore treat it separately in this section. In particular,
we formalize the feedbackless strategy and we show that it is equivalent
to the abstract algorithm, namely, that for the same input, the results of the two algorithms are io-equivalent.

\subsection{Deterministic and Algebraic-Loop-Free Diagrams}

Before we introduce the feedbackless strategy, we need some additional definitions.

\begin{definition}
A diagram $S$ is {\em deterministic} if $$[x \leadsto x,x] \comp (S \ \| \ S) = S \comp [y \leadsto y , y].$$
An io-diagram $A$ is {\em deterministic} if $\dg(A)$ is deterministic.
\end{definition}
The definition of deterministic diagram corresponds to the following intuition. If we execute two 
copies of $S$ in parallel using the same input value $x$, we should obtain the same result as executing
one $S$ for the same input value $x$.

The deterministic property is closed under the serial, parallel, and switch operations of the HBD Algebra.
\begin{lemma} If $S,T\in\dgr$ are deterministic and $x,y$ are list of variables
such that $x$ are distinct and $\set(y) \subseteq \set(x)$, then
\begin{enumerate}
\item $[x\leadsto y]$ is deterministic\\[-1.5ex]
\item $S\comp T$ is deterministic\\[-1.5ex]
\item $S\parallel T$ is deterministic
\end{enumerate}
\end{lemma}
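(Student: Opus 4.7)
The plan is to prove the three parts in the order (2), (3), (1), so that part~(1) can invoke the first two as established. Throughout, the strategy is to unfold the definition of determinism, rearrange compositions using the algebra axioms, and reduce back to the deterministic hypothesis on $S$ and $T$.

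For part~(2), starting from the left-hand side $[u\leadsto u\cdot u]\comp((S\comp T)\parallel(S\comp T))$, where $u$ denotes the input list of $S\comp T$, axiom~(\ref{ax:par-comp}) gives $(S\comp T)\parallel(S\comp T) = (S\parallel S)\comp(T\parallel T)$. The determinism of $S$ then moves the duplication past $S$: $[u\leadsto u\cdot u]\comp(S\parallel S) = S\comp[z\leadsto z\cdot z]$ for $z$ the output of $S$. Reassociating (axiom~(\ref{ax:comp-assoc})) and applying the determinism of $T$ to $[z\leadsto z\cdot z]\comp(T\parallel T) = T\comp[y\leadsto y\cdot y]$ finishes the equation.

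For part~(3), let $a,b$ be the input lists of $S,T$ and $c,d$ their output lists, so that $S\parallel T$ has input $a\cdot b$ and output $c\cdot d$. The goal becomes $[a\cdot b\leadsto a\cdot b\cdot a\cdot b]\comp((S\parallel T)\parallel(S\parallel T)) = (S\parallel T)\comp[c\cdot d\leadsto c\cdot d\cdot c\cdot d]$. The main work is to reorganize $(S\parallel T)\parallel(S\parallel T)$ into the shape $(S\parallel S)\parallel(T\parallel T)$ by interleaving switches, using axioms~(\ref{ax:par-assoc}) and~(\ref{ax:sw-par}). These switches can be absorbed into the general switch terms on either side, since $[\_\leadsto\_]$ already represents arbitrary permutations of named variables. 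Once in the separated form, the hypotheses on $S$ and $T$ can be applied independently, and the result is repackaged into the right-hand side using the same wiring identities in reverse.

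For part~(1), I would proceed by structural induction on the recursive definition of $[x\leadsto y]$. The defining equations build $[x\leadsto y]$ from $\id$, $\dup$, $\sink$, $\arb$, composed via $\comp$ and $\parallel$. It therefore suffices to establish two things: first, that each of the atomic wiring constants is deterministic (direct calculations from the defining equation together with axioms~(\ref{ax:split-sink})--(\ref{ax:split-cat}); for $\arb=\fb(\dup(a))$ a short argument using the feedback axioms~(\ref{ax:fb-par}) and~(\ref{ax:fb-comp}) is required), and second, to invoke the already-proved parts~(2) and~(3) for the recursive cases. The assumption $\set(y)\subseteq\set(x)$ is used to ensure that $\arb$ never has to appear in the construction when it is not needed, keeping the algebraic manipulations clean. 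The main obstacle will be part~(3): the bookkeeping of switches needed to transform $(S\parallel T)\parallel(S\parallel T)$ into $(S\parallel S)\parallel(T\parallel T)$, and to match input/output wirings to the switches absorbed in the $[\_\leadsto\_]$ terms, is delicate. I expect the Isabelle formalization to contain several auxiliary lemmas about general switches (in particular, composition laws of the form $[x\leadsto y]\comp[y\leadsto z] = [x\leadsto z]$ under suitable distinctness and containment assumptions) that make this bookkeeping tractable.
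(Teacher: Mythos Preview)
The paper does not give a proof of this lemma in the text; it is one of the results deferred to the Isabelle formalization. So there is no written argument to compare against, and the assessment below is on the soundness of your plan.

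Your approach to parts~(2) and~(3) is the natural one and will go through: axiom~(\ref{ax:par-comp}) for the serial case, and conjugation by switches (axiom~(\ref{ax:sw-par})) plus associativity for the parallel case, with the wiring absorbed into general switch terms.

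For part~(1) your plan contains an internal inconsistency that hides the real point of the hypothesis $\set(y)\subseteq\set(x)$. You write that you will show each atomic constant, \emph{including} $\arb=\fb(\dup(a))$, is deterministic, and then separately that the hypothesis ensures $\arb$ ``never has to appear.'' These are two different arguments, and the first one is the wrong one to attempt: the paper explicitly remarks, immediately after this lemma, that it is \emph{not obvious} whether determinism is closed under feedback. A ``short argument using the feedback axioms~(\ref{ax:fb-par}) and~(\ref{ax:fb-comp})'' is therefore not something you should expect to find; if it existed, the authors' remark would be vacuous. The hypothesis $\set(y)\subseteq\set(x)$ is not a cosmetic cleanliness assumption---it is precisely what guarantees that the recursion for $[x\leadsto u]$ always terminates in the $[u\cdot x'\leadsto u]$ clause and never reaches $[\epsilon\leadsto u]=\arb$. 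Your induction for part~(1) should be structured around this: carry $\set(y)\subseteq\set(x)$ through the recursion, so that the only atomic constants you must prove deterministic are $\id$, $\sink$, and $\dup$, and then close under parts~(2) and~(3). Drop the claim about $\arb$ entirely.
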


It is not obvious whether we can deduce from the axioms that the deterministic property is closed
under the feedback operation. However, since we do not use the feedback operation in this
algorithm, we don't need this property.

\begin{definition}
The {\em output input dependency relation} of an io-diagram $A$ is defined by
$$
 \oirel(A) = \set(\Out(A)) \times \set(\In(A))
$$
and the {\em output input dependency relation} of a list ${\cal A} = [A_1,\ldots, A_n]$
of io-diagrams is defined by
$$
  \oirel({\cal A}) = \oirel(A_1) \cup \ldots \cup \oirel(A_n)
$$
A list ${\cal A}$ of io-diagrams is {\em algebraic loop free}, denoted $\loopfree({\cal A})$, if
$$(\forall x: (x,x) \not\in (\oirel({\cal A}))^+ )$$
where $(\oirel({\cal A}))^+$ is the reflexive and transitive closure of relation $(\oirel({\cal A}))$.
\end{definition}
If we apply this directly to the list of io-diagrams from our example 
${\cal A} = [\add, \delay, \split]$ we obtain 
$$
    \oirel({\cal A}) = \{(x,u), (x, z), (y,x), (y, s), (s', x), (s', s), (z, y), (v,y)\}
$$
and we have that $(z,z) \in (\oirel({\cal A})) ^+$ because $(z,y), (y,x), (x,z) \in \oirel({\cal A})$,
therefore ${\cal A}$ is not algebraic loop free. However, the diagram from the example is accepted by Simulink, and it is considered algebraic loop free. In our treatment $\oirel({\cal A})$ contains
pairs that do not represent genuine output input dependencies. For example output $y$ of $\delay$
depends only on the input $s$, and it does not depend on $x$. Similarly, output $s'$ of $\delay$
depends only on $x$.

Before applying the feedbackless algorithm, we change the initial list of blocks into a new list such
that the output input dependencies are recorded more accurately, and all elements in the new list
have one single output. We split a basic block $A$ into a list of blocks $A_1,\ldots, A_n$ with 
single outputs such that $A \ioeq A_1 \Parallel \ldots \Parallel A_n$. Basically every block
with $n$ outputs is split into $n$ single output blocks.

We could do the splitting systematically by composing a block $A$ with all projections of the output. 
For example if $A=(x,(u_1,\ldots,u_n),S)$, then we can split $A$ into 
$A_i = (x,u_i, S \comp [u_1,\ldots,u_n \leadsto u_i])$.
Such splitting is always possible as shown in the following lemma:
\begin{lemma}
If $A$ is deterministic, then $A_1,\ldots, A_n$ is a splitting of $A$, i.e.
$$
A \ioeq A_1 \Parallel \ldots \Parallel A_n.
$$
\end{lemma}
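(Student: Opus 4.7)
The plan is to show $A \ioeq A_1 \Parallel \cdots \Parallel A_n$ by unfolding the named parallel composition on the right and simplifying using the axioms together with an $n$-ary lifting of the determinism hypothesis. Since every $A_i$ has the same input list $x$, the $\lunion$ in the definition of $\Parallel$ collapses, so by associativity (Axiom \ref{ax:par-assoc})
\[
A_1 \Parallel \cdots \Parallel A_n \;=\; \bigl(x,\; u_1 \cdots u_n,\; [x\leadsto x\cdots x] \comp (\dg(A_1) \parallel \cdots \parallel \dg(A_n))\bigr),
\]
so $\In$ and $\Out$ already match those of $A$, and it remains to prove that the underlying diagram equals $S$ (making the io-equivalence hold with the identity permutations).

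Writing $p_i = [u_1\cdots u_n \leadsto u_i]$, so that $\dg(A_i) = S \comp p_i$, repeated applications of Axiom \ref{ax:par-comp} (together with $\id(\epsilon)$ laws) give
\[
\dg(A_1) \parallel \cdots \parallel \dg(A_n) \;=\; (S \parallel \cdots \parallel S) \comp (p_1 \parallel \cdots \parallel p_n).
\]
Next, I would prove by induction on $n$ an $n$-ary form of the determinism axiom,
\[
[x \leadsto x \cdots x] \comp (S \parallel \cdots \parallel S) \;=\; S \comp [u_1\cdots u_n \leadsto u_1\cdots u_n\,\cdots\,u_1\cdots u_n],
\]
using the binary hypothesis $[x\leadsto x,x]\comp(S\parallel S) = S\comp[y\leadsto y,y]$ in the inductive step, plus Axioms \ref{ax:split-assoc}, \ref{ax:split-cat}, and \ref{ax:par-comp} to split off one copy of $S$ from $S^{\parallel (n+1)}$. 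Combining the two displays rewrites the underlying diagram of the parallel composition as $S \comp Q$ where
\[
Q \;=\; [u_1\cdots u_n \leadsto u_1\cdots u_n\,\cdots\,u_1\cdots u_n] \comp (p_1 \parallel \cdots \parallel p_n).
\]

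The last step is to show $Q = \id(\tv(u_1\cdots u_n))$, after which Axiom \ref{ax:id-comp} finishes the proof. Intuitively, $Q$ duplicates the input $n$ times and then projects the $i$-th component out of the $i$-th copy while sinking the rest, which is the identity. Formally, I would establish a small auxiliary lemma giving the reduction rules for $[v \leadsto w_1\cdots w_k] \comp (q_1 \parallel \cdots \parallel q_k)$ in terms of $[v \leadsto w_1],\ldots,[v \leadsto w_k]$, using the recursive definition of $[\_ \leadsto \_]$ and Axioms \ref{ax:split-sink}, \ref{ax:split-assoc}, \ref{ax:split-cat}, together with $[v \leadsto v] = \id(\tv(v))$ (already provable from the axioms for lists of distinct names).

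The main obstacle will be the $n$-ary lifting of determinism: Axiom \ref{ax:par-comp} only rearranges two-way parallel products, so the inductive step must carefully insert the binary identity inside a larger parallel structure, which requires interleaving $\dup$, $\sw$, and $[\_\leadsto\_]$ manipulations. The collapse $Q = \id$ is routine once the auxiliary lemma on compositions of general switch diagrams is in place, but it, too, relies on careful bookkeeping of sinks and splits, so I expect most of the work to consist of proving these two purely algebraic helper lemmas from the axioms in Section \ref{sec:model}.
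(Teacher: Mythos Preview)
Your approach is sound and follows the natural strategy: unfold the iterated named parallel, factor each $\dg(A_i)=S\comp p_i$ via Axiom~\ref{ax:par-comp}, lift the determinism hypothesis to $n$ copies by induction, and collapse the resulting $[u\leadsto u^n]\comp(p_1\parallel\cdots\parallel p_n)$ to the identity. The two helper lemmas you single out (the $n$-ary determinism law and the reduction of the split--project cascade to $\id$) are exactly the pieces that carry the weight, and your outline of how to obtain them from Axioms~\ref{ax:par-comp}, \ref{ax:split-sink}, \ref{ax:split-assoc}, \ref{ax:split-cat} together with the recursive definition of $[\_\leadsto\_]$ is correct.

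The paper itself does not give a proof of this lemma in the text; it is one of the results deferred to the Isabelle formalization (see the footnote preceding Lemma~\ref{hoare:assign}). Consequently there is no written argument to compare against, but your decomposition is the expected one and matches what a mechanized proof in this algebra would have to do: first normalize the iterated $\Parallel$ with identical input lists to the form $[x\leadsto x^n]\comp(\dg(A_1)\parallel\cdots\parallel\dg(A_n))$, then push $S$ through the $n$-fold split using determinism, and finally discharge the residual switch term. One minor point worth making explicit in your write-up is that $[x\leadsto x]=\id(\tv(x))$ for distinct $x$ (needed so that io-equivalence with identical input/output lists reduces to equality of the underlying diagrams); this is itself a small induction from the definition of $[\_\leadsto\_]$ and Axiom~\ref{ax:split-sink}, but you already allude to it.
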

However, this will still introduce unwanted 
output input dependencies. We solve this problem
by defining the splitting for every basic block, such that it accurately records the 
output input dependency. For example, we split the delay block into $\delay_1$ and $\delay_2$:
$$
\begin{array}{lll}
\delay_1 = (s,y,[s \leadsto s]) = (s,y,\id) \\[1ex]
\delay_2 = (x,s',[x \leadsto x]) = (x,s',\id)
\end{array}
$$
The $\split$ block is split into $\split_1$ and $\split_2$:
$$
\begin{array}{lll}
\split_1 = (y,z,[y \leadsto y]) = (y,z,\id)\\[1ex]
\split_2 = (y,v,[y \leadsto y]) = (y,v,\id)
\end{array}
$$
The blocks $\delay_1$, $\delay_2$, $\split_1$, and $\split_2$ are all the same, except the naming 
of the inputs and outputs.
The $\add$ block has one single output that depends on both inputs, so it remains unchanged.

After splitting, the list of single output blocks for our example becomes 
$${\cal B} = \big(\add, \delay_1, \delay_2, \split_1, \split_2\big)$$ 
and we have
$$
	\oirel({\cal B}) = \{(x,u), (x, z), (y, s), (s', x), (z, y), (v, y)\}.
$$
Now ${\cal B}$ is algebraic loop free.

\begin{definition}
A block diagram is {\em algebraic loop free} if, after splitting, the list of blocks is 
algebraic loop free.
\end{definition}

We assume that every splitting of a block $A$ into $B_1, \ldots, B_k$ is done such that 
$A \ioeq B_1 \Parallel \ldots \Parallel B_k$.
\begin{lemma}
\label{lem:splitting} 
If a list of blocks ${\cal A} = (A_1,\ldots,A_n)$ is split into 
${\cal B} = (B_1, \ldots, B_m)$, then we have
$$
A_1\Parallel \ldots \Parallel A_n  \ioeq B_1 \Parallel \ldots \Parallel B_m.
$$
\end{lemma}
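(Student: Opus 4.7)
The plan is a straightforward induction on $n = |{\cal A}|$, using the splitting assumption at the level of individual blocks together with the congruence properties of $\ioeq$.

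For the base case $n = 1$, the list ${\cal A} = (A_1)$ and the list ${\cal B} = (B_1,\ldots, B_m)$ is by definition a splitting of $A_1$, so the desired io-equivalence $A_1 \ioeq B_1 \Parallel \ldots \Parallel B_m$ is exactly the stated splitting assumption.

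For the inductive step, suppose the lemma holds for lists of length $n$, and consider ${\cal A} = (A_1,\ldots,A_n,A_{n+1})$. The splitting of $\cal A$ decomposes as ${\cal B} = {\cal B}' \cdot {\cal B}''$, where ${\cal B}' = (B'_1,\ldots,B'_k)$ is a splitting of $(A_1,\ldots,A_n)$ and ${\cal B}'' = (B''_1,\ldots,B''_\ell)$ is a splitting of $A_{n+1}$. The induction hypothesis gives
$$ A_1 \Parallel \ldots \Parallel A_n \;\ioeq\; B'_1 \Parallel \ldots \Parallel B'_k,$$
while the splitting assumption applied to $A_{n+1}$ gives
$$ A_{n+1} \;\ioeq\; B''_1 \Parallel \ldots \Parallel B''_\ell.$$
Combining these two via congruence of $\ioeq$ with respect to $\Parallel$, and using associativity of $\Parallel$ (axiom (\ref{ax:par-assoc})), yields
$$ A_1 \Parallel \ldots \Parallel A_{n+1} \;\ioeq\; B_1 \Parallel \ldots \Parallel B_m.$$

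The main obstacle is that Lemma \ref{lem:ioeq-cong} as stated gives commutativity and permutation-invariance of $\Parallel$ with respect to $\ioeq$, but not explicitly the congruence property $A \ioeq A' \land B \ioeq B' \Rightarrow A \Parallel B \ioeq A' \Parallel B'$ needed in the inductive step. I would establish this as a preliminary lemma: given the switch diagrams witnessing $A \ioeq A'$ and $B \ioeq B'$, build the switch diagrams $[\In(A \Parallel B) \leadsto \In(A' \Parallel B')]$ and $[\Out(A' \Parallel B') \leadsto \Out(A \Parallel B)]$ as parallel compositions of the corresponding switches for $A,A'$ and $B,B'$, and then use axiom (\ref{ax:par-comp}) (distributivity of serial and parallel composition) to rearrange the resulting term into the composition required by the definition of $\ioeq$. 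The required disjointness of outputs needed to form the relevant io-diagrams is ensured by $\iodist$, which is preserved by splitting.
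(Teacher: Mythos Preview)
The paper does not give an explicit proof of this lemma; it defers to the Isabelle formalization. Your inductive argument is the natural one and is essentially correct.

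One point deserves care: your closing remark that ``$\iodist$ is preserved by splitting'' is not quite right. Output-disjointness is indeed preserved (each output of an original block goes to exactly one piece in the split), but \emph{input}-disjointness is generally lost, since several single-output pieces coming from the same block $A_i$ can share inputs. This does not break your argument, because the disjointness you actually need in the inductive step is between the two halves $(A_1\Parallel\cdots\Parallel A_n)$ and $A_{n+1}$ (and their respective splittings), and that follows directly from $\iodist(\mathcal{A})$ on the original list, not from anything about $\mathcal{B}$. In particular, under that disjointness you have $\In(A)\lunion\In(B)=\In(A)\cdot\In(B)$, which is exactly what makes your proposed parallel-of-switches construction for the congruence lemma $A\ioeq A'\land B\ioeq B'\Rightarrow A\Parallel B\ioeq A'\Parallel B'$ go through cleanly via axiom~(\ref{ax:par-comp}). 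So the plan is sound; just be explicit that the hypothesis $\iodist(\mathcal{A})$ (implicit from context) is where the needed disjointness comes from, rather than from any preservation claim about splitting.
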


\newcommand{\okfbless}{\mathsf{ok\_fbless}}

For the feedbackless algorithm, we assume that ${\cal A}$ is algebraic loop free, all
io-diagrams in $\cal A$ are single output and deterministic, and all outputs are distinct. 
We denote this by $\okfbless({\cal A})$.

\newcommand{\internal}{\mathsf{internal}}
\begin{definition}
For $\cal A$, such that $\okfbless({\cal A})$, a variable $u$ is {\em internal} in $\cal A$
if there exist $A$ and $B$ in $\cal A$ such that $\Out(A) = u$ and $u \in \set(\In(B))$.
We denote the set of internal variables of $\cal A$ by $\internal({\cal A})$.
\end{definition}

\newcommand{\CompA}{\rhd}
\begin{definition}
If $A$ and $B$ are single output io-diagrams, then their {\em internal serial composition} is
defined by
$$
A \CompA B = \mathsf{if\ } \set(\Out(A)) \subseteq \set(\In(B)) 
   \mathsf{\ then \ } A \Comp B \mathsf{\ else \ } B
$$
and 
$$
A \CompA (B_1,\ldots,B_n) = (A \CompA B_1,\ldots, A \CompA B_n)
$$
\end{definition}
We use this composition when all io-diagrams have a single output, and for an
io-diagram $A$, we connect $A$ in series with all io-diagrams from $B_1,\ldots,B_n$ that 
have $\Out(A)$ as an input.

The internal serial composition satisfies some properties that are used in proving
the correctness of the algorithm.
\begin{lemma} \label{lemma:internal:comute}
 If $\okfbless(A,B,C)$ then $( (A\CompA B) \CompA (A\CompA C) ) \ioeq ( (B \CompA A) \CompA (B\CompA C) )$
\end{lemma}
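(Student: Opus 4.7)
The plan is a case analysis on the four Boolean flags
$p := (\Out(A) \in \set(\In(B)))$, $q := (\Out(A) \in \set(\In(C)))$,
$r := (\Out(B) \in \set(\In(A)))$, $s := (\Out(B) \in \set(\In(C)))$,
which control how $\CompA$ unfolds on each side: $A \CompA B$ is $A \Comp B$ when $p$ and $B$ otherwise, and similarly for the other three uses of $\CompA$ appearing in the statement. Since $A,B,C$ are single output with distinct outputs, each of LHS and RHS reduces to one of a short list of concrete io-diagrams (one of $C$, $A \Comp C$, $B \Comp C$, $A \Comp B \Comp C$, $B \Comp A \Comp C$, $(A \Comp B) \Comp (A \Comp C)$, $(B \Comp A) \Comp (B \Comp C)$).

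First I would invoke $\loopfree(A,B,C)$ to rule out every 2-cycle, namely $p \wedge r$, as well as combinations of $q, s$ together with $r$ or $p$ that close a cycle through $C$. This collapses the $2^4$ truth tables to the genuinely reachable ones. In all ``symmetric'' configurations, the computed diagrams on the two sides coincide after using Lemma \ref{hoare:assign}-style bookkeeping for $\In$ and $\Out$ (e.g., both reduce to $C$ when $\neg s$ and $\neg q$, or to $A \Comp C$ when $p,q$ hold but $\neg s$, etc.), and $A \ioeq A$ closes the case by Lemma \ref{lem:ioeq-cong}(1).

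The genuinely nontrivial configurations are those in which $\Out(A)$ is consumed by both $B$ and $C$ (so $p \wedge q$, hence $\neg r$), with $s$ also possibly true, which make the LHS contain two syntactic copies of $A$ while the RHS contains only one. Here the goal reduces to an io-equivalence of the form
\[
(A \Comp B) \Comp (A \Comp C) \ \ioeq\ A \Comp (B \Comp C)
\]
(and the mirror statement obtained by swapping $A$ and $B$). The main obstacle is that the associativity lemma for $\Comp$ does not apply: its second side condition, $(\Out(A) \linter \In(B)) \linter \In(C) = \epsilon$, fails exactly because $\Out(A) \in \In(C)$. Instead, I would expand both sides at the level of the underlying algebra and use the determinism axiom
\[
[x \leadsto x,x] \comp (\dg(A) \parallel \dg(A)) = \dg(A) \comp [a \leadsto a,a]
\]
to fuse the duplicated copy of $A$ with the remaining one, combined with Axioms (\ref{ax:par-comp}), (\ref{ax:split-assoc}), (\ref{ax:split-cat}), the bookkeeping laws for general switches $[\_\leadsto\_]$ developed in Section \ref{subsec:switch}, and finally associativity of $\comp$ and $\parallel$ (Axioms (\ref{ax:comp-assoc}), (\ref{ax:par-assoc})) to align the input/output orderings on the two sides. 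Because $\ioeq$ only demands equality up to input and output permutations, once the two ``$A$''s have been merged the remaining work is essentially a switch-diagram manipulation.

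The hardest step, as flagged above, will be the determinism-based fusion and the accompanying permutation bookkeeping: one must verify at each rewrite that the concrete input list (built with $\lunion$ and $\ominus$) and output list are tracked correctly and that the final connecting switch matches the witness permutation required by the definition of $\ioeq$. The mirror case (where the roles of $A$ and $B$ are swapped, arising when $\neg p \wedge q \wedge s \wedge r$) is handled symmetrically, noting that determinism of $B$ is part of $\okfbless(A,B,C)$. This completes the case analysis and hence the lemma.
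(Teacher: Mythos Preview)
The paper does not give a written proof of this lemma; it is one of those explicitly deferred to the Isabelle formalization, so there is no human-level argument to compare against. Your plan---case split on the four flags and use the determinism hypothesis to fuse duplicated copies---is the right architecture and is almost certainly how the mechanized proof is structured as well. A few points need tightening, though.

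First, your claim about which configurations loop-freeness excludes is too strong. Among the four flags $p,q,r,s$, the \emph{only} combination ruled out is $p\wedge r$ (the $2$-cycle $(b,a),(a,b)$). The flags $q$ and $s$ encode edges $(c,a)$ and $(c,b)$ in $\oirel$; combining them with $p$ or $r$ never produces an edge \emph{out of} $c$, so no cycle ``through $C$'' arises from these four flags alone. You therefore have twelve live cases, not fewer.

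Second, some of those twelve are neither trivially identical nor duplicated-component cases, and your write-up glosses over them. In particular:
\begin{itemize}
\item $(\neg p,q,\neg r,s)$: here $a,b\in\set(\In(C))$ but $A$ and $B$ are independent. The LHS becomes $B\Comp(A\Comp C)$ and the RHS becomes $A\Comp(B\Comp C)$; neither side is duplicated. You need the $\Comp$-associativity lemma on each side (its hypotheses hold because $\Var(A,B)=\Var(B,A)=\epsilon$), reducing the goal to $(B\Comp A)\Comp C \ioeq (A\Comp B)\Comp C$. Since $\Var(A,B)=\epsilon$ one checks $A\Comp B = A\Parallel B$ (and symmetrically), so Lemma~\ref{lem:ioeq-cong}(5) plus a left-congruence of $\Comp$ with respect to $\ioeq$ closes the case. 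This last congruence is not among the paper's stated lemmas and must be pulled from the formalization.
\item $(p,\neg q,\neg r,s)$ and $(\neg p,q,r,\neg s)$: these reduce to $(A\Comp B)\Comp C$ versus $A\Comp(B\Comp C)$ (resp.\ with $A,B$ swapped). They are handled by the $\Comp$-associativity lemma, whose side conditions hold precisely because $\neg q$ (resp.\ $\neg s$). These are not ``coinciding diagrams'' without that lemma.
\end{itemize}

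Third, the reference to ``Lemma~\ref{hoare:assign}-style bookkeeping'' is a slip: that lemma is the Hoare rule for nondeterministic assignment and plays no role here.

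With these cases filled in, your plan goes through; the determinism-based fusion you describe for $(p,q,\neg r,s)$ and its mirror $(\neg p,q,r,s)$ is indeed the crux.
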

\begin{lemma} \label{lemma:fbless:step}
If $\okfbless({\cal A})$ and $A \in \set({\cal A})$ such that $\Out(A) \in \internal({\cal A})$ 
then
\begin{enumerate} 
  \item $\okfbless(A \CompA ({\cal A} \ominus A) )$ and \\[-1.5ex]
  \item $\internal(A \CompA ({\cal A} \ominus A)) = \internal({\cal A}) - \{\Out(A)\}.$
\end{enumerate}
\end{lemma}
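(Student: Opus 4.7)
The two claims are proved in sequence. Most of the work goes into verifying that $\okfbless$ is preserved by the substitution operator $A \CompA(\cdot)$; claim (2) then falls out of an analysis of which variables remain outputs of the new list. Throughout, I would use the definition $A \CompA B = A \Comp B$ when $\set(\Out(A)) \subseteq \set(\In(B))$ and $A \CompA B = B$ otherwise, and the key preliminary observation that in both cases the output of the new element equals $\Out(B)$: trivial when $B$ is unchanged, and in the composed case obtained from the definition $\Out(A \Comp B) = y \cdot \Out(B)$ with $y = \Out(A) \ominus \Var(A,B) = \epsilon$ (since $\Out(A)$ is the unique variable forming $\Var(A,B)$ when the condition fires).

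For claim (1), I would check the four conjuncts of $\okfbless$ in turn. Single-outputness and distinctness of outputs follow from the observation above together with the facts that $\Out(A)$ is removed from the set of outputs while the rest were already distinct in $\cal A$. Determinism of each $A \Comp B$ is obtained by unfolding $\Comp$ into $\comp$, $\parallel$, and general-switch diagrams $[\_ \leadsto \_]$, and applying the earlier lemma that these three constructions preserve determinism.

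The main obstacle is preserving algebraic-loop-freeness, and here I would use a lifting argument. Consider an arbitrary edge $(\Out(B'), v)$ in $\oirel(A \CompA ({\cal A} \ominus A))$, where $B'$ is either some unchanged $B \in {\cal A} \ominus A$ or a composition $A \Comp B$. In the first subcase and in the subcase $v \in \In(B) \ominus \Var(A,B)$, the same edge $(\Out(B), v)$ already lies in $\oirel({\cal A})$. In the remaining subcase $v \in \In(A)$, the composition fired, so $\Out(A) \in \In(B)$, giving the two-edge path $(\Out(B), \Out(A)), (\Out(A), v)$ in $\oirel({\cal A})$. Since $\Out(A)$ is no longer the output of any element of the new list, it cannot appear as the source of any edge in the new relation; hence any cycle in $\oirel(A \CompA ({\cal A} \ominus A))$ lifts, edge by edge, to a cycle in $(\oirel({\cal A}))^+$, contradicting $\loopfree({\cal A})$.

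For claim (2), I would compare the two sets directly. The output variables of the new list are exactly $\set(\Out({\cal A})) \setminus \{\Out(A)\}$, so $\Out(A) \notin \internal(A \CompA ({\cal A} \ominus A))$. For any variable $u \neq \Out(A)$, $u$ is the output of some new element iff some $B \in {\cal A} \ominus A$ has $\Out(B) = u$; and using $\In(A \Comp C) = \In(A) \lunion (\In(C) \ominus \Var(A,C))$ together with $u \neq \Out(A)$, one sees that $u$ is an input of some new element iff $u \in \In(A) \cup \set(\In({\cal A} \ominus A))$. The hypothesis $\Out(A) \in \internal({\cal A})$ ensures that $A$ is actually composed at some place in the list, so that the inputs of $A$ really do survive as inputs of some new element. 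Combining these two equivalences yields $\internal(A \CompA ({\cal A} \ominus A)) = \internal({\cal A}) \setminus \{\Out(A)\}$, as required.
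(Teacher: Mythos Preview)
Your proof plan is sound. The paper itself does not include a proof of this lemma, deferring it (like several other results) to the accompanying Isabelle formalization, so there is no in-paper argument to compare your approach against. The key ingredients you identify --- that $\Out(A \CompA B) = \Out(B)$ in both branches, that determinism is preserved by unfolding $\Comp$ into serial, parallel, and well-formed general switches, that loop-freeness is preserved by lifting each new edge to a path of length at most two in the old relation, and that at least one genuine composition occurs because $\Out(A)\in\internal({\cal A})$ together with loop-freeness rules out the self-loop $\Out(A)\in\In(A)$ --- are exactly what a direct proof (and the mechanization) must establish. One small point worth making explicit when you write it up: for the determinism step you are invoking the paper's lemma on $[x\leadsto y]$, which requires $x$ distinct and $\set(y)\subseteq\set(x)$; it is easy but worth noting that each of the switches appearing in the expansion of $A\Comp B$ satisfies this side condition in the single-output setting.
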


\subsection{Functional Definition of the Feedbackless Strategy}

\newcommand{\fbless}{\mathsf{fbless}}
\begin{definition}
For a list $x$ of distinct internal variables of ${\cal A}$, we define by induction on
$x$ the function $\fbless(x,{\cal A})$ by
$$
\begin{array}{lll}
\fbless(\epsilon,{\cal A}) & = & {\cal A} \\[1ex]
\fbless(u\cdot x,{\cal A}) & = & \fbless(x, A \CompA ({\cal A} \ominus A))
\end{array}
$$
where $A$ is the unique io-diagrams from $\cal A$ with $\Out(A) = u$.
\end{definition}
Lemma~\ref{lemma:fbless:step} shows that the function $\fbless$ is well defined.

The function $\fbless$ is the functional equivalent of the feedbackless 
iterative algorithm that we introduce in Subsection~\ref{subsec:fblessalg}.
\begin{theorem}\label{theorem:fbless}
If ${\cal A} = (A_1,\ldots, A_n)$ is a list of io-diagrams satisfying 
$\okfbless({\cal A})$, $x$ 
is a distinct list of all internal variables of $\cal A$ ($\set(x) = \internal{\cal A}$), and
$(B_1,\ldots, B_k) = \fbless(x,{\cal A})$
then
$$
\FB(A_1 \Parallel \ldots \Parallel A_n) \ioeq (B_1 \Parallel \ldots \Parallel B_n).
$$
\end{theorem}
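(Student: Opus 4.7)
The plan is to proceed by induction on the length of the list $x$ of internal variables.

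For the base case $x = \epsilon$, we have $\internal({\cal A}) = \emptyset$, so no output of any $A_i$ coincides with any input of any $A_j$. Consequently $\Var(A_1\Parallel\cdots\Parallel A_n,\, A_1\Parallel\cdots\Parallel A_n) = \epsilon$, so $\FB$ performs zero iterations of the underlying $\fb$ and reduces to the identity up to $\ioeq$ (the surrounding switches $[\In(A)\leadsto\In(A)]$ and $[\Out(A)\leadsto\Out(A)]$ act as identities on distinct lists). Since $\fbless(\epsilon,{\cal A}) = {\cal A}$, the claim holds.

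For the inductive step $x = u\cdot x'$, let $A$ be the unique element of ${\cal A}$ with $\Out(A) = u$ (unique by the single-output and distinct-output conditions in $\okfbless$), and let ${\cal A}' = (C_1,\ldots,C_m) = A\CompA({\cal A}\ominus A)$. By Lemma~\ref{lemma:fbless:step}, $\okfbless({\cal A}')$ holds and $\internal({\cal A}') = \internal({\cal A})\setminus\{u\}$, so $x'$ enumerates exactly the internal variables of ${\cal A}'$. The induction hypothesis applied to $({\cal A}', x')$ yields $\FB(C_1\Parallel\cdots\Parallel C_m) \ioeq B_1\Parallel\cdots\Parallel B_k$, since $\fbless(x',{\cal A}') = \fbless(x,{\cal A}) = (B_1,\ldots,B_k)$. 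Because $(A_1,\ldots,A_n)$ is a permutation of $(A)\cdot({\cal A}\ominus A)$, Lemma~\ref{lem:ioeq-cong} (permutation and congruence under $\FB$) reduces the goal to the one-step equivalence
\[
\FB\bigl(A \Parallel D_1 \Parallel\cdots\Parallel D_{n-1}\bigr) \ \ioeq\ \FB(C_1 \Parallel\cdots\Parallel C_m),
\]
where $({\cal A}\ominus A) = (D_1,\ldots,D_{n-1})$.

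The main obstacle is establishing this one-step equivalence. Partition $(D_1,\ldots,D_{n-1})$ into the consumers $D_i$ with $u\in\set(\In(D_i))$, which are replaced by $A\Comp D_i$ in ${\cal A}'$, and the non-consumers, which are left unchanged. The left-hand side uses a single copy of $A$ whose output $u$ is routed via feedback into every consumer $D_i$; the right-hand side inlines $A$ separately in front of each such $D_i$. The equivalence crucially exploits determinism of $A$: under $\Parallel$ the several copies of $A$ share input names, hence receive identical values, and by determinism produce identical outputs, matching the single-copy semantics on the left. Formally this is proved by rewriting with the feedback axioms (\ref{ax:fb-par}) and (\ref{ax:fb-comp}) (to push the $\FB$ past $A$ and across the consumers) together with the determinism rule $[x\leadsto x,x]\comp(\dg(A)\parallel \dg(A)) = \dg(A)\comp[y\leadsto y,y]$ (to collapse or duplicate $A$ through $\dup$); loop-freeness of ${\cal A}$ ensures $u\notin\set(\In(A))$, so the eliminated feedback on $u$ does not cycle through $A$ itself. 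Transitivity and congruence of $\ioeq$ (Lemma~\ref{lem:ioeq-cong}) then chain the induction hypothesis with the one-step equivalence to yield the theorem.
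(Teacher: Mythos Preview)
Your inductive skeleton on $|x|$ is the natural one, the base case is handled correctly, and your use of Lemma~\ref{lemma:fbless:step} to propagate $\okfbless$ and shrink the internal-variable set is exactly right. The paper does not spell out a proof in the text; it defers to the Isabelle formalization and names Lemmas~\ref{lemma:internal:comute} and~\ref{lemma:fbless:step} as the key ingredients.

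The point of divergence is that you never invoke Lemma~\ref{lemma:internal:comute}. That lemma is a local-confluence (diamond) property for $\CompA$: eliminating $u$ then $v$ agrees, up to $\ioeq$, with eliminating $v$ then $u$. Its presence as a named ingredient strongly suggests the paper's formalization factors the argument differently from yours: first use Lemma~\ref{lemma:internal:comute} to show that $\fbless(x,{\cal A})$ is independent of the order of $x$, and then establish the io-equivalence with $\FB(A_1\Parallel\cdots\Parallel A_n)$ only for one convenient ordering---a topological one guaranteed by $\loopfree({\cal A})$, in which at each step the block $A$ being inlined has no remaining internal variables among its own inputs. In that regime the one-step equivalence is substantially easier: the feedback on $u$ is genuinely ``outermost'' and unwinds via axiom~(\ref{ax:fb-comp}) and the determinism identity without having to thread through still-pending feedbacks on $\In(A)$.

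Your route instead proves the one-step equivalence for an \emph{arbitrary} internal variable $u$, whose producer $A$ may itself consume other internal variables. This is sound in principle---determinism and loop-freeness are the right hypotheses---but the sketch you give (``push $\FB$ past $A$ via (\ref{ax:fb-par}), (\ref{ax:fb-comp}), then collapse copies of $A$ via determinism'') hides a good deal of work: you also need (\ref{ax:fb-comm}) to isolate the $u$-feedback from the others, and you must contend with the fact that $\okfbless$ does \emph{not} guarantee disjoint inputs across blocks (so Theorem~\ref{thm:par-ser} does not apply directly and the general-switch bookkeeping is heavier). What your decomposition buys is a single induction with no separate confluence argument; what the paper's buys is a much lighter inductive step at the cost of proving Lemma~\ref{lemma:internal:comute} once.
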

This theorem together with Lemma~\ref{lem:splitting} show that the result of the $\fbless$
function is io-equivalent to the results of the nondeterministic algorithm.
This theorem also shows that the result of $\fbless$ is independent of the choice 
of the order of the internal variables in $x$.

The proof of Theorem~\ref{theorem:fbless} is available in the RCRS formalization~\cite{RCRS_Toolset_figshare}, and it is
based on Lemmas~\ref{lemma:internal:comute} and~\ref{lemma:fbless:step} and other results.

\subsection{The Feedbackless Translation Algorithm}\label{subsec:fblessalg}
The recursive function $\fbless$ calculates the feedbackless translation, but it assumes
that the set of internal variables is given at the beginning in a specific order. 
We want an equivalent iterative version of this function, which at every step picks 
an arbitrary io-diagram $A$ with internal output, and performs one step:
$$
{\cal A} := A \CompA ({\cal A}\ominus A)
$$
The feedbackless algorithm is given in Alg.~\ref{alg:feedbackless}.
\begin{table}[H]
\renewcommand{\tablename}{\bf Alg.}
\centering
\normalsize
\begin{minipage}{0.9\textwidth}
{\sf
input: ${\cal A} = (A_{1}\ldots,A_{n})$ \ \  \ {\rm (list of io-diagrams 
satisfying $\okfbless({\cal A})$)}\\[1ex]
$\mathsf{while}\ \internal({\cal A}) \not= \emptyset$:\\[1ex]
\mbox{}~~~~~~$[\,{\cal A} := {\cal A'}\  |\ \exists\  
	 A \in \set({\cal A}) : \Out(A) \in \internal({\cal A})
  \,\land\,
   {\cal A}' = A \CompA ({\cal A}\ominus A)\,]$
    \\[1ex]
$A := B_1 \Parallel \ldots \Parallel B_k$ \rm \ \ \ (where 
${\cal A} = (B_1 , \ldots , B_k)$)
}\\
\end{minipage}
\caption{Feedbackless algorithm for translating HBDs.}
\label{alg:feedbackless}
\end{table}

The feedbackless algorithm is also nondeterministic, because it allows
choosing at every step one of the available io-diagrams with internal output.
As we will see in Subsection~\ref{subsec:efficiency}, this nondeterminism allows for 
different implementations regarding the complexity of the generated expressions.
\begin{theorem}
If ${\cal A} = (A_{1}\ldots,A_{n})$ is a list of io-diagrams satisfying 
$\okfbless({\cal A})$, then the feedbackless algorithm terminates for input ${\cal A}$,
and if $A$ is the output of the algorithm on $\cal A$, then
$$\FB(A_{1}\Parallel \ldots \Parallel A_{n}) \ioeq A.$$
\end{theorem}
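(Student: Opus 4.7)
The plan is to apply Lemma~\ref{hoare:while} with a suitable invariant and variant, and then delegate most of the semantic content to Theorem~\ref{theorem:fbless}. Termination uses the variant $t(\mathcal{A}) = |\internal(\mathcal{A})|$: each loop body chooses some $A \in \set(\mathcal{A})$ with $\Out(A) \in \internal(\mathcal{A})$ and replaces $\mathcal{A}$ by $A \CompA (\mathcal{A} \ominus A)$; Lemma~\ref{lemma:fbless:step}(2) ensures this removes exactly $\Out(A)$ from the internal variables, strictly decreasing $t$.

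For correctness, let $R = \FB(A_1 \Parallel \ldots \Parallel A_n)$. I would use the invariant
$$
I(\mathcal{A}) \equiv \okfbless(\mathcal{A}) \,\land\, \bigl(\forall \text{ ordering } x \text{ of } \internal(\mathcal{A}):\ B_1 \Parallel \ldots \Parallel B_k \ioeq R,\ \text{where } (B_1,\ldots,B_k) = \fbless(x,\mathcal{A})\bigr).
$$
Initially, $I(\mathcal{A}_0)$ is exactly the statement of Theorem~\ref{theorem:fbless} applied to $\mathcal{A}_0$ for each ordering of $\internal(\mathcal{A}_0)$.

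Preservation follows from the recursive clause of $\fbless$. Writing $u = \Out(A)$ and $\mathcal{A}' = A \CompA (\mathcal{A} \ominus A)$, any ordering $x'$ of $\internal(\mathcal{A}')$ extends to an ordering $u \cdot x'$ of $\internal(\mathcal{A})$ by Lemma~\ref{lemma:fbless:step}(2), and by definition
$$
\fbless(x', \mathcal{A}') = \fbless(u \cdot x', \mathcal{A}).
$$
Thus the universal statement in $I(\mathcal{A}')$ reduces, for each $x'$, to the instance $u \cdot x'$ of the universal statement in $I(\mathcal{A})$; Lemma~\ref{lemma:fbless:step}(1) supplies $\okfbless(\mathcal{A}')$. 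At loop exit, $\internal(\mathcal{A}) = \emptyset$, so the only ordering is $x = \epsilon$ and $\fbless(\epsilon, \mathcal{A}) = \mathcal{A}$; instantiating $I(\mathcal{A})$ with this ordering yields that the parallel composition of the final $\mathcal{A}$ is $\ioeq R$, which is exactly the value assigned to $A$ in the last line of the algorithm.

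I expect the main obstacle to be bookkeeping rather than mathematical depth: formulating the loop body as a nondeterministic assignment in the refinement calculus framework (so that Lemma~\ref{hoare:assign} applies cleanly inside Lemma~\ref{hoare:while}), and carefully propagating the universal quantifier over orderings of internal variables through both the initialization and the body. All of the substantive semantic work is already packaged in Theorem~\ref{theorem:fbless} and Lemma~\ref{lemma:fbless:step}, so the remainder of the proof should reduce to standard Hoare-style scaffolding on top of them.
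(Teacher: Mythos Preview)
Your proposal is correct, and the invariant you chose is well-suited: it packages exactly the information needed so that Theorem~\ref{theorem:fbless} discharges the base case, the recursive clause of $\fbless$ together with Lemma~\ref{lemma:fbless:step} discharges preservation, and the $x=\epsilon$ instance discharges the postcondition.

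The paper takes a slightly different route. Rather than applying Lemma~\ref{hoare:while} directly, it first establishes the refinement
\[
\{\okfbless({\cal A})\}\comp
[\,A := B_1\Parallel\ldots\Parallel B_k \mid \exists x:\set(x)=\internal({\cal A})\land (B_1,\ldots,B_k)=\fbless(x,{\cal A})\,]
\ \sqsubseteq\ \mathsf{Feedbackless},
\]
and then invokes Lemma~\ref{hoare:refinement} to transport the Hoare triple (which is immediate from Theorem~\ref{theorem:fbless}) from the left-hand specification to the algorithm; termination is inherited the same way. In effect the paper factors the argument into ``$\mathsf{Feedbackless}$ implements $\fbless$'' plus ``$\fbless$ is correct'', whereas you fuse these into a single Hoare-style loop proof whose invariant already encodes the correctness target. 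The underlying combinatorics (Lemma~\ref{lemma:fbless:step} and the unfolding $\fbless(u\cdot x',{\cal A})=\fbless(x',{\cal A}')$) are identical in both approaches; what differs is only whether one names the intermediate specification explicitly and uses refinement, or bakes it into the invariant. The paper's decomposition is a bit more modular (the refinement fact could be reused for other postconditions), while yours is more self-contained and avoids the extra indirection through Lemma~\ref{hoare:refinement}.
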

\proof
Let $\mathsf{Feedbackless}$ be the predicate transformer of the feedbackless
algorithm. We prove that choosing nondeterministically
an order $x$ of the internal variables of ${\cal A}$, and 
calculating $\fbless(x,{\cal A})$ is refined by $\mathsf{Feedbackless}$. Formally we have:
$$
\begin{array}{lll}
&\{\okfbless({\cal A})\} \comp \\[1ex]
&\ \ \ \ [\,A := B_1 \Parallel \ldots \Parallel B_k \ |\ 
	\exists x: \set(x) = \internal({\cal A}) \land (B_1,\ldots,B_k) = \fbless(x,{\cal A})\,] \\[1ex]
	\sqsubseteq \\[1ex]
&	\mathsf{Feedbackless}
\end{array}
$$
To prove this refinement we need to use the assertion $\{\okfbless({\cal A})\}$. Intuitively,
this assertion restricts the refinement only for inputs $\cal A$ satisfying the property
$\okfbless({\cal A})$.

Because of this refinement, the feedbackless algorithm terminates. 

Using this refinement, Lemma~\ref{hoare:refinement} (connecting refinement to Hoare 
correctness triples),
and Theorem~\ref{theorem:fbless}, we obtain that
the output of $\mathsf{Feedbackless}$  satisfies the desired property:
$$\FB(A_{1}\Parallel \ldots \Parallel A_{n}) \ioeq A,$$
when the input satisfies $\okfbless({\cal A})$. Stated as a Hoare correctness triple, this property is:
$$
\big(\okfbless({\cal A}) \land {\cal A} = (A_1,\ldots, A_n)\big)\ \{\!|\ \mathsf{Feedbackless}\ |\!\}\ 
\big(\FB(A_{1}\Parallel \ldots \Parallel A_{n}) \ioeq A\big)
$$
The details are available in the RCRS formalization~\cite{RCRS_Toolset_figshare}.
\qed
\begin{theorem}
For a deterministic and algebraic loop free block diagram,
the feedbackless algorithm and the nondeterministic algorithm
are equivalent.
\end{theorem}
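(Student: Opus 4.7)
The plan is to reduce the claim to a short chain of equivalences combining the two correctness results already proved for Algorithms~\ref{alg:algorithm} and~\ref{alg:feedbackless}, linked through Lemma~\ref{lem:splitting} and the congruence properties of $\ioeq$. Let ${\cal A}=(A_1,\ldots,A_n)$ be the list of io-diagrams representing the original block diagram, and let ${\cal B}=(B_1,\ldots,B_m)$ be the list obtained by splitting each block into its single-output components, as described in Section~\ref{sec:feedbackless}. Since the original diagram is assumed deterministic and algebraic-loop-free, I would first argue that $\okfbless({\cal B})$ holds: the $B_j$ are single-output by construction, their outputs are distinct because each original wire name is preserved, each $B_j$ is deterministic by the closure lemma for serial composition with switches (applied to the projection splitting), and $\loopfree({\cal B})$ is precisely the defining condition for the underlying diagram to be algebraic loop free.

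Next, I would instantiate the two correctness theorems. Applying Theorem~\ref{thm:correct} to ${\cal A}$, the output $A$ of the nondeterministic algorithm satisfies
$$A\ioeq \FB(A_1\Parallel\ldots\Parallel A_n).$$
Applying the correctness theorem for the feedbackless algorithm to ${\cal B}$, its output $B$ satisfies
$$\FB(B_1\Parallel\ldots\Parallel B_m)\ioeq B.$$
By Lemma~\ref{lem:splitting},
$$A_1\Parallel\ldots\Parallel A_n\ \ioeq\ B_1\Parallel\ldots\Parallel B_m,$$
and item 4 of Lemma~\ref{lem:ioeq-cong} (congruence of $\FB$ with respect to $\ioeq$) lifts this to
$$\FB(A_1\Parallel\ldots\Parallel A_n)\ \ioeq\ \FB(B_1\Parallel\ldots\Parallel B_m).$$
Chaining the three equivalences via symmetry and transitivity (items 2 and 3 of Lemma~\ref{lem:ioeq-cong}) yields $A\ioeq B$, which is exactly the claim of equivalence of the two algorithms on the same input diagram.

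The main obstacle I expect is not the chain of equivalences itself, which is mechanical, but the careful bookkeeping needed to discharge $\okfbless({\cal B})$ from the informal assumption that the original block diagram is \emph{deterministic and algebraic loop free}. In particular, the argument depends on the specific block-by-block splitting scheme of Section~\ref{sec:feedbackless} (rather than the generic projection-based splitting), because only the refined scheme records output-input dependencies accurately enough for $\loopfree({\cal B})$ to be equivalent to the informal notion of algebraic-loop-freeness for the original diagram. Once this correspondence is made precise, and the determinism and output-distinctness of ${\cal B}$ are verified, the rest of the proof is the three-step transitivity argument above, relying entirely on results already established in the excerpt.
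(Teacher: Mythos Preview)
Your proposal is correct and follows essentially the same approach as the paper: invoke Theorem~\ref{thm:correct} for the nondeterministic algorithm on ${\cal A}$, the correctness theorem for the feedbackless algorithm on the split list ${\cal B}$, connect the two via Lemma~\ref{lem:splitting}, and conclude $A\ioeq B$ using the congruence properties of Lemma~\ref{lem:ioeq-cong}. The paper's proof is in fact terser than yours---it simply assumes a splitting ${\cal B}$ satisfying $\okfbless({\cal B})$ without discharging that precondition in detail---so your additional discussion of why $\okfbless({\cal B})$ holds is a welcome elaboration rather than a deviation.
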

\proof
Assume ${\cal A} = (A_{1}\ldots,A_{n})$ is the initial set of blocks 
satisfying $\iodist({\cal A})$,
and $A$ is one possible output of the nondeterministic algorithm. We have
$\FB(A_{1}\Parallel \ldots \Parallel A_{n}) \ioeq A.$

Assume that ${\cal B} = (B_{1}\ldots,B_{m})$ is a splitting of $\cal A$
satisfying $\okfbless({\cal B})$ and $B$ is the output of the feedbackless algorithm
for $\cal B$. 
We have
$\FB(B_{1}\Parallel \ldots \Parallel B_{m}) \ioeq B.$

Because $\cal B$ is a splitting of $\cal A$ we also have
$
A_{1}\Parallel \ldots \Parallel A_{n} \ioeq B_{1}\Parallel \ldots \Parallel B_{m}.
$

Finally, using Lemma \ref{lem:ioeq-cong}, we obtain $A \ioeq B$.
\qed

If we apply the feedbackless algorithm to the example from Fig.~\ref{fig:sum-orig} we obtain:

$
\begin{array}{lll}
\\[1ex]
  & \big(\add, \delay_1, \delay_2, \split_1, \split_2\big) \\[1ex]
  \mapsto & \{\mbox{Variable $x$ is internal and $\Out(\add) = x$} \} \\[1ex]
  & \big(\delay_1, ((z,u), s', \dg(\add) \comp \dg(\delay_2)), \split_1, \split_2\big) \\[1ex]
  \mapsto & \{\mbox{Variable $y$ is internal and $\Out(\delay_1) = y$} \} \\[1ex]
  & \big(((z,u), s', \dg(\add) \comp \dg(\delay_2)), \\[1ex]
    & \qquad (s,z, \dg(\delay_1) \comp \dg(\split_1)), 
    (s,v,  \dg(\delay_1) \comp \dg(\split_2))\big) \\[1ex]
  \mapsto & \{\mbox{Variable $z$ is internal} \} \\[1ex]

  & \big(((s,u), s', ((\dg(\delay_1) \comp \dg(\split_1)) \ \| \ \id) \comp\dg(\add) \comp \dg(\delay_2)), \\[1ex] 
    & \qquad (s,v,  \dg(\delay_1) \comp \dg(\split_2))\big) \\[1ex]

  \mapsto & \{\mbox{There are no internal variables anymore} \} \\[1ex]
  
  & ((s,u), (s',v), [s,u \leadsto s,u,s] \comp  
      (((\dg(\delay_1) \comp \dg(\split_1)) \ \| \ \id) \comp\dg(\add) \comp \dg(\delay_2)) \\[1ex]
      & \qquad \ \| \ (\dg(\delay_1) \comp \dg(\split_2))
      
      )
\\[1ex]
\end{array}
$

$
\begin{array}{lll}
= & \{\mbox{Simplifications} \} \\[1ex]

  & ((s,u), (s',v), [s,u \leadsto s,u,s] \comp  
      (((\id \comp \id) \ \| \ \id) \comp\dg(\add) \comp \id) \ \| \ (\id \comp \id)
      )
\\[1ex]
= & \{\mbox{Simplifications} \} \\[1ex]

  & ((s,u), (s',v), [s,u \leadsto s,u,s] \comp  
      (\dg(\add) \ \| \ \id)
      )
      
\end{array}
$

\subsection{On the Nondeterminism of the Feedbackless Translation}\label{subsec:efficiency}
We have seen already that different choices in the nondeterministic abstract algorithm result in different algebraic expressions, e.g., with different numbers of composition operators. We show in this section that the same is true for the feedbackless translation algorithm.
In particular, consider a framework like the Refinement Calculus of Reactive Systems~\cite{DBLP:conf/spin/DragomirPT16}, where the intermediate results of the
algorithm are symbolically simplified at every translation step. Different choices
of the order of internal variables could result in different complexities of the simplification
work. We illustrate this with the example from Figure~\ref{fig:fbless:optimal}.
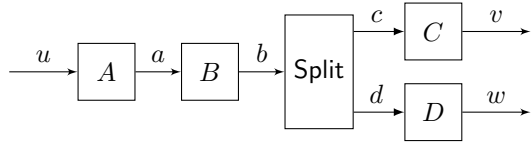
\begin{figure}
\begin{centering}
\begin{tikzpicture}
\node[draw, minimum height = 5ex, minimum width = 5ex](C){$C$};
\coordinate[below = 1ex of C](X);
\node[draw, minimum height = 5ex, minimum width = 5ex, below = 1ex of X](D){$D$};
\node[draw, minimum height = 10ex, minimum width = 5ex, left = 7ex of X](split){$\split$};
\node[draw, minimum height = 5ex, minimum width = 5ex, left = 4ex of split](B){$B$};
\node[draw, minimum height = 5ex, minimum width = 5ex, left = 4ex of B](A){$A$};
\draw[-latex'](A.east) -- ++(2ex,0)node[above]{$a$} --(B.west);
\draw[-latex'](B.east) -- ++(2ex,0)node[above]{$b$} --(split.west);
\draw[-latex](C.west -| split.east) -- ++(2ex,0)node[above]{$c$} -- (C.west);
\draw[-latex](D.west -| split.east) -- ++(2ex,0)node[above]{$d$} -- (D.west);
\draw[latex'-](A.west) -- ++(-3ex,0)node[above]{$u$} -- ++(-3ex,0);
\draw[-latex'](C.east) -- ++(3ex,0)node[above]{$v$} -- ++(3ex,0);
\draw[-latex'](D.east) -- ++(3ex,0)node[above]{$w$} -- ++(3ex,0);
\end{tikzpicture}
\par\end{centering}
\caption{Example for efficient implementation of feedbackless. \label{fig:fbless:optimal}}
\end{figure}

After splitting the list of blocks for this example is 
$$
{\cal A} = \big((u,a,A),\ (a,b, B),\ (b,c,\id), \ (b,d,\id), \ (c,v,C), \ (d,w, D)\big)
$$
and the set of internal variables is 
$$
\internal({\cal A}) = \{a,b,c,d\}.
$$

If we choose the order $(c,d,b,a)$, then after first two steps (including intermediate
simplifications) we obtain the list:
$$
\big((u,a,A),\ (a,b, B), \ (b,v,C), \ (b,w, D)\big)
$$
\newcommand{\simplify}{\mathsf{simplify}}
After another step for internal variable $b$ we obtain:
$$
\big((u,a,A),\ (a,v,\simplify(B \comp C)), \ (a,w, \simplify(B \comp D))\big)
$$
where the function $\simplify$ models the symbolic simplification. Finally, after applying the
step for the internal variable $a$ we obtain:
\begin{equation}\label{eq:first:order}
\big((u,v,\simplify(A\comp \simplify(B \comp C))), \ (u,w, \simplify(A\comp \simplify(B \comp D)))\big)
\end{equation}
In this order, we end up simplifying $A$ serially composed with $B$
twice. This is especially inefficient if $A$ and $B$ are complex. If we choose the
order $(c,d,a,b)$, then in the first three steps we obtain:
$$
\big((u,b,\simplify(A\comp B)), \ (b,v,C), \ (b,w, D)\big)
$$
At this point the term $A\comp B$ is simplified, and the simplified version is composed
with $C$ and $D$ to obtain:
\begin{equation}\label{eq:second:order}
\big((u,v,\simplify(\simplify(A\comp B)\comp C)), \ (u,w,\simplify(\simplify(A\comp B)\comp D))\big)
\end{equation}
If we compare relations (\ref{eq:first:order}) and (\ref{eq:second:order}) we see the same number of occurrences of $\simplify$, but in relation (\ref{eq:second:order}) there are
two occurrences of the common subterm $\simplify(A\comp B)$, and this is simplified only once.

As this example shows, different choices of the nondeterministic feedbackless translation strategy result in expressions of different quality, in particular
with respect to simplification.
It is beyond the scope of this paper to examine efficient deterministic implementations of the feedbackless translation.
Our goal here is to prove the correctness of this translation, by proving its
equivalence to the abstract algorithm.
It follows that every refinement/determinization of the feedbackless strategy will also be equivalent to the abstract algorithm, and therefore a correct implementation of the semantics.
Once we know that all possible refinements give equivalent results, we can concentrate in finding the most efficient strategy.
In general, we remark that this way of using the mechanisms of nondeterminism and refinement are standard in the area of correct by construction program development, and are often combined to separate the concerns of correctness and efficiency, as is done here.

\section{Implementation in Isabelle}

Our implementation in Isabelle uses locales \cite{Clemens:2014} for 
the axioms of the algebra. We use locale interpretations to show that these
axioms are consistent. In Isabelle locales are a powerful mechanism for developing consistent abstract theories (based on axioms). 
To represent the algorithm we use monotonic predicate transformers and we use Hoare total 
correctness rules to prove its correctness.

The formalization contains the locale for the axioms, a theory for constructive
functions, and one for proving that such functions are a model for the axioms. An 
important part of the formalization is the theory introducing the diagrams
with named inputs and outputs, and their operations and properties. 
The formalization also includes a theory for monotonic predicate transformers, refinement calculus, Hoare total correctness rules for programs, and a theory for the nondeterministic algorithm and its correctness.

In total the formalization contains 14797 lines of Isabelle code of which 
13587 lines of code for the actual problem, i.e., excluding the code for 
monotonic predicate transformers, refinement calculus, and Hoare rules.

\section{Conclusions and Future Work}
\label{sec:concl}

We introduced an abstract algebra for hierarchical block diagrams,
and an abstract algorithm for translating HBDs to terms of this algebra.
We proved that this algorithm is correct in the sense that no matter how its
nondeterministic choices are resolved, the results are semantically equivalent.
As an application, we closed a question left open 
in~\cite{DBLP:conf/spin/DragomirPT16} by proving that the Simulink
translation strategies presented there yield equivalent results.
Our HBD algebra is reminiscent of the algebra of flownomials~\cite{Stefanescu:2000:NA:518304} but our axiomatization is more general, in the sense that
our axioms are weaker.
This implies that all models of flownomials are also models of our algebra.
Here, we presented constructive functions as one possible model of our algebra.
Our work applies to hierarchical block diagrams in general, and the de facto
predominant tool for embedded system design, Simulink. 
Proving the HBD translator correct is a challenging problem, and as far
as we know our work is the only one to have achieved such a result.

We believe that our results are reusable in other contexts as well, in at least two ways. 
First, every other translation that can be shown to be a refinement/special case of our 
abstract translation algorithm, is automatically correct. For example, 
\cite{ReicherdtG2014,ZouZWFQ13} impose an order on blocks such that they use mostly 
serial composition and could be considered an instance of our abstract algorithm. 
Second, our algorithms translate diagrams into an abstract algebra. By choosing different 
models of this algebra we obtain translations into these alternative models.

As mentioned earlier, RCRS has been formalized in Isabelle~\cite{nipkow-paulson-wenzel-02}. The formalization is part of the RCRS toolset which is publicly available in a figshare repository~\cite{RCRS_Toolset_figshare}.
The theories relevant to this paper are under RCRS/Isabelle/TranslateHBD.
The RCRS toolset can be downloaded also from the RCRS web page: \url{http://rcrs.cs.aalto.fi/}.
The RCRS formalization represents a significant amount of work. The entire
formalization is close to 30000 lines of Isabelle code. The material for
this paper consists of 14797 lines of Isabelle code, 864 lemmas and 25 theorems,
and required an effort of 8 person-months excluding paper writing.

As future work we plan to investigate further HBD translation strategies, in addition to those studied above.
As mentioned earlier, this work is part of the broader RCRS project,
which includes a Translator of Simulink diagrams to RCRS theories
implemented on top of Isabelle~\cite{DBLP:conf/spin/DragomirPT16,DragomirPreoteasaTripakisFORTE2017,RCRSToolset_arxiv2017}.
Currently the Translator can only handle diagrams without algebraic loops,
i.e., without instantaneous circular dependencies. 
Extending the Translator and the corresponding determinacy proofs to
diagrams with algebraic loops is left for future work. This
is a non-trivial problem, because of subtleties in the definition of 
instantaneous feedback semantics, especially in the presence of
non-deterministic and non-input-receptive systems~\cite{preoteasa:tripakis:2016}.
For deterministic and input-receptive systems, however, the model of
constructive functions that we use in this paper should be sufficient.
Another future research goal is to unify
the proof of the third translation strategy with that of the
other two which are currently modeled as refinements of the abstract
translation algorithm.

This work covers hierarchical block diagrams in general and Simulink in
particular. Any type of diagram can be handled, however, we do assume
a {\em single-rate} (i.e., synchronous) semantics. Handling multi-rate
or event-triggered diagrams is left for future work. Handling
hierarchical state machine models such as Stateflow is also left for
future work.

Our work in this paper and in the RCRS project in general implicitly
provides, via the translation, a formal semantics for the subset of
Simulink described above. As already mentioned in~\S\ref{sec:rwork},
ultimately the semantics of Simulink is ``what the simulator does''.
Since the code of the simulator is proprietary, 
the only way to validate a formal semantics such as ours is by simulation.
Some preliminary work towards this goal is reported in~\cite{DBLP:conf/spin/DragomirPT16}, which also presents preliminary case studies, including
a real-world automotive control benchmark provided by Toyota~\cite{jin2014ARCH}.
A more thorough validation of the semantics and experimentation with
further case studies are future research topics.

As mentioned in~\S\ref{sec:rwork}, there are many existing translations
from Simulink to other formalisms. It is beyond the scope of this paper
to define and prove correctness of those translations, but this
could be another future work direction. In order to do
this, one would first need to formalize those translations. This in turn
requires detailed knowledge of the algorithms or even access to their
implementation, which is not always available.
Our work and source code are publicly available and we hope can serve
as a good starting point for others who may wish to provide formal
correctness proofs of diagram translations.

\paragraph{Acknowledgments.} We would like to thank Gheorghe \c Stef\u anescu 
for his help with the algebra of flownomials.
This work has been partially supported by the Academy of Finland and the U.S. National Science Foundation (awards \#1329759 and \#1139138).

\bibliographystyle{abbrv}
\bibliography{bibl}

\end{document}